\newif\ifanonymous
\newif\ifdraft
\xpatchcmd{\@todo}{\setkeys{todonotes}{#1}}{\setkeys{todonotes}{#1}}{}{}
\newcommand{\OK}{\ding{51}}
\newcommand{\NO}{\ding{55}}
\newcommand{\newtext}[1]{#1}
\newcommand{\mathcmd}[1]{{\normalfont\ensuremath{#1}}\xspace}
\newcommand{\mathlabel}[1]{\mathcmd{\textsf{#1}}}
\newcommand{\textop}[1]{\relax\ifmmode\mathop{\text{#1}}\else\text{#1}\fi}
\newcommand{\set}[1]{\{#1\}}
\newcommand{\ledot}{\mathrel{\ooalign{\hss\raise.200ex\hbox{$\cdot$}\hss\cr$\le$}}}
\newcommand{\gedot}{\mathrel{\ooalign{\hss\raise.200ex\hbox{$\cdot$}\hss\cr$\ge$}}}
\newcommand{\tuple}[1]{( #1 )}
\renewcommand{\iff}{\Leftrightarrow}
\newcommand{\calC}{\ensuremath{\mathcal{C}}\xspace}
\newcommand{\calF}{\ensuremath{\mathcal{F}}\xspace}
\newcommand\calT{\ensuremath{\mathcal{T}}\xspace}
\let\orgautoref\autoref
\renewcommand{\autoref}
{\def\equationautorefname{Eq.\!}%
	\def\figureautorefname{Fig.\!}%
	\def\subfigureautorefname{Fig.\!}%
	\def\Itemautorefname{Item}%
	\def\tableautorefname{Table}%
	\def\algorithmautorefname{Algorithm}%
	\def\paragraphautorefname{Para.\!}%
	\def\sectionautorefname{Sec.\!}%
	\def\subsectionautorefname{Sec.\!}%
	\def\subsubsectionautorefname{Sec.\!}%
	\def\chapterautorefname{Chapter}%
	\def\partautorefname{Part}%
	\def\goalautorefname{Goal}%
	\def\reqautorefname{Req.\!}%
	\def\adviceautorefname{Rule}%
	\def\parameterautorefname{Param.\!}%
	\def\definitionautorefname{Def.\!}%
	\def\definitionsautorefname{Def.\!}%
	\def\propertyautorefname{Property}%
	\def\lemmaautorefname{Lemma}%
	\def\theoremautorefname{Thm.\!}%
	\orgautoref}
\newcommand{\infruleref}[2]{\hyperref[#1]{#2}}
\newcommand{\fullversionref}{cryptobap-parallel-full}
\newcommand{\appendixorextended}[1]{%
\processifversion{conf}{extended
version~\cite[Appendix~\ref{F-#1}]{\fullversionref}}%
\processifversion{revision}{extended
	version~\cite[Appendix~\ref{F-#1}]{\fullversionref}}%
\processifversion{full}{Appendix~\ref{#1}}%
}
\newcommand{\theappendixorextended}[1]{%
\processifversion{conf}{the extended version~\cite[Appendix~\ref{F-#1}]{\fullversionref}}%
\processifversion{revision}{the extended version~\cite[Appendix~\ref{F-#1}]{\fullversionref}}%
\processifversion{full}{Appendix~\ref{#1}}%
        }
\newcommand{\appendixorfull}[1]{%
	\processifversion{conf}{full
		version~\cite[Appendix~\ref{F-#1}]{\fullversionref}}%
	\processifversion{revision}{full
		version~\cite[Appendix~\ref{F-#1}]{\fullversionref}}%
	\processifversion{full}{Appendix~\ref{#1}}%
}
\newcommand{\theappendixorfull}[1]{%
	\processifversion{conf}{the full version~\cite[Appendix~\ref{F-#1}]{\fullversionref}}%
	\processifversion{revision}{the full version~\cite[Appendix~\ref{F-#1}]{\fullversionref}}%
	\processifversion{full}{Appendix~\ref{#1}}%
}
\newcommand{\refappendixorfull}[1]{%
	\processifversion{conf}{\cite[\autoref{F-#1}]{\fullversionref}}%
	\processifversion{revision}{\cite[\autoref{F-#1}]{\fullversionref}}%
	\processifversion{full}{\autoref{#1}}%
}
\newcommand{\revised}[2]{%
	\processifversion{conf}{#2}%
	\processifversion{full}{#2}%
	\processifversion{revision}{\marginnote{\hspace{-0.5em}\textcolor{blue}{#1}}
		{\textcolor{red}{#2}}}%
}
\newcommand{\revisedtxt}[1]{%
	\processifversion{conf}{#1}%
	\processifversion{full}{#1}%
	\processifversion{revision}{\textcolor{red}{#1}}%
}
\newcommand{\newrevised}[1]{%
	\processifversion{conf}{#1}%
	\processifversion{full}{#1}%
	\processifversion{revision}{#1}%
}
\newtheorem{definition}{Definition}
\newcommand\definitionautorefname{Definition}
\newtheorem{lemma}{Lemma}
\newcommand\lemmaautorefname{Lemma}
 \newtheorem{theorem}{Theorem}
\newtheorem{example}{Example}
\newtheorem{corollary}{Corollary}
\numberwithin{subcase}{mycase}
\newcommand{\mi}[1]{\ensuremath{\mathit{#1}}}
\newcommand{\mtt}[1]{\ensuremath{\mathtt{#1}}}
\newcommand{\mf}[1]{\ensuremath{\mathbf{#1}}}
\newcommand{\ms}[1]{\ensuremath{\mathsf{#1}}}
\newcommand{\neutcol}[0]{black}
\newcommand{\rblu}[0]{RoyalBlue}
\newcommand{\redo}[0]{RedOrange}
\newcommand{\plm}[0]{Plum}
\newcommand{\commoncol}[0]{black}    
\newcommand{\col}[2]{\ensuremath{{\color{#1}{#2}}}}
\newcommand{\fstlangcol}[1]{\mf{\col{\rblu}{#1}}}
\newcommand{\sndlangcol}[1]{\ms{\col{\redo}{#1}}}
\newcommand{\trdlangcol}[1]{\mtt{\col{\plm}{#1}}}
\newcommand{\bl}[1]{\col{\neutcol }{#1}}
\newcommand{\com}[1]{\mi{\col{\commoncol }{#1}}}
\newcommand{\colone}[0]{\fstlangcol{1}}
\newcommand{\coltwo}[0]{\sndlangcol{2}}
\newcommand{\Sapic}{\sndlangcol{\textsc{Sapic}^+}\xspace}
\newcommand{\SapicOriginal}{\sndlangcol{\textsc{Sapic}}\xspace}
\newcommand{\SapicOld}{\sndlangcol{\textsc{Sapic}^-}\xspace}
\newcommand{\ProVerif}{\textsc{ProVerif}\xspace}
\newcommand{\Tamarin}{\textsc{Tamarin}\xspace}
\newcommand{\DeepSec}{\textsc{DeepSec}\xspace}
\newcommand{\Squirrel}{\textsc{Squirrel}\xspace}
\newcommand{\configSpace}{\calC}
\newcommand{\config}{c}
\newcommand{\stateSpace}{\tilde \calC}
\newcommand{\symbset}{\Sigma}
\newcommand{\symbSpace}{\mathcal{E}}
\newcommand{\predset}{\text{\ensuremath{\Pi}}}
\newcommand{\pred}{\varphi}
\newcommand{\predcomb}{\pred_{\colonetwo}}
\newcommand{\predSpace}{\mathcal{P}}
\newcommand{\dedrel}{\vdash}
\newcommand{\xxrightarrowsgl}[2][]{{%
		\xrightarrow[]{\bl{#2}}\mathrel{}_\bl{#1}
}}
\newcommand{\nrightarrowsgl}[2][]{{%
	\centernot{\xrightarrow[]{\bl{#2}}\mathrel{}_\bl{#1}}
}}
\newcommand{\ntransrel}[4]{ #3 \nrightarrowsgl{#1}_{#2} #4}
\newcommand{\transrel}[4]{#3 \xxrightarrowsgl{#1}_{#2} #4}
\newcommand{\PtoA}[1]{\com{P2A}(#1)}
\newcommand{\AtoP}[1]{\com{A2P}(#1)}
\newcommand{\PtoAsym}{\com{P2A}}
\newcommand{\AtoPsym}{\com{A2P}}
\newcommand{\Syncfresh}[1]{\com{SFr}(#1)}
\newcommand{\Syncfreshsym}{\com{SFr}}
\newcommand{\Silent}[1]{\com{Silent}(#1)}
\newcommand{\Silentsym}{\com{Silent}}
\newcommand{\Alias}[2]{\sndlangcol{Alias}(#1,#2)}
\newcommand{\System}{S}
\newcommand{\SysFst}[1][]{\fstlangcol{\System{#1}}}
\newcommand{\SysSnd}[1][]{\sndlangcol{\System{#1}}}
\newcommand{\colonetwo}{\colone \coltwo}
\newcommand{\dedrelcomb}{\dedrel_{\colonetwo}}
\newcommand{\subone}[1]{\fstlangcol{#1_1}}
\newcommand{\subtwo}[1]{\sndlangcol{#1_2}}
\newcommand{\eqone}{\fstlangcol{\doteq}}
\newcommand{\eqtwo}{\sndlangcol{\doteq}}
\newcommand{\dycol}[1]{\sndlangcol{#1}}
\newcommand{\libcol}[1]{\trdlangcol{#1}}
\newcommand{\fstlangpred}[1]{\fstlangcol{\mathlabel{#1}}}
\newcommand{\pubnames}{\mathcal{N}_\texttt{pub}}
\newcommand{\privnames}{\mathcal{N}_\texttt{priv}}
\newcommand{\allnames}{\mathcal{N}}
\newcommand{\Vars}{\mathcal{V}}
\newcommand{\knowledgeset}{\dycol{\predset_{ _A}}}
\newcommand{\knowledge}[1]{\dycol{\mathcal{K}}(#1)}
\newcommand{\freshness}[1]{\dycol{\mathlabel{Fr}}(#1)}
\newcommand{\sbirpred}{\fstlangcol{\predset_{\sbiridx}}}
\newcommand{\libknowledgeset}{\libcol{\predset_{ _L}}}
\newcommand{\libfreshness}[1]{\libcol{\mathlabel{Fr}}(#1)}
\newcommand{\libstate}{\libcol{\epsilon}}
\newcommand{\fcall}[1]{\com{{FCall}}(#1)}
\newcommand{\fcallsym}{\com{{FCall}}}
\newcommand{\progEqual}[2]{#1 \ \fstlangcol{\doteq} \ #2}
\newcommand{\dyEqual}[2]{#1 \ \dycol{\doteq} \ #2}
\newcommand{\dyAliEq}[2]{#1 \ \dycol{\stackrel{\cdot}{\mapsto}} \ #2}
\newcommand{\callEq}[2]{#1 \ \trdlangcol{\stackrel{\cdot}{\mapsto}} \ #2}
\newcommand{\Notequal}[2]{#1 \not= #2}
\newcommand{\dydedrel}{\dycol{\vdash_{ _A}}}
\newcommand{\dydedrelfun}[2]{#1 \ \dydedrel #2}
\newcommand{\dystate}{\dycol{\epsilon}}
\newcommand{\comppredset}{\Pi_{\colone \coltwo}}
\newcommand{\compdedrel}{\vdash_{\colone \coltwo}}
\newcommand{\compdedrelfun}[2]{#1 \compdedrel #2}
\newcommand{\proj}[2]{ #2\!\downharpoonright_{_{#1\!}}}
\newcommand{\Trace}{\mathfrak{t}}
\newcommand{\Traces}{\mathfrak{T}}
\newcommand{\Tracesfun}[1]{\Traces(#1)}
\newcommand{\mixtrace}{\Trace_{\colone \coltwo}}
\newcommand{\mixtraces}{\Traces_{\colone \coltwo}}
\newcommand{\sbirmixtrace}{\Trace_{\colone \coltwo}^{\sbiridx}}
\newcommand{\conmixtrace}{\Trace_{\colone \coltwo}^{\conidx}}
\newcommand{\sbirmixtraces}{\Traces_{\colone \coltwo}^{\sbiridx}}
\newcommand{\conmixtraces}{\Traces_{\colone \coltwo}^{\conidx}}
\newcommand{\sbirTrace}{\fstlangcol{\Trace^{\sbiridx}}}
\newcommand{\sbirtraces}{\fstlangcol{\Traces^{\sbiridx}}}
\newcommand{\InterTraces}{\mathscr{T}}
\newcommand{\InterTrace}{\mathtt{t}}
\newcommand{\Parallel}[3][~]{#2 \parallel^{#1} #3}
\newcommand{\CParallel}[2]{#1 \parallel_{c} #2}
\newcommand{\SParallel}[3][~]{#2 \parallel^{#1}_{s} #3}
\newcommand{\interleaving}[2]{#1\interleave #2}
\newcommand{\interpret}{\iota}
\newcommand{\sapicidx}{\sndlangcol{\scaleto{sp}{3.5pt}}}
\newcommand{\sapicTrace}{\sndlangcol{\Trace^{\sapicidx}}}
\newcommand{\sapictraces}{\sndlangcol{\Traces^{\sapicidx}}}
\newcommand{\sapicevent}{\sndlangcol{\eventsym^\sapicidx}}
\newcommand{\attidx}{\sndlangcol{\scaleto{A}{3pt}}}
\newcommand{\libidx}{\trdlangcol{\scaleto{L}{3pt}}}
\newcommand{\dedrelcombbit}{\dedrelcomb^\mathsf{bit}} 
\newcommand{\dedrelcombSbirDY}{\dedrel_{\fstlangcol{\sbiridx} \sndlangcol{\attidx}}^\mathsf{bit'}}
\newcommand{\dedrelcombSapicDY}{\dedrel_{\sndlangcol{\sapicidx} \sndlangcol{\attidx}}^{\sbirtoiml{\mathsf{bit'}}}}
\newcommand{\DYLib}[1]{\textsc{DYlib}_{\bl{#1}}}
\newcommand{\dedrelAvsL}{\dedrel_{\trdlangcol{\libidx} \sndlangcol{\attidx}}^{\stackrel{\cdot}{\mapsto}}}
\newcommand{\silentevent}{\tau}
\newcommand{\interpreterapply}{\mathit{apply}}
\newcommand{\constrans}[1]{\text{\textquotesingle{}}#1\text{\textquotesingle{}}}
\newcommand{\traceproj}[2]{ #2\!\downarrow_{_{#1\!}}}
\newcommand{\Funcs}{\calF}
\newcommand{\Term}{\calT}
\newcommand{\Terms}{\ensuremath{\mathcal{T}}\xspace}
\newcommand{\pout}{\sndlangcol{\mathsf{out}}\xspace}
\newcommand{\pin}{\sndlangcol{\mathsf{in}}\xspace}
\newcommand{\pelse}{\sndlangcol{\mathsf{else}}\xspace}
\newcommand{\plet}{\sndlangcol{\mathsf{let}}\xspace}
\newcommand{\pnew}{\sndlangcol{\mathsf{new}}\xspace}
\newcommand{\pevent}{\sndlangcol{\mathsf{event}}\xspace}
\newcommand{\pndc}[2]{#1 \ \sndlangcol{+} \ #2}
\newcommand{\theactualrule}[1]{\text{Please redefine the command
theactualrule.}}
\newcommand{\underscorethingy}[1]{\text{Please redefine the command
underscorethingy.}}
\newcommand{\Kevent}{\ensuremath{\sndlangcol{\mathsf{K}}}\xspace}
\newcommand{\functionsymbol}[1]{\ensuremath{\mathlabel{#1}}}
\newcommand{\senc}{\functionsymbol{senc}}
\newcommand{\sdec}{\functionsymbol{sdec}}
\newcommand{\birsymb}{\fstlangcol{BIR}}
\newcommand{\sbirsymb}{\fstlangcol{SBIR}}
\newcommand{\statesymb}{\mathit{s}}
\newcommand{\bnfconcat}{\!::\!}
\newcommand{\bnfsep}{\;|\;}
\newcommand{\bnfdef}{\;\;:=\;\;}
\newcommand{\var}[1]{\fstlangcol{#1}}
\newcommand{\event}[1]{\fstlangcol{Ev}(#1)}
\newcommand{\assign}[2]{\fstlangcol{Asn}(#1,#2)}
\newcommand{\Loopsym}{\fstlangcol{{Loop}}}
\newcommand{\birpc}{\fstlangcol{pc}}
\newcommand{\birval}{\fstlangcol{Bval}}
\newcommand{\birexp}{\fstlangcol{Bexp}}
\newcommand{\birprog}{\fstlangcol{P}}
\newcommand{\eventsym}{\alpha}
\newcommand{\birprobinitstates}[1]{\fstlangcol{S_{\mathit{init}}^{b,n}}}
\newcommand{\birprobinitstate}[1]{\fstlangcol{\statesymb_{\mathit{init}}^{b,n}}}
\newcommand{\conidx}{{\scaleto{c}{3.5pt}}}
\newcommand{\contraces}{\Traces^{\conidx}}
\newcommand{\contrace}{\Trace^{\conidx}}
\newcommand{\sbiridx}{{s}}
\newcommand{\sbirevent}{\fstlangcol{\eventsym^\sbiridx}}
\newcommand{\symtraces}{\Traces^{\sbiridx}}
\newcommand{\symtrace}{\Trace^{\sbiridx}}
\newcommand{\defeq}{\mathrel{\stackrel{\makebox[0pt]{\mbox{\normalfont\tiny def}}}{=}}}
\newcommand{\sbirprobinitstates}[1]{\fstlangcol{S_{\mathit{init}}^{s,n}}}
\newcommand{\sbirprobinitstate}[1]{\fstlangcol{\statesymb_{\mathit{init}}^{s,n}}}
\newcommand{\sbirtoiml}[1]{\llbracket #1 \rrbracket}
\newcommand{\tracetrans}[1]{\llparenthesis #1 \rrparenthesis}
\newcommand{\imlvals}{\mathit{BS}}
\newcommand{\eventSpace}{\mathbb{E}}
\newcommand*{\suchthat}{\;%
	\ifnum\currentgrouptype=16\middle|\else%
	\iftoggle{WithinBracMacro}{\middle|}{|}%
	\fi%
	\;}%
\newcommand{\CryptoBap}{{\sc CryptoBap}}
\newcommand{\tree}{\fstlangcol{T}}
\newcommand{\node}{\fstlangcol{node}}
\newcommand{\branchingnode}{\fstlangcol{Branch}}
\newcommand{\leafnode}{\fstlangcol{Leaf}}
\newcommand{\nodecond}{\var{e}}
\newcommand{\nodeevent}{\fstlangcol{ev}}
\newlength\shlength
\newcommand\xshlongvec[2][0]{\setlength\shlength{#1pt}%
	\stackengine{-5.6pt}{$#2$}{\smash{$\kern\shlength%
			\stackengine{7.55pt}{$\mathchar"017E$}%
			{\rule{\widthof{$#2$}}{.57pt}\kern.4pt}{O}{r}{F}{F}{L}\kern-\shlength$}}%
	{O}{c}{F}{T}{S}}
\newcommand{\rPubName}{\textsc{Pub}}
\newcommand{\rKZero}{\ensuremath{\textsc{K}_0}}
\newcommand{\rAppl}{\ensuremath{\textsc{App}}}
\newcommand{\rEq}{\ensuremath{\textsc{Eq}}}
\newcommand{\rSubst}{\ensuremath{\textsc{Subst}}}
\newcommand{\rAlSubst}{\ensuremath{\textsc{Al-Subst}}}
\newcommand{\rPA}{\ensuremath{\textsc{P2A}}}
\newcommand{\rAP}{\ensuremath{\textsc{A2P}}}
\newcommand{\rAlias}{\ensuremath{\textsc{Alias}}}
\newcommand{\rFresh}{\ensuremath{\textsc{Fr-A2L}}}
\newcommand{\rFCall}{\ensuremath{\textsc{FCall}}}
\newcommand{\rFreshSync}{\ensuremath{\textsc{Fr-L2A}}}
\newcommand{\rDed}{\ensuremath{\textsc{Ded}}}
\definecolor{mGreen}{rgb}{0,0.6,0}
\definecolor{mGray}{rgb}{0.627, 0.627, 0.627}
\definecolor{mPurple}{rgb}{0.58,0,0.82}
\definecolor{backgroundColour}{rgb}{0.95,0.95,0.92}
\lstdefinelanguage
[x64]{Assembler}     
{morekeywords={cbnz,ldr, cmp, ldp, eq, ge, mul, nop, add, lsr, eor, and, M, Z, N, V, ldtm, mov, ret,b, bl, ble, bne,%
		udiv, bics, ldrb, ror, sxtw, for, do, end,  procedure, or, adrp,  %
		xzr, wzr,x0,X1,x1,X2,x2,X3,x3,w4,x4,x5,x6,x7,x8,x9, w0, w1, w2, w3,w4,w5,w6,w7,w8,w9,%
		x10,x11,x12,x13,x14,x15,x16,x17,x18,x19,%
		x30,x31,sp},
	alsoletter={x\#}}[strings,comments,keywords] %
\lstdefinelanguage{mlang} {morekeywords={system,type,var,process,do,observable,configuration,if,then,else,array,init,while,function,repeat,fun,for,return,ret}, 
	sensitive=false,
	morestring=[b]", }
\lstdefinestyle{asmstyle}{
	language=[x64]Assembler,
	basicstyle=\fontsize{7.5}{9}\ttfamily,
	keywordstyle=\bfseries\color{darkgray},
	breaklines=true,
	mathescape=true,
	keepspaces=true,
	showspaces=false,
	showstringspaces=false,    
	xleftmargin={0.55cm},
	firstnumber = 0,
	numbers=left,%
	numberstyle={\footnotesize \color{mGreen}},
	numbersep=5pt, 
}
\lstdefinestyle{birstyle}{
	language=[x64]Assembler,
	basicstyle=\fontsize{7.5}{9}\ttfamily,
	breaklines=true,
	mathescape=true,
	keepspaces=true,
	showspaces=false,
	showstringspaces=false,    
	xleftmargin={0.15cm},
	numbersep=2pt, 
	morekeywords={Address, Imm64, jmp, var, Const, Type_Imm, BLE_Exp, Exp_Den, assign, halt}, 
}
\lstdefinestyle{cstyle}{
	commentstyle=\color{mGreen},
	keywordstyle=\bfseries\color{mGreen},
	numberstyle=\tiny\color{mGray},
	stringstyle=\color{mPurple},
	breakatwhitespace=false,         
	breaklines=true,                 
	captionpos=b,                    
	keepspaces=true,                 
	showspaces=false,    
	numbers=none,          
	showstringspaces=false,
	showtabs=false,                  
	tabsize=2,
	xleftmargin={0.2cm},
	emph={int,char,double,float,unsigned, const, long},
	emphstyle={\bfseries\color{purple}},
	language=C
}
\lstdefinestyle{mlstyle}{
	language=caml,
	columns=[c]fixed,
	basicstyle=\small\ttfamily,
	keywordstyle=\bfseries,
	upquote=true,
	breaklines=true,
	showstringspaces=false,
	stringstyle=\color{blue},
	xleftmargin={0.1cm},
	aboveskip=1ex,
	literate={'"'}{\textquotesingle "\textquotesingle}3,
	morekeywords={new, out, event, assume}
}
\begin{document}

\begin{revision}
    
    \onecolumn
	\input{revision.tex}
    \twocolumn
\end{revision}

\title{Symbolic Parallel Composition for Multi-language Protocol Verification\\ {\normalsize Authors’ version; to appear in the 38th IEEE Computer Security Foundations Symposium}}

\author{
	\begin{tabular}[t]{c}
		\IEEEauthorblockN{Faezeh Nasrabadi} \\
		\IEEEauthorblockA{\textit{CISPA Helmholtz Center for}\\ 
			\textit{Information Security} \& \\ 
			\textit{Saarland University}\\
			faezeh.nasrabadi@cispa.de}
	\end{tabular}
	\hspace{0.5cm}
	\begin{tabular}[t]{c}
		\IEEEauthorblockN{Robert Künnemann}\\
		\IEEEauthorblockA{\textit{CISPA Helmholtz Center for}\\
			\textit{Information Security}\\
			robert.kuennemann@cispa.de}
	\end{tabular}
	\hspace{0.5cm}
	\begin{tabular}[t]{c}
		\IEEEauthorblockN{Hamed Nemati} \\
		\IEEEauthorblockA{\textit{Department of Computer Science}\\
			\textit{KTH Royal Institute of Technology}\\
			hnnemati@kth.se}
	\end{tabular}
}

\maketitle

\begin{abstract}
	The implementation of security protocols often combines different languages. This practice, however, poses a challenge to traditional verification techniques, which typically assume a single-language environment and, therefore, are insufficient to handle challenges presented by the interplay of different languages. 
	To address this issue, we establish principles for combining multiple programming languages operating on different atomic types using a symbolic execution semantics.
    This facilitates the (parallel) composition of labeled transition systems, improving the analysis of complex systems by streamlining communication between diverse programming languages.
    By treating the Dolev-Yao (DY) model as a symbolic abstraction, our approach eliminates the need for translation between different base types, such as bitstrings and DY terms.
	Our technique provides a foundation for securing interactions in multi-language environments, enhancing program verification and system analysis in complex, interconnected systems.
\end{abstract}

\begin{IEEEkeywords}
	distributed systems security, formal methods and verification, security protocols
\end{IEEEkeywords}

\section{Introduction}\label{sec:introduction}

In the rapidly evolving landscape of computer programming, it has become a norm that different programming languages coexist and interact within the same application. This is especially pronounced in complex systems like network protocols and operating systems, where components written in different languages must communicate seamlessly. Traditional approaches in program verification and system analysis often fall short in these multi-language environments, as they typically assume a homogeneous language framework. This assumption overlooks the challenges presented by the interplay of different programming languages, each with its unique syntax, semantics, and operational paradigms.

To overcome this limitation, we establish principles upon which two languages that operate on different atomic types can be combined. A typical use case is the analysis of network protocol
implementations. 
At a minimum, they combine a party written
in a real programming language, their communication partner(s)
operating by specification and modeled abstractly, e.g., in the applied pi calculus, and an attacker, which is underspecified but usually
limited by some threat model, e.g., the Dolev-Yao (DY) model or the
cryptographic model of a time-bounded probabilistic Turing machine.
As protocol properties extend over multiple parties in the presence of an
attacker, an implementation-level
analysis needs to reason about these types of components and their
interactions.

To this end, we extend \emph{parallel asynchronous composition},
which combines two systems communicating with an unspecified `outside' into a single interacting system.
The state of art~\cite{backes2016computational,sprengerIglooSoundlyLinking2020,arquint2022sound,arquint2023generic,sammlerDimSumDecentralizedApproach2023} describes the heterogeneous system as a composition of labeled transition systems (LTS).
LTS are very flexible; they can abstract any programming language.
Hence, the composition of LTS is the key to capturing cross-language
communication, be it at
runtime~\cite{sprengerIglooSoundlyLinking2020} or compile time \cite{sammlerDimSumDecentralizedApproach2023}.
In practice, protocols consist of components in different languages (e.g., the Apache server communicates with Firefox in TLS) and an altogether unknown attacker. 
Current composition approaches insist on \emph{translating} base values that are truly incompatible, e.g., bitstrings and abstract DY terms. This leads to shortcomings that we describe in detail (and solve) in \autoref{sec:parallel}.
In short, the translation approach is:
\begin{itemize}

    \item \textbf{Hard to apply due to strong parsing assumptions:} 
For instance, keys must always be syntactically distinguishable from
bitstrings used elsewhere and network messages must use known encodings~\cite{arquint2022sound,arquint2023generic}. We can avoid this assumption by not
requiring a `universal' translation a~priori, but instead by tracking what the
application actually does.
We elaborate on this in~\autoref{sec:parsing} and use~\Cref{ex:BS-manipulation} to show how we solve this problem.
    \item \textbf{Limited in the ability to capture adversarial bit-level reasoning:} 
        The translation approach notoriously struggles with mixed
        values, for instance, abstract encryption terms or keys that the
        implementation manipulates on the bit level. 
        In~\autoref{lose-bit-info}, we further explain this issue and discuss our solutions using~\Cref{ex:Transferable-equalities,ex:masked-encryption}.
    \item \textbf{Not truly versatile:} The complexity-theoretic
    computational attacker, e.g., is not compatible with
    standard language semantics.
    We argue the compatibility of our framework with a computational attacker in~\autoref{sec:comp-attacker}.
\end{itemize}

We solve these issues by forgoing the translation step between such different
base types as, e.g.,  bitstrings and DY terms.
The crux is, in our view, that the DY model is a symbolic
abstraction (it is sometimes called the `symbolic
model of cryptography'~\cite{abadi2000reconciling}),
whereas the translation approach and the above method of composition
treats DY terms as if they were concrete values (e.g., see~\cite[Sec.\ 4.2]{arquint2022sound}).
The first two of the above issues are artifacts of this mismatch.

Consequently, the DY model ought to be composed with an LTS that
describes interacting components at the same level of abstraction,
that is, with \emph{symbolic execution semantics}. 
Symbolic execution follows the program, assuming symbolic values
(i.e., variables at the object level, so neither program variables nor
meta-mathematical variables) for inputs and thus computing symbolic
expressions instead of concrete values. Symbolic values can form the
`glue' for communication, allowing us to describe message-passing
without translating values from one semantics into the other.

Assuming we have a symbolic execution semantics---devising them is a standard task---we can define a class of LTS with a little more structure, aka \textit{symbolic LTS}, and a new parallel
composition operator. The operator exploits symbols for
communication and covers the transfer of logical statements made about
symbols between both semantics.

This paper is organized into two major parts: \autoref{sec:parallel}
builds up our framework (source is available at~\cite{SymbolicParallelComposition})
and introduces the necessary background
whenever needed. It starts from LTS and traditional composition,
discusses the aforementioned problems in detail, then provides our new
method for composition, along with helpful results for composition,
refinement and DY attackers. It presents a \emph{framework} for
multi-language composition.
The second part (\autoref{sec:case-study}) presents a challenging application: we instantiate our framework with different languages for the representation of machine code,
for the specification of other parties and for the specification of
the threat model,
and demonstrate 
the sound extraction of a protocol model from its low-level implementation.
Our soundness result ensures the end-to-end correctness of our toolchain, distinguishing it from existing works~\cite{nasrabadi2023cryptobap,aizatulinVerifyingCryptographicSecurity2015}.
Finally, to show the efficacy of our framework, we apply it to verify the TinySSH and WireGuard protocols.
To summarize, we make the following contributions:

\begin{itemize}
    \item We propose a framework for the parallel asynchronous
        composition of components written in different languages 
        with applications for various methods of analysis, e.g.,
        secure compilation, code-level verification, model extraction (such as ours), or monitoring (see also \autoref{sec:relatedwork}).
    \item Using this framework and additional theorems, we
       support integrating DY attackers into arbitrary languages.
       This is necessary to make the end-to-end proof feasible.
    \item We discuss three methods of improving symbolic execution
        engines with DY support using combined deduction relations (i.e., $\compdedrelfun{}{}$) in \autoref{subsec:DedComb}.\footnote{We use \fstlangcol{RoyalBlue, \ math \ bold}, \sndlangcol{RedOrange, \ sans \ serif}, and \trdlangcol{Plum, \ typewriter} to differentiate between different languages~\cite{patrignani2020should}. Elements common to all languages, including symbols, are typeset in \textit{black italics}.}
\item We formalized our framework and proved its soundness and the DY attacker and library properties in HOL4~\cite{hol4}.
\item We extend \CryptoBap~\cite{nasrabadi2023cryptobap} toolchain and provide a sound, mechanized verification methodology for the verification of ARMv8 and RISC-V machine code. Thanks to our framework, we simplify the proofs in~\cite{nasrabadi2023cryptobap} and fully mechanize them, which was previously unrealistic due to the complexity of the employed computational soundness framework and lack of compositionality.
\item We took the opportunity to translate the symbolic results from \CryptoBap{} into \Sapic ~\cite{cheval2022sapic+}, a calculus that (soundly) translates to a range of protocol verification backends such as \Tamarin~\cite{meier2013tamarin}, \ProVerif~\cite{blanchet2001efficient} and \DeepSec~\cite{cheval2018deepsec}.
\item We compare the performance of \Sapic's backends by \revised{\infruleref{Rc-m3}{RC,M3}}{proving mutual authentication and forward secrecy within the symbolic model for the implementations of TinySSH and WireGuard.}
\end{itemize}

\section{Parallel Composition of Symbolic Semantics}
\label{sec:parallel}

\reversemarginpar
We now present our framework for the composition of symbolic labeled transition systems, starting with revisiting the standard definition of LTS and the conventional \revised{\infruleref{Rc-m1}{RC,M1}}{communicating sequential processes (CSP)-style~\cite{brookes1984theory} asynchronous parallel composition\footnote{
	We prefer a less descriptive, but shorter name, assuming that modern systems require both synchronous and asynchronous transitions.}. }
\normalmarginpar
We use a few illustrative examples to better highlight the translation approach's limitations, which we compensate for by a novel form of parallel composition in a symbolic semantics.
In our framework, we distinguish the specific roles of the DY attacker and the DY library. 
Also, we discuss its capability to deal with other attackers alongside the DY attacker. 
Finally, we demonstrate the correctness of our approach 
and, for each theorem, provide access to proofs that we mechanized in HOL4.

\subsection{LTS and their composition}\label{lts}
\revised{\infruleref{Rc-w4}{RC,W4}}{LTS provides a generic semantic model for capturing the operational semantics of systems~\cite{10.1016/j.entcs.2007.05.019,plotkin1980operational}.}
An LTS consists of a set of states (aka configurations) $\configSpace$
connected by a transition relation $\transrel{\eventsym}{}{}{} \subseteq \configSpace \times \eventSpace \times \configSpace $ 
that releases an event $\eventsym \in \eventSpace$ when the system moves
between states, and an initial state $\config \in \configSpace$
within that space. Given that a language has a formalized
semantics, a program behavior can typically be described
as an LTS. Thus, it is interesting to combine LTS to reason about
heterogeneous systems, wherein some transitions are asynchronous, e.g.,
 programs are performing internal computations independently, while
others are synchronized, e.g., one program sends a message and the other one
receives it.

CSP-style asynchronous parallel composition
supports both types of transitions and can be applied to LTS. 
Transitions are synchronous if both carry the same event ($\eventsym \in \eventSpace_1\cap \eventSpace_2$), and all others are asynchronous. Hence, in a composed state $(\config_1,\config_2)$,
we move synchronously to $(\config_1',\config_2')$ with event
$\eventsym$ provided \emph{both} systems can move 
($\transrel{\eventsym}{}{\config_1}{\config_1'}$
and
$\transrel{\eventsym}{}{\config_2}{\config_2'}$)
and otherwise ($\eventsym \notin \eventSpace_1\cap \eventSpace_2$),
we move to  
$(\config_1',\config_2)$
or
$(\config_1,\config_2')$
if either of the systems can make a transition.

Synchronizing events can be used to transmit messages~\cite{backes2016computational,sprengerIglooSoundlyLinking2020}.
For example, when combining two systems $A$ and $P$ with a shared event $\AtoP{ m }$, system $A$ can have a rule that determines $m$ from its current state, whereas $P$ has a rule that non-deterministically accepts $\AtoP{ m^* }$ for any $m^*$ and incorporates it into the follow-up state.
Combining both systems via asynchronous parallel composition, we
obtain synchronous message passing from $A$ to $P$. 

\subsection{Message passing and Dolev-Yao attackers}
\label{subsec:mess-DY}
A very appealing proposal is to let $A$ designate a DY attacker
and $P$ a program in some general-purpose language, as to obtain
a semantics to reason about the interaction of any such $P$ with
a network adversary. 
Most recently, this approach was used to leverage separation logic for
the verification of network systems~\cite{sprengerIglooSoundlyLinking2020,arquint2022sound,arquint2023generic},
and earlier to
provide sound \newrevised{analyses} for Dalvik bytecode~\cite{backes2016computational}.

The DY model is a model of cryptography where the
attacker only makes deductions defined by a set of rules. It
has been enormously successful in verifying security
protocols, as it automates the
verification procedure~\cite{barbosa2021sok}. 
These rules do not necessarily cover all possible attacks and require additional
justification~\cite{abadi2000reconciling,gupta2005towards}. Typically, the DY
attacker and the protocol share an unbounded set of \emph{names} that
represents keys and other hard-to-guess values. The model ensures
the attacker and protocol always draw fresh names, hence
key collisions are \newrevised{impossible}. Names and public values can be combined
with free function symbols to terms. E.g., $\senc(m,k)$ is a term that represents
an encryption.
It is not interpreted further.
This so-called \emph{term algebra} is complemented
by a small set of rules that allows
operations beyond the application of these symbols. E.g., a rule for
decryption that says from $\senc(m,k)$ and $k$, the attacker learns
$m$. We will make these notions explicit later.
When parallel composing a DY attacker with a language where keys and messages are represented as bitstrings, it is necessary to translate DY terms to bitstrings and vice versa. This, however, has several caveats. 

\subsubsection{Parsing assumptions}\label{sec:parsing}

First, it requires strong and unrealistic parsing
assumptions to transform bitstrings back into terms that have more
structure. For instance, keys must always be distinguished from
bitstrings used elsewhere~\cite{arquint2023generic}. When we consider
the space of AES keys, which (in reality) covers all bitstrings of length 128 (or
192 or 256), this requires (artificial) tagging to
distinguish those from other bitstrings of that size,
which real-world implementations do not have and
actively avoid for performance.
Another issue is the use of bitstring manipulation for message
formatting. 

\begin{example}[Bitstring manipulation]\label{ex:BS-manipulation}
	Concatenation is essential in the implementation of crypto protocols.
        It is associative and, hence, not easy to reason about automatically; thus, usually, this operation is not part of the DY term algebra. 
        Consider the example where
        a message $ \fstlangcol{m} $ 
        is concatenated with its length
        to simplify
        parsing.
        Without further workarounds,
        the DY attacker can not determine $ \fstlangcol{m} $ from $ \senc(\fstlangcol{m \|}\fstlangpred{len}\fstlangcol{(m)},k) $,
        even if it possesses the encryption key $ k $.
		As we show later, the DY attacker can derive $ \fstlangcol{m} $ from $\fstlangcol{m \|}\fstlangpred{len}\fstlangcol{(m)}$ by employing the deduction combinator $\dedrelcombbit$ in~\autoref{comb-ded-bit} defined in~\autoref{comb-reason}.
\end{example}

The translation approach supports message formatting, of course, otherwise it would be impractical. It works around this issue by modeling every message format that is used as a DY function symbol ~\cite[Sec.\ 3.1]{arquint2022sound}.
For full TLS, there are at least 189 message formats~\cite [Sec.\ 5.1]{ammann2024}.
Clever refactoring may reduce this number (formats can be nested), but this is non-trivial and tedious.
Most importantly, we would like our mechanism to be protocol-agnostic, even if it is tied to a particular set of cryptographic functions.
\reversemarginpar
\revised{\infruleref{Rc-w2}{RC,W2}}{In contrast, techniques like
$\mathit{DY}^*$ and Comparse \cite{dystareurosp2021,wallezComparseProvablySecure2023}
	integrate bit-level and DY reasoning within the same tool, \newrevised{enabling} the analysis of a (protocol-specific) set of message formats at the bit-level and then performing a DY analysis on abstract types. This avoids the
problem and is discussed in~\autoref{sec:relatedwork}.}
\normalmarginpar

\subsubsection{Loss of bit-level information}
\label{lose-bit-info}
Manipulating DY terms in the context of another language's semantics produces non-DY bitstrings that cannot be properly represented and translated back into their correct form. 
Therefore, these bitstrings become untraceable to their DY origins and an irreversible element to the transformation.
As a result, translation approaches weaken the DY attacker in reasoning about the messages altered by a protocol party using a different language.
E.g., say $A$ wants to learn $P$'s secret $s$ and
can trick $P$ into encrypting
$s \fstlangcol{+ 0x1}$ with a known key $k$.
The DY attacker receives a bitstring corresponding to
$\senc(s\fstlangcol{+ 0x1},k)$, and after decrypting
with $k$, 
has to recognize the transformation applied to $s\fstlangcol{+ 0x1}$
(and that it requires
subtracting $\fstlangcol{0x1}$).
Examining the huge number of possible
transformations is out of the question, particularly when considering
Turing-complete machine semantics (e.g., the wrappers in~\cite{sammlerDimSumDecentralizedApproach2023}).
Typically,
as bitstring addition $ \fstlangcol{+} $ does not correspond to the image of a term constructor,
such unknown bitstrings are translated into garbage DY
terms~\cite{sprengerIglooSoundlyLinking2020} or terms
we do not know~\cite{arquint2022sound,arquint2023generic}.
In contrast to message formats, this output was unintended.
Using~\Cref{ex:Transferable-equalities,ex:masked-encryption}, we explain our solutions to this problem in~\autoref{subsec:DedComb}.

\subsubsection{Not truly versatile, compatibility with computational model}\label{sec:comp-attacker}

The DY model is a symbolic abstraction that is well-accepted in
protocol verification, but not throughout information security.
It is useful to
be able to replace DY attackers with computational attackers for
flexibility or to validate the DY attacker's soundness.
This is incompatible or difficult, depending on how the translation approach is realized.
In~\cite{arquint2022sound,arquint2023generic},
a function translates from 
terms to bitstrings (i.e., the inverse direction to parsing discussed above).
In the computational model, this relationship is not
functional. For instance, a DY term representing a key, i.e.,
a \emph{name}, may translate to many different bitstrings, depending on
how they are sampled. 
Consequently, the computational attacker in these works is not
an attacker in the traditional sense (an arbitrary
probabilistic algorithm limited only in runtime) but the DY attacker
inside a function translating from and to bitstrings.

Fortunately, a long line of work on computational soundness~\cite{abadi2000reconciling} explored requirements for such a translation, which must be probabilistic. Alas, known results come with a long list of requirements, both on programs and cryptographic primitives they use, that are hard to fulfill. 
To even formulate these requirements, the target semantics
need to be equipped with a probabilism, non-determinism for
communication and a notion of polynomial runtime in the length of some
parameter that governs the key size and similar parameters.
While there are methods to encode all of these, programming languages
are rarely formalized with these features in mind.
We can point to Aizatulin's Ph.D. thesis \cite{aizatulinVerifyingCryptographicSecurity2015} as a case study for such a semantics and the required technical machinery.

\subsection{Symbolic Execution Semantics}
\label{sec:symb-exec-sem}
Symbolic execution explores all program execution paths using symbolic values---introduced at the object level---instead of concrete ones for inputs.
An example is a language with a memory that maps registers to
bitstrings. Its symbolic execution allows the memory to map
registers to either symbols or bitstrings.
Starting from an initial symbolic state, the execution explores all possible paths and collects the execution effects in a final symbolic state for each path.
Each symbolic state, in addition to a map from variables to symbolic expressions (i.e., where symbols represent initial state variables), also contains a path condition that is a logical predicate describing what is known about the symbol.
For instance, $r_A = \mathtt{0x0} \lor r_A = \mathtt{0x1}$ if register $r_A$ is known to be either $0$ or $1$ because it passed some condition. 
To combat the path explosion problem, symbolic execution
engines make logical deductions on these predicates to prune paths
that are unreachable \newrevised{(e.g., using an SMT solver)}. The more powerful the deduction engine, the fewer
paths need to be explored, but the more computationally expensive
these deductions are.

We capture these elements---symbols, predicates, and deductions––-by giving our LTS more structure. Let $\silentevent$ be the silent transition, then:

\begin{definition}[Symbolic LTS]\label{def:symbolic-lts}
    A symbolic LTS is 
    an LTS $(\stateSpace,\eventSpace,\transrel{}{}{}{})$
    for which there is 
    a symbol space $\symbSpace$,
    a predicate space $\predSpace$,
    and a deduction relation $ \dedrel \ \subseteq 2^{ \predSpace } \times\predSpace $ 
    such that:
    \begin{itemize}
        \item $\stateSpace = 2^{ \symbSpace } \times 2^{ \predSpace } \times \configSpace$ for some state space $\configSpace$ and
        \item \revised{\infruleref{Rc-m2}{RC,M2}}{For any
        predicate set $ \predset $,
        predicate $ \pred $,
        symbols set $ \symbset $, and
        state $ \config $,
        we have:} $\predset \dedrel \pred \implies
            \transrel{\silentevent}{}{(\symbset,\predset,\config)}{(\symbset,\predset\cup\set{\pred},\config)}$.
    \end{itemize}
    For brevity, we denote such LTS with
    $(\symbSpace,\configSpace,\eventSpace,\transrel{}{}{}{},\predSpace,\dedrel)$.
\end{definition}

The second condition establishes the relation between \newrevised{the logical deduction
relation
which is language-specific},
and the current predicate set: logical deductions can be made
at any time, and the knowledge we conclude (encoded inside the predicate) is added to the symbolic state. 
Typically, the state space $\configSpace$ and event space
$\eventSpace$ are built on the symbol space $\symbSpace$, e.g., in the example above, the
symbolic memory was a function from registers to
the union of bitstrings and the set of symbols $\symbset \subseteq \symbSpace$. 
This is only implicit in the
mathematical notation, it is, however, explicit in our HOL4 formalization, where the
types of $\configSpace$ and $\eventSpace$ are parametric in the
(polymorphic) type $\symbSpace$.
The first element $\symbset$ mainly tracks which symbols have been used so far, increasing monotonically.

\revised{\infruleref{Ra-w1.a}{RA,W1.a}}{Every symbolic LTS, also referred to as a component, 
must transmit only references to their messages in the form of symbols to other components.} 
\revised{\infruleref{Ra-w1.d}{RA,W1.d}}{Symbols relate to the values that are transmitted like a variable $n$ relates to the set of integers, i.e., as a representation. When a value is manipulated, the relation between the original and the changed value, each represented by a different symbol, is itself represented with a predicate connecting the two symbols. 
Consequently, a symbol always signifies the same value (in a run), and the predicates associated with distinct components articulate the same properties.}

\subsection{Symbolic Parallel Composition}\label{subsec:symbolic-parallel-composition}

We now define a parallel composition that behaves like
CSP-style asynchronous parallel composition
but has an important twist:
it is parametric in a \emph{combined deduction relation}, which serves to transfer judgments from one system into the
other. In the follow-up, we show that there are several ways to
define this that increases the set of possible deductions and, thus, the precision of the analysis, while also being compatible with almost all judgments made in programming
languages.\footnote{
From hereon, we will combine different systems with oftentimes
incompatible base types. To make it easier for the reader to type-check
our statements, we will use colors to remark which system we speak of.
}
Let $\proj{i}{} \ : 2^{\subone{\predSpace} \uplus \subtwo{\predSpace}} \to 2^{\predSpace_{i}} $ denote the projection\footnote{\newtext{We present this using a disjoint union ($\uplus$) for familiarity and simpler presentation, while we employ a sum type in our HOL4 formalization.}}
to $i\in\set{\colone,\coltwo}$, then:

\begin{definition}[Symbolic Parallel Composition]\label{def:symbolic-parallel-composition} 
    Given two symbolic LTS
    $S_i = (\symbSpace,\configSpace_i,\eventSpace_i,\transrel{}{i}{}{},\predSpace_i,\dedrel_i)$, 
    $i\in\set{\colone,\coltwo}$
    with \emph{identical symbol space} $\symbSpace$
    and a combined deduction relation
    $\dedrelcomb \subseteq 2^{(\subone{\predSpace} \uplus \subtwo{\predSpace})} \times (\subone{\predSpace} \uplus \subtwo{\predSpace}) $,
    we define their symbolic parallel composition
    $\Parallel[\dedrelcomb]{\SysFst[ _1 ]}{\SysSnd[ _2 ]}$ as the symbolic LTS
    $
    (\symbSpace, 
    \subone{\configSpace}\times \subtwo{\configSpace},
    \subone{\eventSpace} \cup \subtwo{\eventSpace}, 
    \transrel{}{\colonetwo}{}{},
    \subone{\predSpace} \uplus \subtwo{\predSpace}, \dedrelcomb)$,
    where 
    \begin{itemize}
        \item $\transrel{}{\colonetwo}{}{}$ moves asynchronously, i.e., 
        either $\transrel{\subone\eventsym}{\colonetwo}{\tuple{\symbset,\comppredset,\subone{\config},\subtwo{\config}}}{\tuple{\symbset',\comppredset',\subone{\config'},\subtwo{\config}}}$ 
        or $\transrel{\subtwo\eventsym}{\colonetwo}{\tuple{\symbset,\comppredset,\subone{\config},\subtwo{\config}}}{\tuple{\symbset',\comppredset',\subone{\config},\subtwo{\config'}}}$, 
            if,
            for $i \in\set{\colone,\coltwo}$,
            we can move with 
            $ \eventsym_i \in \eventSpace_i \setminus (\subone{\eventSpace} \cap \subtwo{\eventSpace})$, i.e.,
            $ \transrel{\eventsym_i}{i}{\tuple{\symbset,(\proj{i}{\comppredset}),\config_{i}}}{\tuple{\symbset',(\proj{i}{\comppredset'}),\config'_{i}}}$,
            keeping the 
            complement's \footnote{i.e.,  $ \overline i \in\set{\colone,\coltwo}$ with $\overline i \neq i$.}
            predicate set untouched $ \proj{\overline i}{\comppredset} = \proj{\overline i}{\comppredset'} $, or
        \item $\transrel{}{\colonetwo}{}{}$ moves synchronously, i.e. $\transrel{\eventsym}{\colonetwo}{\tuple{\symbset,\comppredset, \subone{\config},\!\subtwo{\config}}}{\tuple{\symbset'\!,\comppredset',\subone{\config'},\subtwo{\config'}}}$, if, for $i\in\set{\colone,\coltwo}$, $ \transrel{\eventsym}{i}{\tuple{\symbset,(\proj{i}{\comppredset}),\config_{i}}}{\tuple{\symbset'_{i},(\proj{i}{\comppredset'}),\config'_{i}}} $, $ \eventsym \in \subone{\eventSpace} \cap \subtwo{\eventSpace} $, and $ \symbset' = \subone{\symbset'} \cup \subtwo{\symbset'} $.
        \item \revised{\infruleref{Ra-w3.a3}{RA,W3.a3}}{From the second condition of~\autoref{def:symbolic-lts} we have: $ \comppredset \dedrelcomb \predcomb \implies \transrel{\silentevent}{\colonetwo}{\tuple{\symbset,\comppredset, \subone{\config},\subtwo{\config}}}{\tuple{\symbset, \comppredset\cup\set{\predcomb},\subone{\config},\subtwo{\config}}}$.}
    \end{itemize}
\end{definition}

\revised{\infruleref{Ra-w3.a2}{RA,W3.a2}}{
~\autoref{def:symbolic-parallel-composition} preserves fundamental properties of parallel composition like symmetry and associativity 
(see}~\href{https://github.com/FMSecure/CryptoBAP/tree/main/HolBA/src/tools/parallelcomposition/generaldeduction/derived_rules_generaldeductionScript.sml#L121}{\itshape\underline{Symmetry}} \revisedtxt{and}~\href{https://github.com/FMSecure/CryptoBAP/tree/main/HolBA/src/tools/parallelcomposition/generaldeduction/derived_rules_generaldeductionScript.sml#L285}{\itshape\underline{Associativity}} \revisedtxt{for the mechanized proofs in HOL4).} 

Note that even if $\dedrelcomb$ is empty
(short:
$\Parallel[]{}{}
\defeq \Parallel[\emptyset]{}{}
$)
the symbolic parallel composition is different from the classical parallel composition of the corresponding LTS. The symbol set is shared between both symbolic LTS even when they move asynchronously.
Typically, symbolic LTS uses the symbol set to ensure that new symbols are fresh; as we use symbols for communication, we want to ensure they are globally fresh.

Moreover, if $\dedrelcomb$ is not empty, it allows deriving judgments
in one system from judgments in the other system. Of course, we want
to avoid this relation to be overly tied to one or the other system.
Before we discuss how to do that, we will showcase how the DY model is represented as a symbolic LTS. This provides us
with concrete examples to illustrate how $\dedrelcomb$ can overcome
the issues from \autoref{subsec:mess-DY} (cf.~\Cref{ex:masked-encryption,ex:Transferable-equalities} for their solutions). 

\subsection{DY Attackers}
\label{sec:DYattackers}
The DY model considers an attacker that exploits logical
weaknesses in a protocol, but is not able to break cryptographic
primitives. Cryptography is assumed to be perfect; events that can
occur in the real world, albeit with negligible probability---for
example, guessing a key---are altogether impossible in this model.
A small set of rules governs how messages can be manipulated on an
abstract level; every other manipulation is excluded.
Concretely, messages are modeled as \emph{terms}\footnote{\newtext{The DY attacker and the DY library (but not the program) use the same terms. To simplify the presentation, we typeset them in \textit{black italics}, as otherwise, they would be orange \emph{and} purple.}}. 
\newrevised{
Constants are taken from an infinite set of
names $\allnames$, divided into public names $\pubnames$ (e.g., agent names)
and secret names $\privnames$ (like keys and nonces).}
We also assume a set of variables $\Vars$ for values that the DY
attacker receives. 
The set of terms $\Terms$ is then constructed over names in
$\allnames$, variables in $\Vars$ and applications of function
symbols in $\Funcs$ on terms. Let $f\in \Funcs^n$ denote a function
symbol with arity $n$.
For the moment, we consider only two function symbols, 
$\Funcs = \set{ \senc, \sdec } = \Funcs^{\newrevised{2}}$. 
The term $\senc(m,k)$ models the symmetric encryption of
another term $m$ with the key $k \in \privnames$.
A set of equations $E \subset \Terms \times \Terms$ provides these
terms with a meaning. Let us define
$ E = \set{  \sdec(\senc(x,y),y) = x} $
to account for the fact that decryption reverses encryption (for the
same key). We can define an equivalence relation $=_E$ as the smallest
equivalence relation containing 
$E$ that is closed under the application of function symbols and substitution of variables by
terms. Now, 
$ \sdec(\senc(m,k),k) =_E  m$. 

\begin{figure}
\centering
\adjustbox{varwidth=\linewidth,scale=0.95}{%

\[
\begin{prooftree}	
\hypo{
\knowledge{t} \in \knowledgeset 
}
\infer1[\rKZero]{ \dydedrelfun{\knowledgeset}{\knowledge{t}} }
\end{prooftree}
\quad
\begin{prooftree}	
    \hypo{ \dydedrelfun{\knowledgeset}{\knowledge{t_1}} \cdots \dydedrelfun{\knowledgeset}{\knowledge{t_n}} \ \ f \in \Funcs^n }
\infer1[\rAppl]{
        \dydedrelfun{\knowledgeset}{\knowledge{f(t_1,\ldots,t_n)}} }
\end{prooftree}
\]
\[
\begin{prooftree}	
	\hypo{
		n \in \pubnames
	}
	\infer1[\rPubName]{ \dydedrelfun{\knowledgeset}{\knowledge{n}} }
\end{prooftree}
\quad
\begin{prooftree}	
	\hypo{ \dydedrelfun{\knowledgeset}{\knowledge{t_1}}}
	\hypo{ \dydedrelfun{\knowledgeset}{\dyEqual{t_1}{t_2}} }
	\infer2[\rSubst]{ \dydedrelfun{\knowledgeset}{\knowledge{t_2}}  }
\end{prooftree}
\]
\[
\begin{prooftree}	

    \hypo{ t_1 =_E t_2}
    \infer1[\rEq]{
	\dydedrelfun{\knowledgeset}{\dyEqual{t_1}{t_2}}
    }
\end{prooftree}
\quad
\begin{prooftree}	
	\hypo{ \dydedrelfun{\knowledgeset}{\knowledge{x}}}
	\hypo{ \dydedrelfun{\knowledgeset}{\dyAliEq{x}{t}} }
	\infer2[\rAlSubst]{ \dydedrelfun{\knowledgeset}{\knowledge{t}}  }
\end{prooftree}
\]
}
\caption{The DY attacker's deduction rules.
Top-left to bottom-right: the attacker knows the messages it received and
can apply function symbols. 
If a name is public, the attacker knows this name, and equivalent names to fresh names are also fresh.
Equivalence modulo $E$ translates into an
equivalence judgment, and if a term is known, any equivalent terms are also known. 
Any terms that correspond to a given symbol are known if the symbol itself is known.
}
\label{fig:DYdeductionrules}

\end{figure}

\begin{figure*}
\centering
\adjustbox{varwidth=\linewidth,scale=0.9}{%
\[
\begin{prooftree}	
	\hypo{ x \notin \symbset }
	\hypo{\knowledgeset' = \knowledgeset \cup \{ \knowledge{x} \}}
	\infer2[\rPA]{	\transrel{\PtoA{x}}{_{\dycol{A}}}{\tuple{\symbset,\knowledgeset,\dystate}}{\tuple{\symbset \cup \{ x \}, \knowledgeset',\dystate}} }
\end{prooftree}
\quad
\begin{prooftree}	
	\hypo{ \knowledge{x} \in \knowledgeset }
	\hypo{ x \in \symbset }
	\infer2[\rAP]{	\transrel{\AtoP{x}}{_{\dycol{A}}}{\tuple{\symbset,\knowledgeset,\dystate}}{\tuple{\symbset,\knowledgeset,\dystate}} }
\end{prooftree}
\quad
\begin{prooftree}	
		\hypo{\begin{matrix}
		x \notin \symbset \\
		f \in \Funcs^n
	\end{matrix}}
	\hypo{\begin{matrix}
		\forall i \leq n: t_i \in \Terms  \hspace{1cm} 0 < n\\
		\knowledgeset' = \knowledgeset \cup \set{ \dyAliEq{x}{f(t_1,\ldots,t_n)} }
\end{matrix}}
        \infer2[\newrevised{\rAlias}]{\transrel{\Alias{x}{f(t_1,\ldots,t_n)}}{_{\dycol{A}}}{\tuple{\symbset,\knowledgeset,\dystate}}{\tuple{\symbset \cup \{x\},      \knowledgeset' ,\dystate}} }
\end{prooftree}
\quad
\]
\[
\begin{prooftree}	
	\hypo{\begin{matrix}
			n \in \privnames \\
			\freshness{n} \notin \knowledgeset
	\end{matrix}}
	\hypo{\begin{matrix}
			\knowledgeset' = \knowledgeset \cup \{ \freshness{n} \}
	\end{matrix}}
	\infer2[\rFreshSync]{	\transrel{\Syncfresh{n}}{_{\dycol{A}}}{\tuple{\symbset,\knowledgeset,\dystate}}{\tuple{\symbset,\knowledgeset',\dystate}} }
\end{prooftree}
\quad
\begin{prooftree}	
	\hypo{\begin{matrix}
			n \in \privnames \\
			\freshness{n} \notin \knowledgeset
	\end{matrix}}
	\hypo{\begin{matrix}
			\knowledgeset' = \knowledgeset \cup \{ \freshness{n},\knowledge{n} \}
	\end{matrix}}
	\infer2[\rFresh]{\transrel{\Silent{n}}{_{\dycol{A}}}{\tuple{\symbset,\knowledgeset,\dystate}}{\tuple{\symbset,\knowledgeset',\dystate}} }
\end{prooftree}
\quad
\begin{prooftree}
        \hypo{ \dydedrelfun{\knowledgeset}{\pi}}
        \hypo{\knowledgeset' = \knowledgeset \cup \set{ \pi} }
        \infer2[\rDed]{	\transrel{\sndlangcol\silentevent}{_{\dycol{A}}}{\tuple{\symbset,\knowledgeset,\dystate}}{\tuple{\symbset,\knowledgeset',\dystate}} }
\end{prooftree}
\] 
}
\caption{The transition relation rules for Dolev-Yao attacker model.}
\label{fig:DYtransitionrules}
\end{figure*}

The predicate set of \newrevised{the} DY attacker has three types of facts:
\newtext{their knowledge of a term $t$, written as $ \knowledge{t} $, is derivable from the set of predicates $\knowledgeset$ seen or derived so far,}
two terms are considered equivalent $\dyEqual{t_1}{t_2}$
according to $=_E$ and a name $n$ is fresh $\freshness{n}$. The
deduction relation (\autoref{fig:DYdeductionrules}, see caption)
 mostly
describes how $\knowledge{\cdot}$ is derived. 
$\dyEqual{t_1}{t_2}$ \newrevised{and $ \dyAliEq{x}{t} $ represent $=_E$ and $ \mapsto $ (i.e., denotes the mapping of variables to terms) at the logical level, respectively.}
With the deduction relation in place, we can now define the transition
relation (\autoref{fig:DYtransitionrules}). Besides \newrevised{the symbol set $\symbset$ and} the predicate set
$\knowledgeset$, the DY attacker is stateless, indicated by $\dystate$
for the empty state.

\newrevised{A message is received by synchronization with the event $\PtoA{\com{x}}$ emitted by another component.}
As the DY
adversary cannot process the incoming message type (e.g., bitstrings)
directly, we must assume $x$ is a symbol. Therefore, we set $\Vars = \symbSpace$.
Hence, in \infruleref{fig:DYtransitionrules}{\rPA},  $x$ is determined by the environment (e.g., the sending
component) and the attacker record the fact that it is known.

\newtext{
    If the predicate set $\knowledgeset$ witnesses that the symbol $x$ from the set $\symbset$ represents
    a value known to the DY attacker, the attacker can send $x$ to another component (\infruleref{fig:DYtransitionrules}{\rAP}). 
But not all knowledge predicates within $\knowledgeset$ are over symbols}; encryption
terms, for instance. Hence, \newrevised{the} \infruleref{fig:DYtransitionrules}{\rAlias} rule can be used to introduce a new
symbol, which can be transmitted. 
Recall that the \revised{\infruleref{Ra-w3.c}{RA,W3.c}}{\reversemarginpar\infruleref{fig:DYdeductionrules}{\rAlSubst} rule in~\autoref{fig:DYdeductionrules}} introduces the required
$\knowledge{\cdot}$-predicate via the deduction relation.
The transition rule \infruleref{fig:DYtransitionrules}{\rDed} integrates the deduction relation. 
It is simply the minimal rule required to satisfy the
condition in~\autoref{def:symbolic-lts}.
\normalmarginpar

Fresh names can be drawn by the attacker, but also by other
components (see \infruleref{fig:DYtransitionrulesLib}{\rFreshSync} in~\autoref{fig:DYtransitionrulesLib}).
\infruleref{fig:DYtransitionrules}{\rFresh} \newtext{is a synchronous step between the DY attacker and other
	components that} deals with the first case, where the attacker
learns the name, which is marked as fresh and thus `taken'. 
In the second case, another component, typically the crypto
library of some party, 
picks a key (or another high-entropy value) 
that is marked as fresh. 
The DY attacker must also mark those names as `taken', hence the other component synchronizes their picking of this value using the \newtext{synchronous \infruleref{fig:DYtransitionrulesLib}{\rFreshSync} rule in~\autoref{fig:DYtransitionrulesLib}.}
This synchronization is necessary, as \Cref{ex:DY-communicates-Lib} shows.

\begin{example}[DY communicates with crypto library]\label{ex:DY-communicates-Lib}
    To generate a random number, a request needs to be sent to the library (e.g., $\mathlabel{rng}$). 
	The library maintains a record of the generated random numbers within its predicate set (i.e., $ \{\libfreshness{n}\} $) to ensure the creation of unique names.
	The DY attacker has the ability to choose a name (e.g., $ n' $) as long as it differs from the choice made by the library for the program (i.e., $ n $).
 \begin{figure}[H]
    \centering
    \includegraphics[width=0.8\linewidth]{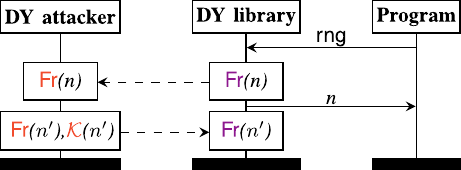} 
\end{figure} 
	The library's predicate set is updated using the synchronous \infruleref{fig:DYtransitionrulesLib}{\rFreshSync} rule in~\autoref{fig:DYtransitionrulesLib} and $ \freshness{n}$ is added to the predicate set of the attacker by the synchronous \infruleref{fig:DYtransitionrules}{\rFreshSync} rule in~\autoref{fig:DYtransitionrules}
	for the library's initial random number generation.
	The second update is performed by the attacker using the synchronous \infruleref{fig:DYtransitionrules}{\rFresh} rule in~\autoref{fig:DYtransitionrules} and the attacker is not able to pick the library chosen name $ n $ as $ \freshness{n}$ exists in the attacker predicate set $\knowledgeset$. Therefore, the attacker chooses a fresh name $ n' $, and the $ \libfreshness{n'}$ is added into the predicate set of the library by the synchronous \infruleref{fig:DYtransitionrulesLib}{\rFresh} rule in~\autoref{fig:DYtransitionrulesLib}.

\end{example}
In our examples,
\textbf{solid} arrows represent direct communications between components, while \textbf{dashed} arrows denote the implicit flow of facts between the DY library and the DY attacker. Also, each step within the action box of components signifies the logical predicates added to their predicate sets during execution.

Observe that DY attackers do not pick honestly generated names (e.g., in~\Cref{ex:DY-communicates-Lib})
due to the synchronization on freshness facts.
The \infruleref{fig:DYtransitionrules}{\rAlias} rule (\autoref{fig:DYtransitionrules}) only generates new symbols, not names \newrevised{(i.e., chooses $ x $ from the symbol set $ \symbset $)}.
Contrast this with \cite{sprengerIglooSoundlyLinking2020}, where the
syntactic structure of names binds them to protocol roles,
including the attacker,
or 
the follow-up work~\cite{arquint2022sound}  
where names do not need to carry structure, 
but a global restriction on traces 
is applied to ensure uniqueness.
In both cases, this aspect of the DY attacker is thus hard-coded
into the (global) trace model, which we can avoid.

\subsection{Dolev-Yao Libraries}\label{subsec:DYLib}
\begin{figure*}
\centering
\adjustbox{varwidth=\linewidth,scale=0.9}{%

\[
\begin{prooftree}	
	\hypo{n \in \privnames}
	\hypo{\begin{matrix}
			\libfreshness{n} \notin \libknowledgeset
	\end{matrix}}
	\hypo{\begin{matrix}
			\libknowledgeset' = \libknowledgeset \cup \{ \libfreshness{n} \}
	\end{matrix}}
	\infer3[\rFreshSync]{	\transrel{\Syncfresh{n}}{_{\libcol{L}}}{\tuple{\symbset,\libknowledgeset,\libstate}}{\tuple{\symbset,\libknowledgeset',\libstate}} }
\end{prooftree}
\quad
\begin{prooftree}	
	\hypo{n \in \privnames}
	\hypo{\begin{matrix}
			\libfreshness{n} \notin \libknowledgeset
	\end{matrix}}
	\hypo{\begin{matrix}
			\libknowledgeset' = \libknowledgeset \cup \{ \libfreshness{n} \}
	\end{matrix}}
	\infer3[\rFresh]{	\transrel{\Silent{n}}{_{\libcol{L}}}{\tuple{\symbset,\libknowledgeset,\libstate}}{\tuple{\symbset,\libknowledgeset',\libstate}} }
\end{prooftree}
\quad
\] 
\[
\begin{prooftree}	
	\hypo{\begin{matrix}
	y \notin \symbset 
	\end{matrix}}
	\hypo{\begin{matrix}
	\forall i \leq n: x_i \in \symbset 
	\end{matrix}}
	\hypo{\begin{matrix}
			f \in \Funcs^n
	\end{matrix}}
	\hypo{\begin{matrix}
		\libknowledgeset' = \libknowledgeset \cup \{ \callEq{y}{f(x_1,\ldots,x_n)}  \}
\end{matrix}}
	\infer4[\rFCall]{	\transrel{\fcall{f,x_1,\ldots,x_n,y}}{_{\libcol{L}}}{\tuple{\symbset,\libknowledgeset,\libstate}}{\tuple{\symbset \cup \{y\},\libknowledgeset',\libstate}} }
\end{prooftree}
\]
}
\caption{The transition relation rules for Dolev-Yao library model.}
\label{fig:DYtransitionrulesLib}

\end{figure*}

To equip a programming language with a DY semantics, we also need to
mark crypto operations as such, i.e.,
specify when a function output ought to be abstracted by an encryption
term like $\senc(\cdot,\cdot)$.
One way is to integrate the term algebra into the predicate space
$\predSpace$ and mark crypto outputs via equalities, e.g.,
a logical predicate saying `symbol $z$ is equivalent to the DY term
$\senc(x,y)$' (where $ x $ and $ y $ can be other symbols).
A more generic way to achieve the same effect is by composing the
function calls with a DY library that performs those abstraction steps.
\autoref{fig:DYtransitionrulesLib} shows how the composition can be done, with
\newrevised{the} \infruleref{fig:DYtransitionrulesLib}{\rFCall} applying a function symbol similar to \newrevised{the} \infruleref{fig:DYdeductionrules}{\rAppl} but including \newrevised{the} \infruleref{fig:DYtransitionrules}{\rAlias}.
\newtext{Like the DY attacker, the DY library is stateless.}

\revised{\infruleref{Ra-w3.d2}{{\small RA,W3.d2}}}{
	The deduction relation of the DY library is defined via an
        equivalence relation $ =_E $.
        In verification tools like \Sapic, 
        \newrevised{the relation $ =_E $ arises as the smallest congruence relation that is closed under substitutions and contains the set of equations $ E $ provided by the user.}
        For the sake of the formalization, $=_E$ is an arbitrary
        equivalence relation.
        The equations $E$ used in our case studies are provided in the \Sapic input file.
}

\begin{table}[]
	\begin{tabular}{@{}lll@{}}
		\toprule
		Event            & Purpose                        & Involved components         \\ \midrule
		$ \fcallsym $            & Library calls                  & Program and Library                         \\
		$ \Syncfreshsym $              & Calls to $\mathlabel{RNG}$                   & Program, Library and Attacker                       \\
		$ \AtoPsym , \PtoAsym $          & Network communication          & Program and Attacker                         \\\midrule
		$ \Silentsym $           & Ensure freshness               & Library and Attacker                        \\ \bottomrule
	\end{tabular}
	\caption{
		Summary of synchronizing events.
		$ \fcallsym $, $ \Syncfreshsym $, $ \AtoPsym$,  and $\PtoAsym $ synchronize with the program component,
		whereas $ \Silentsym $ is internal to the DY Libary and DY Attacker.}
	\label{tab:events}
	\end{table}

As we know the DY library and DY attacker and their respective
predicate sets, we can use the combined deduction relation 
\revised{\infruleref{Ra-w3.c}{RA,W3.c}}{$ \dedrelAvsL $ to
share the mapping predicates, as follows:}
\begin{align*}
	\libknowledgeset \uplus \knowledgeset \dedrelAvsL \ \dyAliEq{x}{t} \ \iff \ 
	\callEq{x}{t} \in \libknowledgeset \\
	\libknowledgeset \uplus \knowledgeset \dedrelAvsL  \ \callEq{x}{t} \ \iff \ 
	\dyAliEq{x}{t} \in \knowledgeset 
		\label{comb-ded-lib-dy}\tag{$\dedrelAvsL$}
\end{align*}
\revised{\infruleref{Ra-w1.b}{RA,W1.b}}{
\autoref{tab:events} summarizes the interface to the DY attacker and
library from the perspective of a protocol component, which could be,
for instance, a \birsymb{} program, as in our case studies.
For instance, if the protocol wanted to generate a random number, it
would use $\Syncfreshsym$, which synchronizes with 
\rFreshSync{} in 
\autoref{fig:DYtransitionrules}
and
\autoref{fig:DYtransitionrulesLib}.}

\begin{example}[Logical truth]\label{ex:Logical-truth}
	\newrevised{Predicates can be shared between components without explicit communication.}
 \begin{figure}[H]
    \centering
    \includegraphics[width=0.82\linewidth]{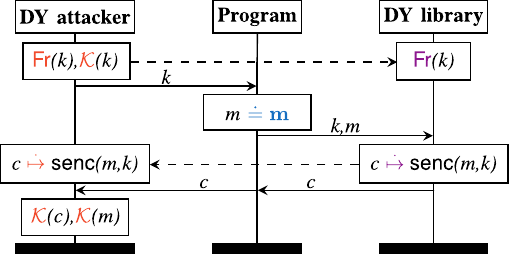} 
\end{figure} 
	Thus, the DY attacker can uncover the message $\fstlangcol{m}$ using the known key $ k $ and \revised{\infruleref{Ra-w3.c}{RA,W3.c}}{the mapping $\dyAliEq{c}{\senc(m,k)}$}, without communicating with the library.
	The steps to acquire the message $  \fstlangcol{m}  $ are as follows: 
	upon receiving $ c  $ from the program and obtaining \revised{\infruleref{Ra-w3.c}{RA,W3.c}}{$ \dyAliEq{c}{\senc(m,k)}$ through $ \dedrelAvsL $}, 
	the attacker uses \revised{\infruleref{Ra-w3.c}{RA,W3.c}}{the \infruleref{fig:DYdeductionrules}{\rAlSubst} rule} to get $ \knowledge{\senc(m,k)}$.
	Next, the attacker utilizes their knowledge and $ \sdec \in \Funcs^{\newrevised{2}} $ to learn $ \sdec(\senc(m,k),k) $ using the \infruleref{fig:DYdeductionrules}{\rAppl} rule in~\autoref{fig:DYdeductionrules}.
	Leveraging the relation $=_E$ detailed in~\autoref{sec:DYattackers}, along with the \infruleref{fig:DYdeductionrules}{\rEq} and \infruleref{fig:DYdeductionrules}{\rSubst} rules (\autoref{fig:DYdeductionrules}), the attacker obtains the knowledge of $ m $. 
\newtext{Without the mapping predicate linking the ciphertext and encryption term, the attacker would lack the necessary knowledge to apply the \infruleref{fig:DYdeductionrules}{\rAppl} rule for decryption,
	leaving the encryption term undisclosed.}
\end{example}

\Cref{ex:Logical-truth} shows how the DY library, the attacker, and the program cooperate when the library generates a ciphertext using an adversarial key.
As a nice extra, such a library allows us to
prove a composition property that is convenient when different programs use multiple libraries (cf.~\appendixorfull{sec:dy-lib-comp}).

\subsection{Deduction combiners}
\label{subsec:DedComb}

Symbolic parallel composition's strength lies in its ability to transfer judgments between systems. There is a trade-off between precision and generality. We discuss some
useful combiners from the most general to the most precise.

\subsubsection{Generic over- and under approximation}

In general, bitstring operations can reveal cryptographic information.
\Cref{ex:masked-encryption} shows how to under-approximate or over-approximate the adversaries’ capabilities on operating with bitstrings.

\begin{example}[Masked encryption key]\label{ex:masked-encryption}
	\revised{\infruleref{Ra-w1.b}{RA,W1.b}}{
    In this example, the attacker obtains a message $\fstlangcol{m}$ encrypted
    with a fresh key $k$, followed by the key masked with a known
    constant
    $\fstlangcol{ \mathtt{0xdeadbeef}}$.
Using the combined deduction relation 
\revised{\infruleref{Ra-w3.c}{RA,W3.c}}{$\dedrelAvsL$ (which is defined specifically for DY attacker and DY library),
the mapping $\dyAliEq{c}{\senc(m,k)}$} transfers from DY library to DY attacker.}

	\revised{\infruleref{Ra-w1.b}{RA,W1.b}}{
The last message $b$ ought to reveal the plaintext $\fstlangcol{m}$. In
the \newrevised{following}, we introduce 
an
\emph{over-approximating} deduction combiner $ \dedrelcomb^\top $ (\autoref{comb-ded-over-approx})
that allows 
the DY attacker to infer $ \knowledge{k} $ from $ \knowledge{b} $ and $ \progEqual{b}{k \fstlangcol{ \oplus \mathtt{0xdeadbeef}}}$
and thus
the plaintext $\fstlangcol{m}$ (from $ \knowledge{c} $, \revised{\infruleref{Ra-w3.c}{RA,W3.c}}{$ \dyAliEq{c}{\senc(m,k)} $} and $\knowledge{k} $).
}
 \begin{figure}[H]
 	\centering
 	\includegraphics[width=0.8\linewidth]{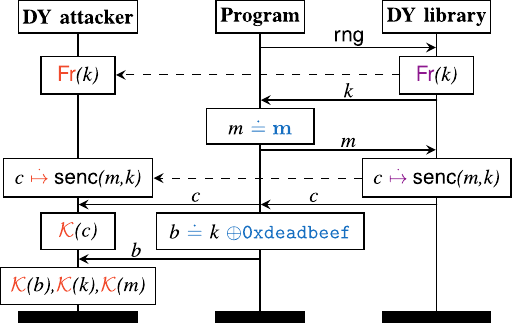} 
 \end{figure}
\end{example}

With an empty deduction combiner, the masked bitstring in the second
network message
is only accessible via
the symbol $b$. The DY attacker can perform DY
operations on the symbol $b$, but there is no way to access the $k$
symbol without reasoning about the bitstring. Hence the empty
deduction combiner under-approximates the adversaries' capabilities on
operating with bitstrings. This is equivalent to the view in 
\cite{sprengerIglooSoundlyLinking2020,arquint2022sound,arquint2023generic}, where the
concrete attacker is simply a translation function around the DY
attacker. If a bitstring that cannot be parsed is encountered, it can
only be ignored.

At the opposite end of the spectrum, Backes et al.~\cite{backes2016computational} aimed for computational
soundness, which entails that all attacks that could be mounted by
a Turing machine must be captured by the DY attacker. As the Turing
machine can reverse the $\oplus$ operation in the above example, this
required an over-approximation where all bitstring operations were
represented in the DY model as \emph{transparent} function symbols,
i.e., function symbols whose input parameters are fully accessible. 

We can generically represent this over-approximation in our framework,
if we have an equality predicate $\eqone$ in the program's
predicate set and we can identify the set of symbols that appear on
either side, say, using a function named $\mathit{symbols}$:
\begin{multline*}
	\label{comb-ded-over-approx}\tag{over-approx}
    \subone{\predset} \uplus \subtwo{\predset} \dedrelcomb^\top \knowledge{z} 
    \iff \exists \ x, y. \\
    \knowledge{x} \in \subtwo{\predset}
    \land
    \progEqual{x}{y} \in \subone{\predset}
    \land
    z \in \mathit{symbols}(y)
\end{multline*}
This over-approximation can introduce spurious equalities that lead to 
false attacks. For example, it is reasonable that a logic for
bitstrings can conclude $ \progEqual{a}{a \oplus x \oplus x}$ for any $a$
and $x$. This could easily introduce a spurious dependency between
some $a$ transmitted to the attacker
and an arbitrary symbol $x$. 

\subsubsection{Sharing equalities}\label{subsec:share-eq}

\newrevised{If we can identify at least one equality predicate in both component's predicate space,} however, we can find a much more useful
middle ground between both extremes \newtext{(i.e., over- and under approximation).} 
Connecting equality
judgments in both systems may allow tracking data flow across system
boundaries, while requiring nothing more than to point out the
equality predicates.\footnote{
This task could even be automated by (heuristically) identifying an
equality as a predicate of arity two that is symmetric, reflexive and
transitive.}

Let $\subone{\predSpace}$ and $\subtwo{\predSpace}$ contain atoms
$\eqone$ and $\eqtwo$ such that
symbols can appear on each side of
either of them, i.e., $x \doteq_i y \in \predSpace_i$ for $i\in\set{\colone,\coltwo}$
and $x,y\in \subone{\symbset} = \subtwo{\symbset}$.
Then we can transfer equalities with the
minimal deduction combiner defined by the following statements:
\begin{align*}
    \subone{\predset} \uplus \subtwo{\predset} \dedrelcomb^\mathsf{eq} \progEqual{x}{z}
    & \iff \exists y. \ 
    \progEqual{x}{y} \in \subone{\predset}
    \land
    \dyEqual{y}{z}\in \subtwo{\predset} \label{transfer-equalities-eq1}\tag{\eqone}
    \\
    \subone{\predset} \uplus \subtwo{\predset} \dedrelcomb^\mathsf{eq} \dyEqual{x}{z}
    & \iff \exists y. \ 
    \progEqual{x}{y} \in \subone{\predset}
    \land
    \dyEqual{y}{z} \in \subtwo{\predset}\label{transfer-equalities-eq2}\tag{\eqtwo}
\end{align*}
\newtext{The following example shows how we address the \emph{loss of bit-level information} discussed in~\autoref{lose-bit-info} where the DY attacker could not analyze cryptographic secrets with bit-level modifications. Even though the DY attacker still cannot directly analyze bitstrings, they can now leverage the program's analysis by transferring equivalences through equality combiners (\autoref{transfer-equalities-eq1} and~\autoref{transfer-equalities-eq2}). This is essential when the protocol implementation involves packing (i.e., formatting messages so that the other party on the network \newrevised{can} read them) and unpacking (i.e., extracting the message).}

\begin{example}[Transferable equalities]\label{ex:Transferable-equalities}
	Equality can easily be transferred to accrue logical deduction relations.
	\newrevised{A $ \progEqual{\!\!}{\!\!} $ predicate can be added to the program’s predicate set, e.g., similar to what will be discussed in~\autoref{subsec:bir}.}
	In the following procedure block, the deduction relation $\subone{\dedrel}$ is used to deduce $ \progEqual{k'''}{k'} $.
	Given $ \progEqual{k''' \!}{\! k'} $ and $ \dyEqual{k'}{k} $, the attacker infers $ \dyEqual{k'''}{k} $ using~\autoref{transfer-equalities-eq2}.
	Knowledge of $k'''$ is derived from $\knowledge{k}$ and $\dyEqual{k'''}{k}$ employing the \infruleref{fig:DYdeductionrules}{\rSubst} rule in~\autoref{fig:DYdeductionrules}.
	Consequently, the attacker learns $ \fstlangcol{m} $ by knowing $k'''$, $ c $, and \revised{\infruleref{Ra-w3.c}{RA,W3.c}}{$\dyAliEq{c}{\senc(m,k''')}$}.
\begin{figure}[H]
    \centering
    \includegraphics[width=0.8\linewidth]{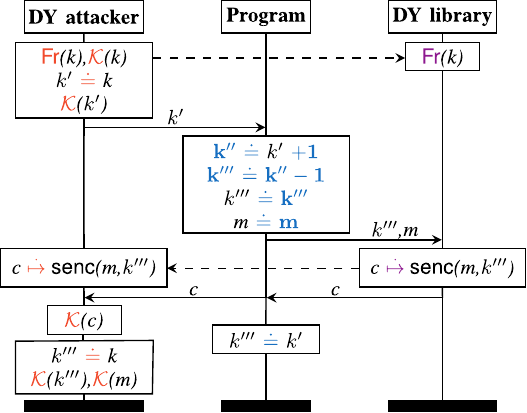}    
\end{figure} 
\end{example}

\subsubsection{Combined reasoning}
\label{comb-reason}
Equality sharing can transfer many statements derived from the other
component into the predicate space of the DY attacker, but
(a) only those
that discuss the relation between term sent or deduced
by the attacker (as only those have symbols), and, 
(b) only if the other
component has sufficient information to derive an equality judgment.

Coming back to~\Cref{ex:masked-encryption}, we see that the masking around the
encryption key must be removed to deduce ${k}$ from $b$.
\newtext{
But as the program does not perform that operation, the necessary equality
(between $k$ and the potential result of such an operation) is not produced.}
The \emph{ability} to perform this operation
must be described via the $\knowledge{\cdot}$ predicate rather than
$\eqtwo$.
A sound way of doing that would be to enhance the DY attacker with
bitstring manipulation via constant values.
\begin{multline*} 
	\label{comb-ded-bit}\tag{bit}
    \subone{\predset} \uplus \subtwo{\predset}
    \dedrelcombbit
    \knowledge{x} 
    \iff  \exists \ y, c. \  \\
    \knowledge{y} \in \subtwo{\predset}
    \land
    \progEqual{y}{\fstlangpred{op}(x,c)} \in \subone{\predset}
    \land
    \fstlangpred{const}~c \in \subone{\predset}
\end{multline*}
This combinator depends on the predicate space ${\subone{\predSpace}}$ providing 
a predicate
$\fstlangpred{const}~c$
that indicates a constant
and needs to explicitly list all binary operators
$\fstlangpred{op}(x,c)$.
It thus cannot be regarded as generic, although these concepts
(operators and constants) should apply to many programming
languages.
Again recalling~\Cref{ex:masked-encryption},
we can use $\dedrelcombbit$ to derive 
$\knowledge{k}$,
from
$\knowledge{b}$,
$\progEqual{b}{k \fstlangcol{ \oplus \mathtt{0xdeadbeef}}}$
and
$\fstlangpred{const}~\fstlangcol{\mathtt{0xdeadbeef}}$.

\begin{figure}
	\centering
	\includegraphics[width=0.8\linewidth]{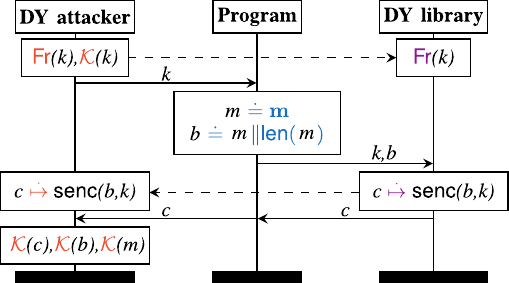} 
    \caption{A DY attacker removing  bit-level masking using $\dedrelcombbit$ in~\Cref{ex:BS-manipulation}}
	\label{fig:exp1}
\end{figure}
\newtext{
Similarly, when we come back to \Cref{ex:BS-manipulation}, in
\newrevised{\autoref{fig:exp1}} we can see how $\dedrelcombbit$ helps 
the DY attacker derive $\knowledge{\com{m}}$.
\newrevised{
As $ \fstlangpred{len}\fstlangcol{(\com{m})}$ is a constant and 
$\fstlangcol{\|}$ is an operation applied to $ \com{m} $ and 
$ \fstlangpred{len}\fstlangcol{(\com{m})}$, the DY attacker obtains $\knowledge{\com{m}}$ from $ \knowledge{\com{b}} $, $ \progEqual{\com{b}}{\fstlangcol{\com{m} \|}\fstlangpred{len}\fstlangcol{(\com{m})}} $ and $ \fstlangpred{const}~\fstlangpred{len}\fstlangcol{(\com{m})} $.
}
We have now addressed
the issue of parsing assumptions  (\autoref{sec:parsing})
in
\Cref{ex:BS-manipulation} and
the loss of bit-level information (\autoref{lose-bit-info})
in
\Cref{ex:masked-encryption}.
}

In summary,
the symbolic view on composition improves the accuracy of judgment in particular when combining
with the DY attacker as~\Cref{ex:BS-manipulation,ex:masked-encryption,ex:Transferable-equalities} witness.
This is hardly surprising,
as the translation approach 
sets up both DY attacker and program in a concrete execution
semantics with concrete (classical) composition, although the DY
attacker is symbolic in nature. 
By instead lifting the language to the symbolic level, we
turn the composition approach back on its feet and observe---at the
level of the composed system---that we have two methods of deduction
at our disposal.
What is surprising, is that we can achieve a significant improvement with
relatively simple deduction combinators. It should be difficult to find
logics where one \emph{cannot} find an equality predicate. Even a closer
integration as sketched in the previous paragraph, would apply to
a large set of programming languages while yielding immediate benefits.

\subsection{Beyond DY Attackers}
\label{subsec:Beyond-DYA}
Besides the DY model,
which is used in protocol verification,
there are two other attacker models that we want to discuss in the context of this framework.
The first is the \emph{unbounded attacker} used in programming languages and system-level verification.
This attacker is used in settings where cryptographic primitives are either not used at all,
or where their security guarantees are built into the language
semantics~\cite{lincoln1998probabilistic}.
The unbounded attacker can be a program or program context in the same
language as the program under verification,
or 
the trace of inputs that the program interacts with.
In both cases, computational limitations (even decidability) are
rarely relevant to the security argument. 
The decoupling of the attacker is thus only interesting if the attacker is intended to communicate with \newrevised{multiple other} components.
In
this case, there is no need for deduction by the unbounded attacker (each of its inputs is arbitrary, so fresh symbols)
but deduction combiners can be useful for components that share information, e.g., via the attacker.

The second attacker model is the \emph{computational attacker}, mentioned in \autoref{sec:comp-attacker}. There, we discussed how the translation approach struggles with probabilistic choice, unless the language provides the means to draw random keys. 
A naive formulation of the computational attacker encodes the Turing
machine semantics or any other probabilistic semantics. E.g.,
when a key is drawn, there are approximately 
$2^\secpar$ possible next states, with $\secpar$ being the key length, each describing a different value of
this key after sampling.
It is clear that such a modeling has little use for verification, as the state space is enormous.

Instead, we can apply our previous argument that a symbolic semantics
for the program ought to be composed with a symbolic semantics for the attacker and library. Thus, we should find a \newrevised{symbolic} representation of the random process producing, e.g., a distribution over keys. 
Bana and Comon propose a model where
symbolic rules with  a computational interpretation are
individually proven sound, but can be used to reason symbolically~\cite{banaComputationallyCompleteSymbolic2014,
bana2012towards}.
It can be reasoned about 
interactively with the \Squirrel
prover~\cite{baeldeInteractiveProverProtocol2021}.
We only sketch the idea here and leave a full
realization for future work.
As for the DY attacker, the computationally-complete symbolic attacker
(CCSA) represents messages as terms over a set of function symbols and
names, however, they are interpreted w.r.t.\ a
security parameter. A name describes the process of sampling
a random bitstring. A term describes a recursive process of evaluating
each function symbol using some polynomial algorithm and sampling each
name as described (but only once). 
In contrast to the DY model, where the function symbols define what
the attacker can do (and everything else is disallowed), the CCSA
model retains compatibility with the computational model by
symbolically formulating what the attacker \emph{cannot do} (and
everything else is allowed).
Consequently,
there is no equational theory; equality
is evaluated literally on the resulting bitstrings (in the
interpretation). Instead, CCSA features axioms that are proven sound
w.r.t.\ the above mentioned interpretation of terms as probabilistic polynomial-time Turing machines. 
The CCSA is thus simpler to define than the DY attacker: it
does not retain a predicate set or an equality predicate. The
predicate set, however, is the first-order logic described by Bana and
Comon~\cite{banaComputationallyCompleteSymbolic2014}.
Scerri's decision procedure allows handling a fragment of these
formulas~\cite{scerriProofsSecurityProtocols2015}, hence there is even
potential for automation.

\subsection{Correctness}
\label{sec:correctness}
 \revised{\infruleref{Ra-w3.d1}{{\small RA,W3.d1}}}{
 	
 	The correctness of parallel composition ($\Parallel[]{}{}$) is defined in terms of 
a partially synchronized interleaving ($\interleaving{}{}$) of the traces of each component, i.e.,
a permutation of the union of trace sets that maintains the relative order of 
elements within each set (see \appendixorextended{sec:interleave} for the formal definition).}
This is stronger than trace inclusion; for instance, it implies that all non-synchronizing traces of \newrevised{$ \fstlangcol{\Tracesfun{M}} $ or $ \sndlangcol{\Tracesfun{M}} $
are contained in $  \interleaving{\fstlangcol{\Tracesfun{M}}}{ \sndlangcol{\Tracesfun{M}}} $, where $\Tracesfun{M}$ is the set of traces produced by an LTS $M$}.
\revised{\infruleref{Ra-w1.e}{RA,W1.e}}{
	\reversemarginpar
    The correctness result covers \emph{all} events, including
    synchronizing events for the
    DY attacker, the DY library and non-synchronizing events 
    that occur only in the program we translate.
    Verification methodology will typically only consider a specific
    subset. 
    In our case studies, for instance, the program emits
    non-synchronizing events when
    special functions are reached and the verification tool describes
    security properties as trace properties over these events.
}
\normalmarginpar

More precisely, a trace $ \Trace \in \Traces$ is a sequence of events. 
We denote the symbolic parallel composition by $ \SParallel[\cdot]{}{}$ and traditional parallel composition for concrete systems by $ \CParallel{}{} $.
To avoid any ambiguity, we use notation like
 $ \mixtrace \in  \mixtraces(\Parallel[]{\fstlangcol{M}}{\sndlangcol{M}})$ 
 to refer to the  
sequence of events produced by a composite system
and
$\symtraces $ to distinguish the set of symbolic semantics traces
from
the set of concrete semantics traces $ \contraces $.

\begin{theorem}[Symbolic Composition Correctness]
	\label{thm:symb-compthm}
	For any symbolic LTS $\fstlangcol{M}$ and $ \sndlangcol{M}$, and for any combined deduction relation $\dedrelcomb$:
	\begin{enumerate}
            \item \newrevised{If all predicates $ \dedrelcomb^\mathsf{ena}$ produces may enable additional transitions, but not disable them, we call  $ \dedrelcomb^\mathsf{ena}$ \emph{enabling} and $\sbirmixtraces(\SParallel[\dedrelcomb^\mathsf{ena}]{\fstlangcol{M}}{\sndlangcol{M}}) \supseteq \interleaving{\fstlangcol{\symtraces}(\fstlangcol{M})}{\sndlangcol{\symtraces}(\sndlangcol{M})} $.}
		\item \newrevised{If all predicates $ \dedrelcomb^\mathsf{dis}$ produces 
                may disable transitions, but never enable new transitions, we call 
                $ \dedrelcomb^\mathsf{dis}$ \emph{disabling} and
            $ \sbirmixtraces(\SParallel[\dedrelcomb^\mathsf{dis}]{\fstlangcol{M}}{\sndlangcol{M}}) \subseteq \interleaving{\fstlangcol{\symtraces}(\fstlangcol{M})}{\sndlangcol{\symtraces}(\sndlangcol{M})} $.}
	\end{enumerate}
(The interested reader may consult~\theappendixorfull{sec:transitions} for the formal definitions of transition enabling and disabling.)
\end{theorem}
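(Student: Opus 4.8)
The plan is to prove the two inclusions by setting up a simulation between executions of the composed system and partially synchronized interleavings of the two component executions, arguing by induction on execution length. The anchor is the combiner-free case: when $\dedrelcomb=\emptyset$, the transitions of $\SParallel[\emptyset]{\fstlangcol{M}}{\sndlangcol{M}}$ are exactly the asynchronous component steps---which, by the second clause of \autoref{def:symbolic-lts}, already contain every $\dedrel_i$-deduction as a silent $\silentevent$-step---together with the synchronized steps on shared events. A routine induction then gives $\sbirmixtraces(\SParallel[\emptyset]{\fstlangcol{M}}{\sndlangcol{M}}) = \interleaving{\fstlangcol{\symtraces}(\fstlangcol{M})}{\sndlangcol{\symtraces}(\sndlangcol{M})}$, under the invariant that the combined predicate set projects exactly onto the two component predicate sets via $\proj{i}{\comppredset}$ for $i \in \set{\colone,\coltwo}$, and that the shared symbol set $\symbset$ is the union of the symbols drawn on each side.

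For part~1 (enabling, $\supseteq$) I would embed each interleaving into the composition. Since the combiner only ever contributes additional silent $\silentevent$-steps and never suppresses an asynchronous or synchronous step, every combiner-free execution is in particular a $\dedrelcomb^\mathsf{ena}$-execution, so by the anchor the interleaving is already realized. The enabling hypothesis supplies the robustness needed to keep the simulation intact along executions that do fire combiner steps: after any such step the projected sets $\proj{i}{\comppredset}$ have only grown, and because an enabling predicate never disables a transition, every step demanded by the interleaving remains available. Matching each interleaving step by the like-labeled composition step therefore preserves the invariant and yields the claimed $\supseteq$.

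For part~2 (disabling, $\subseteq$) I would instead project a composition execution back onto its components: each asynchronous step is charged to the component that moved, each synchronous step to both, and the internal $\dedrelcomb^\mathsf{dis}$-steps are erased as silent. The resulting pair of event sequences is the candidate interleaving, and the only point to verify is that every projected step is a genuine $\fstlangcol{M}$- or $\sndlangcol{M}$-transition under that component's own predicate set. This is exactly where disabling is indispensable: the combiner enlarges $\comppredset$ solely with predicates that disable transitions, so passing from $\proj{i}{\comppredset}$ down to the smaller set reached without the combiner can only re-enable transitions and never invalidates a step already taken. Hence the projections are bona fide traces in $\fstlangcol{\symtraces}(\fstlangcol{M})$ and $\sndlangcol{\symtraces}(\sndlangcol{M})$ whose partially synchronized interleaving is the observable trace of the composition.

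The principal obstacle is the predicate-set bookkeeping that makes the two hypotheses bite. I would first pin down the formal monotonicity statements of \smallappendixorfull{sec:transitions}---for an enabling (resp.\ disabling) combiner, augmenting any reachable predicate set with $\dedrelcomb$-produced predicates never deletes (resp.\ never adds) an outgoing component transition---and then thread this monotonicity through the induction alongside the projection identities and the global-freshness condition on $\symbset$. The synchronous steps are the delicate point: one shared event must be replayed against both projected predicate sets at once while the symbol set is jointly updated to $\subone{\symbset'} \cup \subtwo{\symbset'}$, and one must check that the enabling/disabling monotonicity is compatible with this simultaneous update and that erasing internal combiner steps does not disturb the relative order of events required by $\interleaving{}{}$.
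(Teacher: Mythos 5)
Your proposal is correct and takes essentially the same route as the paper's proof: induction on execution length with a case distinction over asynchronous, synchronous, and silent deduction steps, using the disabling hypothesis to project composed executions back onto genuine component traces (part 2) and realizing any interleaving inside the composition (part 1); your ``anchor'' via the empty combiner is only a presentational restructuring of that same induction. One small remark: your part-1 embedding never actually needs the enabling hypothesis to keep a simulation ``robust'' along combiner-firing executions, since every interleaving trace is already realized by an execution of the composition that fires no combiner steps at all, so that sentence is redundant rather than load-bearing.
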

\begin{proof}
    By induction over the length of the composed trace. The base case is trivial (no step is taken). The inductive case is proved by a case distinction over synchronous and asynchronous events.~\href{https://github.com/FMSecure/CryptoBAP/tree/main/HolBA/src/tools/parallelcomposition/generaldeduction/derived_rules_generaldeductionScript.sml#L586}{\itshape\underline{Correctness-Enable}} and~\href{https://github.com/FMSecure/CryptoBAP/tree/main/HolBA/src/tools/parallelcomposition/generaldeduction/derived_rules_generaldeductionScript.sml#L484}{\itshape\underline{Correctness-Disable}} 	\newtext{mechanize the proof of \autoref{thm:symb-compthm}'s cases in HOL4.}
\end{proof}

\autoref{thm:symb-compthm} enables compositional analysis of symbolic
systems, as \autoref{thm:symb-comprefin} shows. 
Let refinement (or security) be expressed in terms of trace inclusion. 
Then,
if 
component $ \fstlangcol{M_1} $ refines $ \fstlangcol{M_2} $, written
in the same language,
and the same holds for
components $ \sndlangcol{M_1} $ and $ \sndlangcol{M_2} $,
then the combined system
  $\SParallel[\dedrelcomb]{\fstlangcol{M_1}}{\sndlangcol{M_1}}$
refines
  $\SParallel[\dedrelcomb]{\fstlangcol{M_2}}{\sndlangcol{M_2}}$.

\begin{lemma}[\newrevised{Symbolic Compositional Trace Inclusion}]
	\label{thm:symb-comprefin}
    Let $ \fstlangcol{\symtraces_1} $, $ \fstlangcol{\symtraces_2} $, $ \sndlangcol{\symtraces_1} $, and $ \sndlangcol{\symtraces_2} $ be the sets of traces produced by 
	any symbolic LTS $\fstlangcol{M_1}$, $ \fstlangcol{M_2} $, $ \sndlangcol{M_1} $, and $ \sndlangcol{M_2} $. If $ \fstlangcol{\symtraces_1 \subseteq \symtraces_2} $ and $ \sndlangcol{\symtraces_1 \subseteq \symtraces_2} $, then:
	\begin{itemize}
		\item For the empty combined deduction relation $\emptyset$, it holds that \href{https://github.com/FMSecure/CryptoBAP/tree/main/HolBA/src/tools/parallelcomposition/deduction/derived_rules_deductionScript.sml\#L207}{$ \sbirmixtraces(\SParallel[]{\fstlangcol{M_1}}{\sndlangcol{M_1}})  \subseteq \sbirmixtraces(\SParallel[]{\fstlangcol{M_2}}{\sndlangcol{M_2}})$}.
		\item For any combined deduction relation $ \dedrelcomb \ \in \{\dedrelcomb^{\top},\dedrelcomb^\mathsf{eq},\dedrelcombbit\}$ (defined in~\autoref{subsec:DedComb}), it holds that
		\href{https://github.com/FMSecure/CryptoBAP/tree/main/HolBA/src/tools/parallelcomposition/combinededuction/derived_rules_combinedeductionScript.sml\#L177}{$\sbirmixtraces(\SParallel[\dedrelcomb]{\fstlangcol{M_1}}{\sndlangcol{M_1}})\subseteq\sbirmixtraces(\SParallel[\dedrelcomb]{\fstlangcol{M_2}}{\sndlangcol{M_2}})$}.
		\item For any disabling combined deduction relation $\dedrelcomb^\mathsf{dis}$ on the refined system (left) and any enabling combined deduction relation $\dedrelcomb^\mathsf{ena}$ on the abstract system (right), it holds that
		\href{https://github.com/FMSecure/CryptoBAP/tree/main/HolBA/src/tools/parallelcomposition/generaldeduction/derived_rules_generaldeductionScript.sml\#L601}{$ \sbirmixtraces(\SParallel[\dedrelcomb^\mathsf{dis}]{\fstlangcol{M_1}}{\sndlangcol{M_1}})  \subseteq \sbirmixtraces(\SParallel[\dedrelcomb^\mathsf{ena}]{\fstlangcol{M_2}}{\sndlangcol{M_2}}) $}.
	\end{itemize}
\end{lemma}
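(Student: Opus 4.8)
The plan is to treat the third bullet as the master statement, obtain the first as its degenerate instance, and prove the second by a separate simulation. Throughout I will use the two halves of \autoref{thm:symb-compthm} together with one elementary fact about the interleaving operator, namely its \emph{monotonicity}: if $\fstlangcol{\symtraces_1}\subseteq\fstlangcol{\symtraces_2}$ and $\sndlangcol{\symtraces_1}\subseteq\sndlangcol{\symtraces_2}$, then $\interleaving{\fstlangcol{\symtraces_1}}{\sndlangcol{\symtraces_1}}\subseteq\interleaving{\fstlangcol{\symtraces_2}}{\sndlangcol{\symtraces_2}}$. This is immediate from the definition of $\interleaving{}{}$ (an interleaving of a pair $(\fstlangcol{\Trace},\sndlangcol{\Trace})$ with $\fstlangcol{\Trace}\in\fstlangcol{\symtraces_1}$ and $\sndlangcol{\Trace}\in\sndlangcol{\symtraces_1}$ is still an interleaving of the same pair regarded inside the larger sets), so I would dispatch it first.

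For the third bullet I would simply chain three inclusions. Applying the disabling half (\autoref{thm:symb-compthm}, part~2) to the refined system gives $\sbirmixtraces(\SParallel[\dedrelcomb^\mathsf{dis}]{\fstlangcol{M_1}}{\sndlangcol{M_1}})\subseteq\interleaving{\fstlangcol{\symtraces_1}}{\sndlangcol{\symtraces_1}}$; monotonicity of $\interleaving{}{}$ together with the two refinement hypotheses yields $\interleaving{\fstlangcol{\symtraces_1}}{\sndlangcol{\symtraces_1}}\subseteq\interleaving{\fstlangcol{\symtraces_2}}{\sndlangcol{\symtraces_2}}$; and the enabling half (part~1) applied to the abstract system gives $\interleaving{\fstlangcol{\symtraces_2}}{\sndlangcol{\symtraces_2}}\subseteq\sbirmixtraces(\SParallel[\dedrelcomb^\mathsf{ena}]{\fstlangcol{M_2}}{\sndlangcol{M_2}})$. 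Composing the three proves the bullet. The first bullet is then the special case $\dedrelcomb^\mathsf{dis}=\dedrelcomb^\mathsf{ena}=\emptyset$: the empty relation produces no predicates, hence vacuously satisfies both the enabling and the disabling condition, so both halves of \autoref{thm:symb-compthm} apply to it and the same chain runs through $\sbirmixtraces(\SParallel[]{\fstlangcol{M_1}}{\sndlangcol{M_1}})\subseteq\interleaving{\fstlangcol{\symtraces_1}}{\sndlangcol{\symtraces_1}}\subseteq\interleaving{\fstlangcol{\symtraces_2}}{\sndlangcol{\symtraces_2}}\subseteq\sbirmixtraces(\SParallel[]{\fstlangcol{M_2}}{\sndlangcol{M_2}})$.

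The second bullet does \emph{not} reduce to the third, and this is where the real work lies. The combiners $\dedrelcomb^{\top}$, $\dedrelcomb^\mathsf{eq}$ and $\dedrelcombbit$ only ever add $\knowledge{\cdot}$- and $\doteq$-predicates; since these never occur as negative premises of any transition rule, the combiners are \emph{enabling} but not disabling. Hence I can invoke the enabling half only on the abstract (right) side and cannot bound the refined (left) side from above by the interleaving. Instead I would prove directly that, for a \emph{fixed} combiner, $\SParallel[\dedrelcomb]{\cdot}{\cdot}$ is monotone under trace inclusion of its arguments. The argument is a simulation by induction on the length of a run of $\SParallel[\dedrelcomb]{\fstlangcol{M_1}}{\sndlangcol{M_1}}$: I build a run of $\SParallel[\dedrelcomb]{\fstlangcol{M_2}}{\sndlangcol{M_2}}$ with the same trace, maintaining the invariant that its combined predicate set contains that of the refined run. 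Component steps are matched using $\fstlangcol{\symtraces_1}\subseteq\fstlangcol{\symtraces_2}$ and $\sndlangcol{\symtraces_1}\subseteq\sndlangcol{\symtraces_2}$; combiner steps are matched using the fact that each of $\dedrelcomb^{\top},\dedrelcomb^\mathsf{eq},\dedrelcombbit$ is \emph{monotone} in its premise set (enlarging $\comppredset$ preserves derivability, since the side conditions such as $z\in\mathit{symbols}(y)$ and $\fstlangpred{const}~c$ are all stable under adding predicates), so every combiner step available on the left remains available on the right under the inclusion invariant.

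The main obstacle is precisely the coupling introduced by the shared combiner: because $\dedrelcomb$ reads the \emph{combined} predicate set $\comppredset$, I cannot refine the two components independently and reassemble them — I must carry the predicate-set inclusion through the whole induction. The delicate point is that realizing $\fstlangcol{\symtraces_1}\subseteq\fstlangcol{\symtraces_2}$ (and likewise for the second component) must be done \emph{predicate-monotonically}, i.e.\ the abstract run matching a given refined trace must reach projections $\proj{\colone}{\comppredset}$ and $\proj{\coltwo}{\comppredset}$ that still contain the refined ones, so that the invariant survives each component step and the combiner steps keep firing. I would isolate this as a lemma about the simulation witnessing component refinement and then feed it into the inductive step; the bookkeeping over the two projections is exactly what makes this case live in its own development rather than follow from the correctness theorem.
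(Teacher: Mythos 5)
Your handling of the first and third bullets is correct and coincides with the paper's own route: the third bullet is exactly the chain consisting of the disabling half of \autoref{thm:symb-compthm} applied to $\SParallel[\dedrelcomb^\mathsf{dis}]{\fstlangcol{M_1}}{\sndlangcol{M_1}}$, monotonicity of $\interleaving{}{}$ in both arguments under the two trace-inclusion hypotheses, and the enabling half applied to $\SParallel[\dedrelcomb^\mathsf{ena}]{\fstlangcol{M_2}}{\sndlangcol{M_2}}$; the first bullet is the degenerate instance in which the empty relation, producing no predicates, is vacuously both enabling and disabling (the paper mechanizes that case in a separate HOL4 development, but the mathematical content is the same). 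Your structural diagnosis of the second bullet is also right: $\dedrelcomb^{\top}$, $\dedrelcomb^\mathsf{eq}$ and $\dedrelcombbit$ produce $\knowledge{\cdot}$- and equality predicates that can enable transitions such as \rAP{}, so they are enabling but not disabling, the chain through interleavings is unavailable, and indeed the paper proves this bullet in its own mechanized development (a separate script) rather than as a corollary of the correctness theorem.

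The gap is in the simulation you sketch for that second bullet. The invariant you need---that the abstract composed run reaches a predicate set containing that of the refined run---cannot be established from the lemma's hypotheses, because trace inclusion constrains only the event sequences a component can emit and says nothing about the predicates recorded along the runs realizing them, which is precisely what the combiner reads. Concretely, let $\fstlangcol{M_1}$ and $\fstlangcol{M_2}$ both emit $\PtoA{x}$, then a non-synchronizing event, and then accept any input $\AtoP{z}$; let $\fstlangcol{M_1}$ additionally record $\progEqual{x}{y}$ (introducing the symbol $y$) at the middle step, while $\fstlangcol{M_2}$ records nothing. Their trace sets coincide, yet composing each with the DY attacker under $\dedrelcomb^\mathsf{eq}$ separates them: on the left the attacker derives $\dyEqual{y}{y}$ by reflexivity of $=_E$, the combiner then yields $\dyEqual{x}{y}$, \rSubst{} gives $\knowledge{y}$ from $\knowledge{x}$, and the composition emits $\AtoP{y}$ (this is exactly the deduction pattern of the paper's transferable-equalities example); on the right no equality predicate ever exists in the program's set, the combiner never fires, $\knowledge{y}$ is not derivable, and $\AtoP{y}$ is impossible. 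So the ``predicate-monotonic matching'' lemma you defer is not bookkeeping to be discharged later: for arbitrary symbolic LTS related only by trace inclusion it is false, and your induction cannot start without replacing the component hypothesis by a predicate-aware one (a simulation whose matching abstract run carries a superset of the refined run's predicates, or a notion of component trace that records predicate updates). That strengthened relation is what your inductive invariant must be---and what the mechanized proof must in effect be conducted over---so the point where you ``isolate this as a lemma'' is precisely where your argument, as written, fails.
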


In~\theappendixorextended{sec:dy-lib-comp}, we instantiate~\autoref{thm:symb-compthm} to enable merging and splitting DY libraries containing the same or distinct function signatures, as protocol parties often utilize different implementations for crypto libraries.

\subsection{Refinement}\label{subsec:refinement}

While \autoref{thm:symb-compthm} and \autoref{thm:symb-comprefin} are used
throughout our proof in \autoref{sec:case-study}, we need an additional
theorem to carry the analysis to the concrete system
semantics. This follows from the fact that both theorems only make statements about 
symbolic semantics traces ($\symtraces $) instead of
concrete semantics traces ($\contraces$). We, thus, need to relate the two.

Symbolic execution semantics are usually defined sound using
a refinement relation, \revised{\infruleref{Rb-m1}{RB,M1}}{which we denote as $\sqsubseteq$.
To define it, we have to assume that we have a way to apply 
a (component-specific) \emph{interpretation function},
i.e., a function $ \interpret$  from symbolic variables to concrete values~\newrevised{(e.g., the function $ H $ utilized in~\cite{lindner2023proofproducingsymbolicexecutionbinary})},
to a symbolic trace. E.g., let $\fstlangcol{\interpreterapply}$
denote this application,
$\fstlangcol{\contrace \sqsubseteq \symtrace \iff \exists \interpret .  \ \contrace
= \interpreterapply(\symtrace, \interpret)}$ and likewise for $\sndlangcol{\sqsubseteq}$.}
With this notation, we describe how refinement transfers to
the composed system.

\begin{theorem}[Refinement]
	\label{thm:traceinc}
	For any \newrevised{enabling} combined deduction relation $\dedrelcomb^\mathsf{ena}$,
	any concrete LTS $\fstlangcol{M_c}$ and $ \sndlangcol{M_c} $, 
	any symbolic LTS $ \fstlangcol{M_s} $ and $ \sndlangcol{M_s} $, 
	we have
	\begin{center}
		\begin{prooftree}	
			\hypo{\begin{matrix}
					\fstlangcol{\contraces}(\fstlangcol{M_c}) \ \fstlangcol{\sqsubseteq} \ \fstlangcol{\symtraces}(\fstlangcol{M_s})
			\end{matrix}}
			\hypo{\begin{matrix}
					\sndlangcol{\contraces}(\sndlangcol{M_c}) \ \sndlangcol{\sqsubseteq} \ \sndlangcol{\symtraces}(\sndlangcol{M_s})
			\end{matrix}}
			\infer2[]{ \conmixtraces(\CParallel{\fstlangcol{M_c}}{\sndlangcol{M_c}}) 
                           \sqsubseteq
             \sbirmixtraces(\SParallel[\dedrelcomb^\mathsf{ena}]{\fstlangcol{M_s}}{\sndlangcol{M_s}})}
		\end{prooftree}
	\end{center}
       \revised{\infruleref{Rb-m1}{RB,M1}}{where 
        $ \sqsubseteq $
        is defined from 
        $\fstlangcol{\interpreterapply}$
        and
        $\sndlangcol{\interpreterapply}$ \newrevised{according to~\refappendixorfull{def:ref-comp} in~\theappendixorfull{sec:comp-refine}.}
    }
\end{theorem}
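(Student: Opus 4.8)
The plan is to reduce the composed statement to the two per-component premises by projecting the concrete composed trace, lifting each projection symbolically, and then reassembling using the enabling half of~\autoref{thm:symb-compthm}. Since the goal is a set-level refinement, I would fix an arbitrary $\conmixtrace \in \conmixtraces(\CParallel{\fstlangcol{M_c}}{\sndlangcol{M_c}})$ and show it refines some symbolic composed trace. First I would project $\conmixtrace$ onto the two components of the concrete CSP composition (the easy ``decomposition'' direction of classical parallel composition, cf.~\autoref{lts}), obtaining $\fstlangcol{\contrace} \in \fstlangcol{\contraces}(\fstlangcol{M_c})$ and $\sndlangcol{\contrace} \in \sndlangcol{\contraces}(\sndlangcol{M_c})$ together with the interleaving pattern (the positions, and in particular which positions are synchronization points where the two concrete events coincide).

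By the first premise $\fstlangcol{\contraces}(\fstlangcol{M_c}) \ \fstlangcol{\sqsubseteq}\ \fstlangcol{\symtraces}(\fstlangcol{M_s})$ there exist $\fstlangcol{\symtrace} \in \fstlangcol{\symtraces}(\fstlangcol{M_s})$ and an interpretation $\fstlangcol{\interpret}$ with $\fstlangcol{\contrace} = \fstlangcol{\interpreterapply}(\fstlangcol{\symtrace},\fstlangcol{\interpret})$, and symmetrically the second premise yields $\sndlangcol{\symtrace} \in \sndlangcol{\symtraces}(\sndlangcol{M_s})$ and $\sndlangcol{\interpret}$ with $\sndlangcol{\contrace} = \sndlangcol{\interpreterapply}(\sndlangcol{\symtrace},\sndlangcol{\interpret})$. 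Because the interpretation acts event-by-event, $\fstlangcol{\symtrace}$ and $\sndlangcol{\symtrace}$ have the same event positions as $\fstlangcol{\contrace}$ and $\sndlangcol{\contrace}$, so I can reuse the interleaving pattern read off from $\conmixtrace$ to build a symbolic interleaving $\sbirmixtrace$ of $\fstlangcol{\symtrace}$ and $\sndlangcol{\symtrace}$. Communication in the symbolic semantics is carried by the shared symbol space (\autoref{def:symbolic-parallel-composition}), so at each synchronization point both symbolic components reference the same symbol and the lifted interleaving is well-formed; hence $\sbirmixtrace \in \interleaving{\fstlangcol{\symtraces}(\fstlangcol{M_s})}{\sndlangcol{\symtraces}(\sndlangcol{M_s})}$. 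The enabling case of~\autoref{thm:symb-compthm} then places this interleaving inside the composed trace set, giving $\sbirmixtrace \in \sbirmixtraces(\SParallel[\dedrelcomb^\mathsf{ena}]{\fstlangcol{M_s}}{\sndlangcol{M_s}})$, since with an enabling combiner the composed traces are a superset of the plain interleaving.

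It remains to produce a single combined interpretation (as prescribed by the definition of $\sqsubseteq$ for the composition) sending $\sbirmixtrace$ to $\conmixtrace$. I would take it to be the union $\fstlangcol{\interpret} \cup \sndlangcol{\interpret}$, applying $\fstlangcol{\interpreterapply}$ to the first projection of $\sbirmixtrace$ and $\sndlangcol{\interpreterapply}$ to the second; on the non-synchronizing positions this reproduces $\fstlangcol{\contrace}$ and $\sndlangcol{\contrace}$ in the chosen order, hence $\conmixtrace$. \emph{The main obstacle is consistency at synchronization points}: a shared symbol $x$ there is interpreted by $\fstlangcol{\interpret}$ on the left and by $\sndlangcol{\interpret}$ on the right, and for the two projections to glue into $\conmixtrace$ these must agree. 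I would discharge this by reading off the concrete interleaving: the corresponding synchronizing events of $\fstlangcol{\contrace}$ and $\sndlangcol{\contrace}$ are equal in $\conmixtrace$, and they arise by interpreting the same symbol $x$ under $\fstlangcol{\interpret}$ and $\sndlangcol{\interpret}$ respectively, forcing $\fstlangcol{\interpret}(x) = \sndlangcol{\interpret}(x)$. The internally generated symbols of each component are distinct by the monotone symbol tracking and freshness of~\autoref{def:symbolic-parallel-composition}, so the domains of $\fstlangcol{\interpret}$ and $\sndlangcol{\interpret}$ overlap only on these shared communication symbols, where we have just shown agreement; thus $\fstlangcol{\interpret} \cup \sndlangcol{\interpret}$ is a well-defined function.

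Applying this combined interpretation to $\sbirmixtrace$ therefore yields exactly $\conmixtrace$, so $\conmixtrace$ refines $\sbirmixtrace$. As $\conmixtrace$ was arbitrary, this establishes the set-level refinement $\conmixtraces(\CParallel{\fstlangcol{M_c}}{\sndlangcol{M_c}}) \sqsubseteq \sbirmixtraces(\SParallel[\dedrelcomb^\mathsf{ena}]{\fstlangcol{M_s}}{\sndlangcol{M_s}})$, which is the claim. The only nontrivial ingredients beyond bookkeeping are the projection/reconstruction correspondence (handled by the concrete composition's projection property together with the enabling half of~\autoref{thm:symb-compthm}) and the interpretation-agreement argument at synchronizing events sketched above.
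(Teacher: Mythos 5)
Your proposal follows essentially the same route as the paper's proof: decompose the concrete composed trace into its two component traces via concrete composition correctness (the paper invokes \autoref{thm:conccompthm} for this), lift each component trace using the two premises, reassemble the symbolic interleaving and place it inside the composed trace set via the enabling case of \autoref{thm:symb-compthm}, and finally instantiate interpretation functions to witness the composed refinement. The only divergence is your last step: the paper's \autoref{def:ref-comp} defines the composed $\sqsubseteq$ on the two trace projections with \emph{separate} interpretation functions $\fstlangcol{\interpret_{_1}}$ and $\sndlangcol{\interpret_{_2}}$, so your projection-plus-premises argument already closes the proof, and the union construction with the agreement argument at synchronization points is unnecessary extra work --- indeed it is the only part of your argument that would require additional assumptions on how $\fstlangcol{\interpreterapply}$ and $\sndlangcol{\interpreterapply}$ act on individual events (event-wise action, recoverability of $\interpret(x)$ from an interpreted event), which the paper deliberately avoids by never merging the two interpretations.
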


In~\autoref{thm:traceinc}, the enabling deduction relation is to ensure broader coverage of behaviors during symbolic execution compared to concrete execution.

\begin{proof}
        From the left-hand side, we apply a concrete variant of 
        \autoref{thm:symb-compthm} (\refappendixorfull{thm:conccompthm} in~\theappendixorfull{sec:conc-thms}) to describe the composed concrete system via
        interleaving.
        From the right-hand side, we apply \autoref{thm:symb-compthm} \newrevised{(case 1)} itself, 
        to obtain a similar interleaving, but of the composed
        symbolic system.
        We then use 
       \newrevised{the refinements $ \fstlangcol{\sqsubseteq} $, $ \sndlangcol{\sqsubseteq} $, and $ \com\sqsubseteq $ and instantiate the interpretation functions in $ \com\sqsubseteq $ to map the composed traces in the concrete domain to those in the symbolic domain.
    }
	See~\href{https://github.com/FMSecure/CryptoBAP/tree/main/HolBA/src/tools/parallelcomposition/refinement/refinementScript.sml#L66}{\itshape\underline{Refinement}} for proof mechanization in HOL4.
\end{proof}

We avoid communication between symbolic and concrete components
by avoiding hybrid systems altogether, which is why we need 
both \autoref{thm:symb-comprefin} (for abstraction within the symbolic domain)
and \autoref{thm:traceinc} (for abstraction from the concrete to the
symbolic domain).

\revised{\infruleref{Rb-m1}{RB,M1}}{
	The reader may wonder if the interpretation function $ \interpret $,	
	used in the definition of $ \sqsubseteq $,
	would also constitute a translation of the kind we criticized.
	We criticize the implication of a translation at the object level,
	i.e., translation \emph{within} the system between concrete
	programs and symbolic DY attackers \emph{in the same system}.
	By contrast, the interpretation function resides at the proof
	level rather than the object level, and (the existence of it) is
	merely a constraint that the symbolic execution is a consistent
	abstraction.
	Concretely, it can be created on the fly and it is not required to
	be computable or consistent across multiple (symbolic) executions.
}

\section{Instantiations of the Framework}
\label{sec:case-study}
We instantiate our framework with different languages: 
(a) ARMv8 and RISC-V for verifying implementations of real-world protocols, 
(b) \Sapic for modeling parties from the specification, and 
(c) DY rules for specifying our threat model.
\revised{\infruleref{Rc-w1}{RC,W1}}{
We will consider a case study (WireGuard, see below)
where we extract both parties, client and server, from ARMv8 binaries.
We will also consider a case study (TinySSH), where the ARMv8 binary
describes only the server, and the client behavior is described in
\Sapic, thereby modeling an unknown SSH client implementation that
follows the specification~\cite{rfc4253}. Both cases include a DY
attacker, and the second mixes a machine-code language (case a) with
a specification language (case b).
Both cases are relevant, as protocol standards may not always be
available, for instance, if the protocol is not widely used or not yet
standardized. Vice-versa, protocol standards can be ambiguous or
overly-general, so it can be interesting to consider a particular
implementation. 
}

To derive the \Sapic model of the protocols' parties from their binary implementations, we transform their machine code into \birsymb{}, symbolically execute them, and translate the resulting execution trees into (a subset of) \Sapic.
This section demonstrates how the theorems presented so far simplified the end-to-end proof, enabling us to mechanize it.
Finally, we \revised{\infruleref{Rc-m3}{RC,M3}}{prove mutual authentication and forward secrecy in the symbolic model for} the TinySSH and WireGuard protocols to evaluate our framework.

\subsection{Intermediate representations}
\subsubsection{\bf The \birsymb{} Representation}
\label{subsec:bir}
\begin{figure}
\adjustbox{varwidth=\linewidth,scale=0.95}{%
\begin{equation*} 
\begin{split}
	\birprog \in \fstlangcol{{prog}} & \bnfdef  \fstlangcol{{block}}^* \\
	\fstlangcol{{block}} & \bnfdef  ({\var{v},\fstlangcol{{stmt}}^*}) \\
	\var{v} \in \birval & \bnfdef  \mathit{string} \bnfsep \mathit{int}\\
	\fstlangcol{{stmt}}  & \bnfdef \fstlangcol{halt} \bnfsep \fstlangcol{jmp}(\var{e}) \bnfsep \fstlangcol{cjmp}(\var{e},\var{e},\var{e})\\  & \ \ \ \ \ \ \bnfsep \fstlangcol{assign}(\mathit{string}, \fstlangcol{e}) 
\\	\var{e} \in \birexp & \bnfdef  \var{v}\;  \bnfsep \fstlangcol{var} \ \mathit{string}\;  \bnfsep  \fstlangcol{\Diamond_u} \; \var{e}\; \bnfsep \var{e} \; \fstlangcol{\Diamond_b} \; \var{e}  \\ 
\end{split}
\end{equation*}
}
\caption{A fragment of the $\birsymb$ syntax}
\label{fig:birSyntax}
\end{figure}

We use HolBA~\cite{DBLP:journals/scp/LindnerGM19}---\newrevised{a binary} analysis platform in HOL4---to transpile the protocols' binary into the \emph{binary intermediate representation} ($\birsymb$). $\birsymb$ is a simple and architecture-agnostic language designed to simplify the analysis tasks and is used as the internal language of HolBA to facilitate building analysis tools.
The $ \birsymb $  transpiler is verified and generates a certifying theorem that guarantees that the semantics of the binary is preserved (see \cite[Thm.~2]{DBLP:journals/scp/LindnerGM19}); this ensures that the analysis results on $\birsymb$ can be transferred back to the binary.
\autoref{fig:birSyntax} shows the $ \birsymb $ syntax. A $\birsymb$ program $\birprog$ includes a number of blocks, each consisting of a tuple of a unique label (i.e., a string or an integer) and a few statements. 
The label of $ \birsymb $ blocks is often used as the target of jump instructions---\fstlangcol{jmp} or \fstlangcol{cjmp}---and refers to a particular location in the  program.
\revised{\infruleref{Ra-w1.b}{RA,W1.b}}{The $ \fstlangcol{assign} $ statement  assigns the evaluation of a $\birsymb$ expression to a variable,} and $\fstlangcol{halt}$ indicates the execution termination. 
$\birsymb$ expressions include constants, standard binary, and unary operators (ranged over by $\fstlangcol{\Diamond_b}$ and $\fstlangcol{\Diamond_u}$) for finite integer arithmetic.
$\birsymb$ expressions also include memory operations and conditionals, which we leave out to simplify the presentation and because they are unnecessary in our evaluation.

We use a proof-producing symbolic execution for $ \birsymb $~\cite{lindner2023proofproducingsymbolicexecutionbinary} that formalizes the symbolic generalization of $\birsymb$ (hereafter $ \sbirsymb $) to find all execution paths of the program.
The symbolic semantics aligns with the concrete semantics, enabling guided execution while ensuring a consistent set of reachable states from an initial symbolic state.
The set of $ \sbirsymb $ events is the disjoint union of the set of non-synchronizing events and the set of synchronizing events.
The set of synchronizing events encompasses $ \Syncfresh{n} $ for the secret name $ n \in \privnames $, $ \AtoP{x} $ and $ \PtoA{x} $ for the symbol $ x \in \symbset $, and $ \fcall{f,x_1,\ldots,x_n,y} $ for the function symbol $ f \in \Funcs^n $ and symbols $ x_1,\ldots,x_n,y \in \symbset$. 
The set of non-synchronizing events includes $ \event{e} $ to indicate the release of a visible event $ e $, $ \Loopsym $ to denote initiating a loop, \revised{\infruleref{Ra-w1.b}{RA,W1.b}}{and $ \assign{x}{\fstlangcol{e}}$ to signify assigning the $\birsymb$ expression $ \fstlangcol{e} $ to the symbol $ x $.}
Moreover, a sequence of events $ \sbirevent_{\fstlangcol{1}},\dots, \sbirevent_{\fstlangcol{m}} $ signifies a $ \sbirsymb $ trace $ \sbirTrace \in \sbirtraces $ such that $ \sbirTrace = \sbirevent_{\fstlangcol{1}},\dots, \sbirevent_{\fstlangcol{m}} $.

\input{sections/BIR-Exp4.tex}
\revised{\infruleref{Ra-w1.b}{RA,W1.b}}{
	\autoref{fig:bir-exp4} illustrates a sequence of $ \birsymb $ statements of the program in~\Cref{ex:masked-encryption}. Note that the  $ \birsymb $ representation is simplified w.r.t.\  to the implementation in HolBA.
	When our symbolic execution engine evaluates each of these $ \birsymb $ statements, a logical predicate may be added into the $ \sbirsymb $ predicate set, denoted as $ \sbirpred $, and an $ \sbirsymb $ event arises \newrevised{(i.e., in the exact way that the \CryptoBap{} performs symbolic execution~\cite[Sec.\ 5]{nasrabadi2023cryptobap}).}
	For example, an equality predicate, represented as $ \progEqual{\!\!}{\!\!} $, is added to the $ \sbirsymb $ predicate set as a result of processing the $ \fstlangcol{assign} $ statement. 
	Additionally, the DY attacker’s predicate set (i.e., $ \knowledgeset $) is updated due to the combined deduction relation $ \dedrelAvsL $ and synchronization (\autoref{tab:events} summarizes synchronization events).} 
	  \revised{\infruleref{Ra-w2}{RA,W2}}{The parallel composition of \sbirsymb{} and the DY attacker employs the combined deduction relation $ \dedrelcombSbirDY $, which represents a specialized variant of the combined deduction relation $ \dedrelcombbit $ for \sbirsymb{}, as presented below:            
		\begin{multline*} 
			\label{comb-ded-sbir-dy}\tag{bit$'$}
			\sbirpred \uplus \knowledgeset
			\dedrelcombSbirDY
			\knowledge{z} 
			\iff  \exists \ x, y, w. \\
			\knowledge{y} \in \knowledgeset
			\land
			(\progEqual{y}{x \ \fstlangcol{\Diamond_b} \ w}) \in \sbirpred \ 
			\land  \\
			z \in \big(\mathit{symbols}(x) \cup \mathit{symbols}(w)\big)
	\end{multline*}                
	\newrevised{As shown in the last column of ~\autoref{fig:bir-exp4},} the DY attacker gains further logical facts by using the deduction combiner $ \dedrelcombSbirDY $ together with the DY and $ \sbirsymb $ predicate sets. \newrevised{We use a number next to each piece of knowledge, to indicate in which order these facts are acquired in our running example. }
}

\subsubsection{\bf The \Sapic{} Representation}\label{sec:sapic+}
\begin{figure}[t]
\adjustbox{varwidth=\linewidth,scale=0.95}{%
\centering	
	\[
	\begin{array}{ll ll}
		
		\langle \sndlangcol{P},\sndlangcol{Q}\rangle~::= \\[2mm]
		~~~~~  \sndlangcol{0}  &
		~~|~~\sndlangcol{!P}\\
	
		~~|~~\pin(x);~\sndlangcol{P}  &
		~~|~~\sndlangcol{P \mid Q} \\
		
		~~|~~\pout(x);~\sndlangcol{P} &
		~~|~~\pndc{\sndlangcol{P}}{\sndlangcol{Q}} \\
			
		~~|~~\pevent~{e};~\sndlangcol{P} & 
		~~|~~\pnew~n;~\sndlangcol{P} \\
	
		~~|~~ \plet~t_1 = t_2~\pin~\sndlangcol{P}~\pelse~\sndlangcol{Q} &
		
	\end{array}
	\]
 }
	\caption{A fragment of the syntax of \Sapic process calculus.
		In this figure,
	$e, t_1,t_2 \in \Term$, 
	$n \in \privnames$, 
	$x \in \Vars$.
}
	\label{fig:SAPICsyntax}
\end{figure}

\Sapic is an applied pi calculus similar to \ProVerif and \SapicOriginal~\cite{kremer2016automated}.
\Sapic extends \SapicOriginal with the definition of destructors, $ \plet $ bindings with pattern matching and $ \pelse $ branches.
\Sapic provides a common language that (soundly) translates to both \ProVerif and
\Tamarin~\cite{cheval2022sapic+}.
\autoref{fig:SAPICsyntax} presents a fragment of the \Sapic{'s} syntax.
The $\pnew$ construct creates fresh values, and 
$\pin$ and $\pout$ receives and sends messages over the channel.\reversemarginpar
\revised{\infruleref{Ra-w1.e}{RA,W1.e}}{
The $ \pevent $ construct raises events that security
properties can
refer to, but otherwise does not change the execution. 
They will be used to capture event functions.
}\normalmarginpar
\revised{\infruleref{Ra-w2}{RA,W2}}{
    \Sapic contains the non-deterministic choice
    operator, denoted as $\pndc{\!\!}{\!\!}$ and introduced
    in~\cite{backes2017novel}.
    A process $\sndlangcol{P + Q}$ can either move as if it were $\sndlangcol{P}$, or as if it
    were $\sndlangcol{Q}$. 
}
$\Sapic$ syntax includes \textit{stateful} processes ~\cite{cheval2022sapic+} that manipulate globally shared states, i.e., some database, register or memory that can be read and altered by different parallel threads.
As we skipped the model extraction of memory manipulation primitives, the stateful processes are omitted in~\autoref{fig:SAPICsyntax}.

\SapicOld has the same syntax as 
\Sapic, but its semantics remove the DY attacker:
instead of invoking \Sapic's internal DY deduction relation,
communication (\pin, \pout) in \SapicOld emits events ($\AtoPsym$, $\PtoAsym$)
that synchronizes with an outside attacker.
\revised{\infruleref{Ra-w3.b}{RA,W3.b}}{
    Since \Sapic uses the event $\Kevent$ to signify messages coming
    from the attacker instead of
    $ \AtoPsym$, we use $\tracetrans{\cdot}$ to translate between trace
(sets) of $\Sapic$ and $\SapicOld$/\sbirsymb.
    Besides $\Kevent$, security properties in \Sapic can only refer to
    events in the process, which $\tracetrans{\cdot}$ keeps the same.
For a given \SapicOld process $\sndlangcol{P}$, $\sapictraces(\sndlangcol{P})$ denotes the set of all possible symbolic traces generated by process $ \sndlangcol{P} $.}
We define a \SapicOld trace $ \sapicTrace \in \sapictraces $ as a sequence of events such that $ \sapicTrace = \sapicevent_{_\sndlangcol{1}} \dots \sapicevent_{_\sndlangcol{m}} $.
In \autoref{sec:end-to-end},
we combine $\SapicOld$ with the DY attacker and library to \Sapic.
As a by-product, this
shows the correctness of both w.r.t.\ the DY 
semantics in \Sapic (which are \newrevised{quite} standard).

\subsection[From SBIR To SAPIC-]{From \sbirsymb{} To \SapicOld}
\label{subsec:sbir-sapic}
\begin{figure}
\adjustbox{varwidth=\linewidth,scale=0.92}{%
\begin{equation*} 
	\begin{aligned}[t]
\begin{split}
  \tree& = \!\leafnode \!\bnfsep \!\node(\birpc,\nodeevent)\bnfconcat\tree'\!\bnfsep \branchingnode(\birpc,\nodecond, \fstlangcol{T_1}, \fstlangcol{T_2}) \ \text{event tree} \\
  & \sbirtoiml{\leafnode} \hspace{1.4cm} \mapsto\ \hspace{0.1cm} \sndlangcol{0}\\
  & \sbirtoiml{\node(\birpc,\nodeevent)\ \bnfconcat\tree'} \bnfdef \hspace{1.4cm} \text{events nodes} \\
  &
 \begin{array}{lll}
  \sbirtoiml{\node(\birpc,\event{{e}})\bnfconcat\tree'} & \mapsto & \pevent \ {e};\sbirtoiml{\tree'}\\
  \sbirtoiml{\node(\birpc,\AtoP{x})\bnfconcat\tree'} & \mapsto & \pin (x);\!\sbirtoiml{\tree'}\\
  \sbirtoiml{\node(\birpc,\PtoA{x})\!\bnfconcat\!\tree'}\! & \mapsto & \pout(x);\!\sbirtoiml{\!\tree'}\\
 \sbirtoiml{\node(\birpc, \\ \fcall{f,x_1,\ldots,x_n,y})\bnfconcat\tree'} & \mapsto & \plet \ y=f(x_1,\ldots,x_n) \\ && \pin \ \sbirtoiml{\tree'} \ \pelse\ \sndlangcol{0} \\
  \sbirtoiml{\node(\birpc, \assign{x}{\fstlangcol{e}})\bnfconcat\tree'} & \mapsto & \plet \ x = \sbirtoiml{\fstlangcol{e}} \ \pin \ \sbirtoiml{\tree'}  \\
  \sbirtoiml{\node(\birpc,\Syncfresh{n})\bnfconcat\tree'} & \mapsto & \pnew \ {n};\ \sbirtoiml{\tree'}\\
   \sbirtoiml{\node(\birpc,\Loopsym)\bnfconcat\tree'} & \mapsto & \sndlangcol{!} \ \sbirtoiml{\tree'}\\
\end{array}\\
  & \sbirtoiml{\branchingnode(\birpc,\nodecond, \fstlangcol{T_1}, \fstlangcol{T_2})}  \hspace{0.91cm}  \mapsto \hspace{0.3cm}  \pndc{\sbirtoiml{\fstlangcol{T_1}}}{\sbirtoiml{\fstlangcol{T_2}}} \\
  & \sbirtoiml{\var{e} \in \birexp} \bnfdef \hspace{2.8cm} \text{\birsymb{} expressions} \\
  &
 \begin{array}{lll}
  \sbirtoiml{\var{b} \in \birval}  & \mapsto & \constrans{\com{b}} \in \pubnames \\
  \sbirtoiml{\fstlangcol{var} \ x }  & \mapsto & {x}\in \Vars\\
  \sbirtoiml{\fstlangcol{\phi_1 \Diamond_b \phi_2 }}  & \mapsto & \sbirtoiml{\fstlangcol{\phi_1}}\sbirtoiml{\fstlangcol{\Diamond_b}}\sbirtoiml{\fstlangcol{\phi_2}} \hspace{0.6cm} \text{Binary operations}\\
  \sbirtoiml{\fstlangcol{\Diamond_b}} &\mapsto & \left\{\begin{array}{ll}                                     
                                         =   & \fstlangcol{Equal} \\
                                          \functionsymbol{plus}, \functionsymbol{mult}, \ \dots & \fstlangcol{Plus}, \fstlangcol{Mult}, \ \dots
                                        \end{array}
                                 \right.\\
  \sbirtoiml{\fstlangcol{\Diamond_u \phi'}}          & \mapsto & \sbirtoiml{\fstlangcol{\Diamond_u}}\sbirtoiml{\fstlangcol{\phi'}} \hspace{1.4cm} \text{Unary operations}\\
  \sbirtoiml{\fstlangcol{\Diamond_u}} &\mapsto & \left\{\begin{array}{ll}
                                        \neg            & \fstlangcol{Not} \\ 
                                        \bot            & \text{otherwise}
                                        \end{array}
                                 \right.     \\                        
\end{array}    \\   
\end{split}
\end{aligned}
\end{equation*}
}
\caption{Translating execution tree $\tree$ to $\SapicOld$ model:  $ {e}, {x}, x_1,\ldots,x_n,y \in \symbset $ are symbols, $ n \in \privnames $ is a secret name, and $ \functionsymbol{plus}, \functionsymbol{mult}, f \in \Funcs^n $ are function symbols. Observe that $  \neg ({t_1} = {t2} ) = (\Notequal{t_1}{t_2}) $ for $  t_1,t_2 \in \Term $.
} 
\label{fig:sbirtosapic}
\end{figure}

Using symbolic execution, we derive the execution tree $ \tree $ of a $ \birsymb $ program, which is used to extract the $ \SapicOld $ model. 
The leaves in $ \tree $ are due to the $ \birsymb $ $ \fstlangcol{halt} $ statement that marks the end of a complete path.
A node in $ \tree $ is either a branching node 
$\branchingnode(\birpc, \nodecond, \tree_{\fstlangcol{1}}, \tree_{\fstlangcol{2}})$, where  $\birpc $ locates the conditional statement in the program, $\nodecond$ is the condition, and $\tree_{\fstlangcol{i}}$ are the sub-trees for $ \fstlangcol{i} \in \{\fstlangcol{1}, \fstlangcol{2}\}$; or 
an event node $\node(\birpc, \nodeevent)\bnfconcat\tree'$ with the sub-tree $ \tree' $ and $ \birpc $ specifying where the event $ \nodeevent $ occurred. 
In $\tree$, an edge connects two nodes if they are in the transition relation.

We construct $ \tree $ from a $\birsymb$ program and an initial symbolic state, with the root representing the initial state.
The symbolic execution provides us with up to two successor states for any node.
We obtain two successor states if the node represents a branching statement (i.e., \fstlangcol{cjmp}).
In such cases, the condition of the statement is stored in a branching node, and we proceed to construct subtrees from the two successor states.
If the node represents any other statement,
an event node is recorded with one or no successor tree.

The protocol model is obtained by translating $\tree$ into its $ \SapicOld $ model.
We translate $\tree$ 
using the rules in~\autoref{fig:sbirtosapic}
to 
$\sbirtoiml{\tree}$.
We translate leaves into a \textit{nil} process $\sndlangcol{0}$, and the event $ \nodeevent $ from the event nodes into their corresponding $ \SapicOld $ construct.
\revised{\infruleref{Ra-w2}{RA,W2}}{
The branching nodes of $\tree$ (i.e., $\branchingnode(\birpc, \nodecond, \tree_{\fstlangcol{1}}, \tree_{\fstlangcol{2}})$) are translated into a non-deterministic choice ($\pndc{\!\!}{\!\!}$) between (the translation of) both possible paths.
If these branches are not already pruned by symbolic execution, there
might still be bit-level conditions that are relevant for the protocol
verifier, but not sufficient to prune the branch during symbolic
execution. 
While we did not encounter this case, it would be possible to translate to
($\pndc{\pevent \ E_1; \sbirtoiml{\fstlangcol{\tree_1}}}{\pevent \ E_2; \sbirtoiml{\fstlangcol{\tree_2}}}$) for $E_1,E_2$ some unique events.
As \Sapic supports restricting the trace set based on formulas, we 
can reflect necessary conditions that are expressible in these tools.
For instance, if the condition was $x\fstlangcol{\oplus}y\fstlangcol{=\!0}$, we may require that the occurrence of $E_1 $,
i.e., a traversal into the positive branch, entails that $y
\sndlangcol{\neq}  x \sndlangcol{+1}$, if that helps exclude
a false attack.
In all our case studies, most paths are pruned by our symbolic execution
engine and the remaining not require such a refinement.
Nevertheless,
this feature would be easy to add (and prove correct for any
condition entailed by the combined deduction relation).
}

\revised{\infruleref{Rc-w5}{RC,W5}}{In order to illustrate the methodology employed for model extraction, the extracted \SapicOld process of the \birsymb{} program from Example~\autoref{ex:masked-encryption} is presented in~\autoref{fig:bir-exp4}.}
\revised{\infruleref{Rc-w3}{RC,W3}}{For technical details about the lifting of the binary and the symbolic execution, we refer to the \CryptoBap~\cite{nasrabadi2023cryptobap} paper, which has introduced this method for model extraction from the binaries, but without a mechanized proof. 
Our focus here is the end-to-end proof, which builds on the framework from
the previous section,
and the wider range of target backends provided by \Sapic.
}

\subsection{Translation correctness}
\label{subsec:theorems}

To enable transferring verified properties from the $ \Sapic $ level back to $\birsymb$ and then to the protocols' binary, it is essential to prove that the extracted $ \SapicOld $ model preserves the behaviors of the $\sbirsymb$ representation.
To this end, we establish a proof that for every path in the symbolic execution tree $ \tree $, there exists an equivalent $ \SapicOld $ trace derived from executing translated process $\sbirtoiml{\fstlangcol{\tree}} $.

\begin{theorem}[Trace Inclusion]
	\label{thm:traceinc:sbir-sapic}
\revised{\infruleref{Ra-w3.b}{RA,W3.b}}{	
    Let $\fstlangcol{\tree}$ be a $\sbirsymb$ execution tree. 
    Then,
	all translated $\sbirsymb$ traces of $\fstlangcol{\tree}$, $\tracetrans{\sbirtraces(\fstlangcol{\tree})} $, are included 
    in the traces of the translated $ \SapicOld $ process $\sapictraces(\sbirtoiml{\fstlangcol{\tree}})$.
}
\end{theorem}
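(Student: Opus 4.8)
The plan is to proceed by structural induction on the execution tree $\fstlangcol{\tree}$, mirroring the syntax-directed translation $\sbirtoiml{\cdot}$ of \autoref{fig:sbirtosapic}. Concretely, I would show that every path through $\fstlangcol{\tree}$ (and every reachable prefix of such a path) gives rise, after applying $\tracetrans{\cdot}$, to a reduction sequence of $\sbirtoiml{\fstlangcol{\tree}}$ that emits exactly that trace. Because the $\SapicOld$ operational semantics threads a binding environment (from $\plet$ and $\pnew$) whereas the tree already fixes the symbols $\fstlangcol{e}, x, x_1,\ldots \in \symbset$, I would strengthen the induction hypothesis to a simulation statement: for each node reached after a given trace prefix, the corresponding residual $\SapicOld$ process is reachable with a substitution that agrees with the symbolic bindings recorded so far. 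This strengthening is what lets the inductive step on the subtrees $\fstlangcol{\tree'}$, $\fstlangcol{T_1}$, $\fstlangcol{T_2}$ go through.

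The base case $\sbirtoiml{\leafnode} = \sndlangcol{0}$ is immediate, since the only trace is empty and $\sndlangcol{0}$ produces the empty trace. For the event-node cases I would, for each event shape, exhibit the matching $\SapicOld$ transition and check that the event it emits coincides with $\tracetrans{\cdot}$ of the $\sbirsymb$ event: $\AtoP{x}$ and $\PtoA{x}$ are matched by $\pin(x)$ and $\pout(x)$, $\Syncfresh{n}$ by $\pnew~n$, $\event{\fstlangcol{e}}$ by $\pevent~\fstlangcol{e}$, and $\assign{x}{\fstlangcol{e}}$ by $\plet~x = \sbirtoiml{\fstlangcol{e}}~\pin~\cdots$, invoking the expression-translation clauses of \autoref{fig:sbirtosapic}. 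The $\fcall{f,x_1,\ldots,x_n,y}$ case is matched by choosing the successful (non-$\pelse$) branch of $\plet~y = f(x_1,\ldots,x_n)~\pin~\cdots~\pelse~\sndlangcol{0}$, which is always available because the corresponding constructor application is defined in $\sbirsymb$; trace inclusion only demands that this branch be one of the $\SapicOld$ behaviors. In each case the induction hypothesis applied to $\fstlangcol{\tree'}$ extends the prefix to the remainder of the trace.

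For the branching node $\branchingnode(\birpc,\nodecond,\fstlangcol{T_1},\fstlangcol{T_2})$, translated to $\pndc{\sbirtoiml{\fstlangcol{T_1}}}{\sbirtoiml{\fstlangcol{T_2}}}$, any $\sbirsymb$ trace descends into exactly one subtree $\fstlangcol{T_i}$; by the induction hypothesis its translation is a trace of $\sbirtoiml{\fstlangcol{T_i}}$, and the non-deterministic choice can commit to that branch, so the trace is produced by the composition. The loop node $\node(\birpc,\Loopsym)\bnfconcat\fstlangcol{\tree'}$, translated to $\sndlangcol{!}\,\sbirtoiml{\fstlangcol{\tree'}}$, is handled by a secondary induction on the number of loop iterations taken in the given trace: each iteration is matched by spawning one fresh replicated copy of $\sbirtoiml{\fstlangcol{\tree'}}$, which replication always permits.

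The main obstacle I expect is twofold. First, the loop/replication correspondence: one must argue that the finitely many unrollings appearing in a single $\sbirsymb$ trace are exactly realizable by replication, and that the fresh names drawn by $\pnew$ across iterations never clash---this is where the global freshness discipline of the symbol set $\symbset$ (cf.\ \autoref{def:symbolic-parallel-composition}) is essential. Second, pinning down $\tracetrans{\cdot}$ precisely so that the bit-level $\birsymb$ expressions appearing in $\assign$ and $\fcall$ events line up with the $\SapicOld$ terms produced by $\sbirtoiml{\cdot}$, and so that the $\AtoPsym$/$\PtoAsym$ events are reconciled with the $\Kevent$ convention. Getting these representations to match on the nose, rather than merely up to semantic equivalence, is the delicate bookkeeping on which the whole simulation rests.
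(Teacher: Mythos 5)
Your proposal is correct, but it takes a genuinely different route from the paper. The paper's proof inducts on the \emph{length of the translated trace} $\tracetrans{\sbirtraces(\fstlangcol{\tree})}$: the base case is the empty trace, and the inductive step is a case distinction over synchronous events ($\Syncfreshsym$, $\AtoPsym$, $\PtoAsym$, $\fcallsym$) versus non-synchronizing ones ($\event{\cdot}$, $\Loopsym$, assignment) --- essentially the same per-event matching you carry out, but driven by the trace rather than by the tree. Your structural induction on $\fstlangcol{\tree}$ mirrors the syntax-directed translation of \autoref{fig:sbirtosapic} and makes the per-constructor correspondence transparent, but it costs you two things the paper's scheme avoids: the strengthened simulation invariant (residual $\SapicOld$ process plus a substitution agreeing with the symbolic bindings), and the secondary induction for the loop node, which is forced because a trace through $\node(\birpc,\Loopsym)\bnfconcat\tree'$ repeats traces of $\tree'$ and is therefore not a single instance of the structural induction hypothesis. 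Under trace-length induction, loops need no special case --- one never descends into tree structure, only along the trace --- which is also the natural shape for the HOL4 mechanization (induction over lists of events) that the paper cites. Both arguments are valid: yours buys a tighter link to the translation rules and makes explicit the bookkeeping (substitutions, the $\Kevent$ reconciliation in $\tracetrans{\cdot}$, freshness across iterations) that the paper's terse proof leaves implicit, while the paper's buys uniformity and a lighter invariant.
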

\begin{proof}
	\reversemarginpar
	The proof is done by induction on the length of \revised{\infruleref{Ra-w3.b}{RA,W3.b}}{the translated traces $\tracetrans{\sbirtraces(\fstlangcol{\tree})} $}. 
	\normalmarginpar
        In the base case, no actions are taken.
	For the inductive case, we apply the case distinction over synchronous and asynchronous events in the set of $ \sbirsymb $ events.	
	We mechanized~\autoref{thm:traceinc:sbir-sapic}'s proof in HOL4  (see~\href{https://github.com/FMSecure/CryptoBAP/tree/main/HolBA/src/tools/parallelcomposition/translateTosapic/translate_to_sapicScript.sml#L493}{\itshape\underline{Symbtree-to-Sapic}}).
\end{proof}
Lindner et al.~\cite[Thm.~4.1]{lindner2023proofproducingsymbolicexecutionbinary} demonstrated that verified properties for $ \sbirsymb $ transfer to $ \birsymb $, ensuring that the verified properties hold for concrete execution semantics.

\subsection{End-to-end correctness result}
\label{sec:end-to-end}
We then show how \newrevised{the} theorems in~\autoref{sec:parallel} come together to simplify the analysis of our target language, which we will equip with
DY semantics. 
Our analysis below includes embedded links to the mechanized proof for each step.
We start with the concrete, complete ARMv8 program in parallel with an
unspecified attacker $A$. 
\begin{align*}
 &	 \contraces (\CParallel{C^{ARMv8}}{A})
\end{align*}
As is often the case, we take a detour via an intermediary language, in our case, $\birsymb$. \cite[Thm.~2]{DBLP:journals/scp/LindnerGM19} 
justifies this so-called \emph{lifting} step, i.e., shows that this translation is semantics preserving.
		Thanks to \refappendixorfull{thm:con-comprefin} in~\theappendixorextended{sec:conc-thms}, we can use this theorem in 
		context with $A$.
\begin{align*}
      \text{\href{https://github.com/FMSecure/CryptoBAP/tree/main/HolBA/src/tools/parallelcomposition/instantiations/arm8\_vs\_bir\_comp\_attackerScript.sml\#L145}{$=	\contraces (\CParallel{{C^{\birsymb}}}{A})$}}
\end{align*}
		We next require (Assumption 1) that $C^{\birsymb}$ is trace-equivalent to
		$\CParallel{P^{\birsymb}}{L^{\birsymb}}$, i.e., that it can be split into a 
		program-under-verification and a known library.
		\cite[Sec.\ 4]{nasrabadi2023cryptobap} provides statically checkable criteria for $\birsymb$
		to verify this condition automatically. Again, \refappendixorfull{thm:con-comprefin} is used to apply this in context.
		Afterwards, we
		use the refinement theorem \autoref{thm:traceinc}
		and the relations indicated by the underbraces to move
		from the concrete to the symbolic. 
		 An interpretation function $ {\fstlangcol{\interpret}} $ evaluates $ \sbirsymb $ symbolic expressions to $ \birsymb $ concrete values, as demonstrated in~\cite{lindner2023proofproducingsymbolicexecutionbinary}.
		Because $\CParallel{}{}$ is associative w.r.t.\ trace
		equivalence, we have:		
\begin{align*}		
	& = \contraces (\CParallel{\underbrace{P^{\birsymb}}_{\stackrel{\text{\cite[Thm.~4.1]{lindner2023proofproducingsymbolicexecutionbinary}}}{\fstlangcol{\sqsubseteq}}}}{\underbrace{\CParallel{L^{\birsymb}}{A}}_{\stackrel{\text{A2: Deduction Soundness}}{\sndlangcol{\sqsubseteq}}}}) \\
     &     \text{\href{https://github.com/FMSecure/CryptoBAP/tree/main/HolBA/src/tools/parallelcomposition/instantiations/bir_comp_attacker_vs_sbir_comp_DYScript.sml\#L67}{$\sqsubseteq \symtraces (\SParallel[{\dedrelcombSbirDY}]{\overbrace{P^{\sbirsymb}}}{\overbrace{\SParallel[{\dedrelAvsL}]{L^{DY}}{A^{DY}}}})$}} 
\end{align*}
The first relation is the soundness of symbolic
		execution.
                The second is an assumption on the attacker that
		we will talk about in a second. 
                Recall that $\SParallel[\revised{\infruleref{Ra-w3.c}{RA,W3.c}}{\dedrelAvsL}]{}{}$ 
                uses a deduction combiner specific to the DY attacker
                and library, 
                \revised{\infruleref{Ra-w2}{RA,W2}}{
                while $\SParallel[\dedrelcombSbirDY]{}{}$ utilizes a specialized deduction relation between \sbirsymb{} and DY as defined in~\autoref{subsec:bir}.
            Next, $\SParallel[\dedrelcombSapicDY]{}{}$ employs a deduction relation similar to $ \dedrelcombSbirDY $, referred to as $ \dedrelcombSapicDY $, which particularly applies to \SapicOld and DY predicate sets. The distinction from $ \dedrelcombSbirDY $ lies in the fact that $ \dedrelcombSapicDY $ incorporates the translation of the binary operators $\fstlangcol{\Diamond_b}$ as function symbols (see~\autoref{fig:sbirtosapic} for the translation of the binary operations).}
		We use
		\autoref{thm:symb-comprefin} \newrevised{(case 3)} to apply our
		translation result from $ \sbirsymb $ to \SapicOld (\autoref{thm:traceinc:sbir-sapic})
		 that we showed in the previous
		subsection (note that $P^{\SapicOld} = \sbirtoiml{P^{\sbirsymb}}$\newrevised{, $ \dedrelcombSbirDY $ is disabling, and $ \dedrelcombSapicDY $ is enabling}).
                We have:
\begin{align*}            
& \text{\href{https://github.com/FMSecure/CryptoBAP/tree/main/HolBA/src/tools/parallelcomposition/instantiations/sbir_sapic_comp_DYScript.sml\#L122}{$\newrevised{\subseteq} \ \symtraces (\SParallel[\dedrelcombSapicDY]{P^{\SapicOld}}{\SParallel[{\dedrelAvsL}]{L^{DY}}{A^{DY}}})$}}
\end{align*}
\Sapic 's semantics include the DY attacker and library, hence, the above system.
\begin{align*} 
\text{\href{https://github.com/FMSecure/CryptoBAP/tree/main/HolBA/src/tools/parallelcomposition/instantiations/sapic_comp_DY_sapicplusScript.sml\#L602}{$= \symtraces (P^{\Sapic})$}}
\end{align*}
We thus have an end-to-end correctness result.
Thanks to the framework theorems, this proof can be adapted to many
other languages, as the researcher needs to only show the
correctness of the language-specific steps (correctness of lifting, splitting, and
translation) when adapting. Moreover, they only need to be shown in
isolation. Until now, the translation step was usually shown in the presence of the adversary~\cite{kremer2016automated,aizatulinVerifyingCryptographicSecurity2015,nasrabadi2023cryptobap}.

While Assumption~1 delineates the class of programs that is supported
(and can be checked statically), Assumption~2 (short: A2) formalizes our 
threat model: whatever type of
system the attacker controls, it can be abstracted as a DY attacker if
we also abstract the library in the same way.
We can leave it at that, but we believe that this assumption merits deeper exploration, as discussed further in~\theappendixorfull{rel-comp-sound}.

In~\theappendixorextended{sec:multi-party}, we extend this proof to two parties (client and server) and an unbounded number of copies thereof.

\begin{table*}
	\centering
	\resizebox{2.1\columnwidth}{!}{
		\begin{tabular}{ll|c|c|c|c|c|c|c|ccc|c|c|c}
			\multicolumn{2}{c|}{\multirow{2}{*}{Protocol}} & \multirow{2}{*}{\begin{tabular}[c]{@{}c@{}}ARM\\ Loc\end{tabular}} & \multirow{2}{*}{\begin{tabular}[c]{@{}c@{}}Verified\\  Code Size\end{tabular}} & \multirow{2}{*}{\begin{tabular}[c]{@{}c@{}}Feasible\\ Path\end{tabular}} & \multirow{2}{*}{\begin{tabular}[c]{@{}c@{}}Infeasible\\ Path\end{tabular}} & \multirow{2}{*}{\begin{tabular}[c]{@{}c@{}}$\SapicOld$\\ Loc\end{tabular}} & \multirow{2}{*}{\begin{tabular}[c]{@{}c@{}}\Tamarin\\ Loc\end{tabular}} & \multirow{2}{*}{\begin{tabular}[c]{@{}c@{}}\ProVerif\\ Loc\end{tabular}} & \multicolumn{4}{c|}{Time (seconds)} & \multirow{2}{*}{Verified in} & \multirow{2}{*}{Primitives} \\
			\multicolumn{2}{c|}{} &  &  &  &  &  &  &  &  \multicolumn{1}{c|}{$\sbirsymb$} & \multicolumn{1}{c|}{$\SapicOld$} & \multicolumn{1}{c|}{TM} & PV &  &  \\ \hline
			\multicolumn{2}{l|}{TinySSH} & 18K & 0.476K & 136 & 1223 & 204 & 107 & 117 & \multicolumn{1}{c|}{120} & \multicolumn{1}{c|}{493} & \multicolumn{1}{c|}{7.32} & 0.114 & TM \& PV & DHKA, SE, DS, HF\\
			\multicolumn{1}{c|}{\multirow{2}{*}{WG}} & Initiator & \multirow{2}{*}{27K} & \multirow{2}{*}{1.323K} & 68 & 1482 & 260 & \multirow{2}{*}{150} & \multirow{2}{*}{181} & \multicolumn{1}{c|}{60} & \multicolumn{1}{c|}{67} & \multicolumn{1}{c|}{\multirow{2}{*}{1.28}} & \multirow{2}{*}{13.266} & \multirow{2}{*}{TM \& PV} & \multirow{2}{*}{DHKA, AEAD, HF} \\
			\multicolumn{1}{c|}{} & Responder &  &  & 153 & 1389 & 380 &  &  & \multicolumn{1}{c|}{60} & \multicolumn{1}{c|}{50} & \multicolumn{1}{c|}{} &  &  & 
		\end{tabular}%
	}
	\caption{Case studies. Abbreviations used: WG (WireGuard), DHKA (Diffie-Hellman Key Agreement), SE (Symmetric Encryption), DS (Digital Signatures), HF (Hash Functions), and AEAD (Authenticated Encryption with Additional Data).
            \revisedtxt{We report the runtime for 
                 preprocessing and symbolic execution ($\sbirsymb$), 
                construction of the symbolic tree plus model extraction ($\SapicOld$),
                and  for
            verification using \Tamarin (TM) and \ProVerif (PV).}
        }
	\label{fig:testcases}
\end{table*}

\subsection{Verification of TinySSH and WireGuard}\label{evaluation}
We have verified the TinySSH and WireGuard protocols to evaluate our framework.
\newtext{Our case studies demonstrate that our methodology does not introduce any artifacts that inhibit verification.}
\autoref{fig:testcases} shows data we have collected during our evaluation.
The LOC of ARM assembly represents the complete assembly code, including crypto functions, which were necessary to consider in our preprocessing step to compute the program's control flow.

\revised{\infruleref{Rc-w5}{RC,W5}}{The binary of our case studies is unaltered; however, the verifier must manually initialize and steer the verification process. 
Specifically, the user is required to specify: 
(a) to the lifter, the code fragments to be analyzed,
(b) to the symbolic execution engine, which operates on the output of the lifter, i.e., \birsymb{} code, the function names grouped as trusted (libraries) or untrusted (network), 
(c) to the symbolic execution engine, the symbolic model of the cryptographic functions, and 
(d) the assumptions regarding the crypto primitives and the security properties we proved for our case studies in the \Sapic input file.
}

TinySSH is a minimalistic SSH server that implements a subset of SSHv2 features and ships with its own crypto library.
To establish authentication requirements for any parties connecting to TinySSH, we used \SapicOld to model the client of the SSH protocol.
We also automatically extracted the \newrevised{server} model of TinySSH from its ARMv8 machine code, hence covering a system composed of three components written in three very different languages, in ARMv8, \SapicOld and DY rules.
We verified mutual authentication~\cite{lowe1997hierarchy} and forward secrecy~\cite{cohn2016post} with \ProVerif and \Tamarin.

WireGuard implements virtual private networks akin to IPSec and OpenVPN. 
It is quite recent and was incorporated into the Linux kernel.
\revised{\infruleref{Rc-w5}{RC,W5}}{We focused on the handshake protocol of WireGuard instead of the record protocol, as the handshake is usually considered more challenging.}
We have extracted, \emph{for the first time}, the $ \SapicOld $ model of the Linux kernel's WireGuard implementation binary.
Our model is more faithful than existing manual models which, for
instance, use pattern matching for authentication verification
and was extracted automatically.
Having extracted the handshake and the first message transmitted upon the completion of exchanging keys, we prove that the protocol participants mutually agree on the resulting keys in both \ProVerif and \Tamarin.
Moreover, we show the resulting keys remain unknown to the attacker by proving the forward secrecy property using \ProVerif and \Tamarin.

\revised{\infruleref{Ra-w2}{RA,W2}}{
We employed the combined deduction relations $ \dedrelcombSbirDY $ and $ \dedrelcombSapicDY $ in our case studies and \newrevised{extracted a model from the $ \birsymb $ program in}~\Cref{ex:masked-encryption} using our toolchain to demonstrate their application.
\newrevised{
The models reflect $ \dedrelcombSbirDY $ and $ \dedrelcombSapicDY $ as destructors defined in the translation from \SapicOld (and Co.) to \Sapic.
These destructors derive the same terms that $ \dedrelcombSbirDY $ derives in~\autoref{fig:bir-exp4}.
}
Additionally, as we extracted formal models of TinySSH and WireGuard from their respective implementations, 
we have identified no instances in which the DY attacker could acquire additional knowledge through the use of these combined deduction relations. 
}

\ProVerif and \Tamarin exhibit significantly different verification
times. 
For WireGuard, \Tamarin verifies our properties in 1.28 seconds, while \ProVerif takes 13.266 seconds.
Conversely, for TinySSH, \ProVerif outperformed \Tamarin, completing the verification task in 0.114 seconds compared to \Tamarin's 7.32 seconds.

\section{Related Work}
\label{sec:relatedwork}

\begin{table}[t]
\adjustbox{varwidth=\linewidth,scale=0.80}{%
		\begin{tabular}{@{} cl*{5}c @{}}
                        & {\multirow{2}{*}{\begin{tabular}[c]{@{}c@{}} \textbf{Papers}\end{tabular}}} & {\multirow{2}{*}{\begin{tabular}[c]{@{}c@{}}Model\\Origin\end{tabular}}} & {\multirow{2}{*}{\begin{tabular}[c]{@{}c@{}}Attacker\\Model\end{tabular}}} & {\multirow{2}{*}{\begin{tabular}[c]{@{}c@{}}No Parsing\\Assum.\end{tabular}}} &  {\multirow{2}{*}{\begin{tabular}[c]{@{}c@{}} Formalized\end{tabular}}}\\ \\
			\cmidrule{2-6}
                        & Sprenger et al.~\cite{sprengerIglooSoundlyLinking2020} & Required &  DY &   \NO  & Isabelle/HOL  \\     
                        & Arquint et al.~\cite{arquint2022sound,arquint2023generic} & Required &  DY & \NO & ---  \\     
                        & Hahn et al.~\cite{hahn2013compositional} & Required & Comp. & \OK & ---  \\     
                        & Sammler et al.~\cite{sammlerDimSumDecentralizedApproach2023} & Required &  Unbounded & \NO  & Coq  \\   
                        & Bhargavan et al.~\cite{dystareurosp2021} &
                                Code-based &  DY  & \OK & $F^*$  \\ 
                        & Wallez et al.~\cite{wallezComparseProvablySecure2023} &
                                Code-based &  DY  & \OK & $F^*$  \\     
                        & Bhargavan et al.~\cite{bhargavan2008verified,bhargavan2008cryptographically} & Extracted &  DY / Comp. & \NO & ---  \\   
                        & Aizatulin et al.~\cite{aizatulinVerifyingCryptographicSecurity2015} & Extracted &  DY / Comp. & \NO & ---  \\   
                        & Nasrabadi et al.~\cite{nasrabadi2023cryptobap} & Extracted &  DY / Comp. & \NO & ---  \\   
                        & \textbf{This work} & Extracted &  DY & \OK & HOL4 \\         
			\cmidrule[1pt]{2-6}
		\end{tabular}
	}
        \caption{Selected approaches; Comp = Computational, No Parsing Assum. = No strict parsing assumptions (see \autoref{subsec:mess-DY})
        \label{tab:rel-work}}
\end{table}

In recent years, several techniques for verifying crypto protocol implementations have emerged. We survey those based on separation logic, model validation, wrappers, and the CompCert framework~\cite{leroy2006formal}, in this section.
\autoref{tab:rel-work} compares selected works and our proposed approach.

\paragraph{\bf  Separation logic}
~\cite{sergey2017programming,koh2019c,oortwijn2019practical,sprengerIglooSoundlyLinking2020,arquint2022sound,arquint2023generic} used separation logic to analyze the implementation of security protocols.
Sprenger et al.~\cite{sprengerIglooSoundlyLinking2020} introduced a methodology where a protocol model is first formalized in Isabelle/HOL~\cite{paulson2013proof} and then translated into I/O specifications, which are verified using separation-logic based verifiers.
Arquint et al.~\cite{arquint2022sound} extended this to the \Tamarin{} verifier to enable verification against \Tamarin{'s} models.  
Follow-up work~\cite{arquint2023generic} stepped away from
verifying against the specification, and directly verified the protocol properties, which are \emph{stable under concurrency}, by building on a programming language that incorporates protocol operations and modeling the attacker in that language. 

Others~\cite{sprengerIglooSoundlyLinking2020,arquint2022sound,arquint2023generic} used verifiers like Nagini~\cite{eilers2018nagini} for Python, Gobra~\cite{wolf2021gobra} for Go, and VeriFast~\cite{jacobs2011verifast} for Java and C.
Nonetheless, the soundness of these approaches
depends on the correctness of utilized verifiers, including their
dependencies (e.g., Nagini relies on
Viper~\cite{muller2016viper}). In theory, they could be proven
sound, but the languages are not ideal for formalized results.
By contrast, $ \birsymb $'s decompilation approach already provides a formalized soundness results 
from lifting to symbolic execution while covering machine code produced by compiling these languages.

Throughout this line of work~\cite{sprengerIglooSoundlyLinking2020,arquint2022sound,arquint2023generic},
a translation function maps between byte-level messages and DY terms
(or an injective function in the other direction).
Therefore, our arguments in~\autoref{subsec:mess-DY} apply. However, a cursory
glance suggests that our framework (i.e., the subject of \autoref{sec:parallel}) 
might help with this, as their proof structure
likewise consists of a refinement step, followed by decomposition and translation to
a verification language and their communication model builds on
a (subclass of) LTS and CSP-style parallel composition (with built-in
translation).

\paragraph{\bf  Model validation}
Several formalisms were proposed for modeling distributed systems~\cite{brinksma2005process,meseguer2006specification,strubbe2006compositional,hahn2013compositional}, including a hierarchical modeling language and the hybrid process calculus~\cite{brinksma2005process} that focuses on bisimulation notions and congruence results w.r.t.\ parallel composition.
Strubbe et al.~\cite{strubbe2006compositional} introduced a technique to deal with 
the nondeterminism in distributed systems, which was later extended by Meseguer et al.~\cite{meseguer2006specification} to handle the asynchrony of communications.

The methodologies in~\cite{meseguer2006specification,strubbe2006compositional,hahn2013compositional} required checking cross-system variable consistency during communication due to shared variables. This direct impact of one component's actions on others poses a challenge.
In contrast, our synchronization method relies on events containing symbols. By avoiding the reuse of symbols, cross-system consistency is not a concern for us. This improves \newtext{our approach's efficacy} and makes it ideal for modeling distributed systems.

\paragraph{\bf Model extraction}

The application of our theory builds on \CryptoBap~\cite{nasrabadi2023cryptobap},
which derives the idea of extracting protocol models via symbolic execution from 
Aizatulin et al.~\cite{aizatulinExtractingVerifyingCryptographic2011,aizatulinComputationalVerificationProtocol2012}.
Both approach
build upon computational soundness, which 
imposes stringent requirements on the use of cryptography and protocols.
Computational soundness is incredibly difficult to prove
mechanically~\cite{lochbihler2016probabilistic}, which was the
main motivation for our framework,
as it 
(a) enables us to avoid the detour via computational
soundness and
(b) enables compositional proofs.
Where
both \cite{nasrabadi2023cryptobap,
aizatulinVerifyingCryptographicSecurity2015} rely on pen-and-paper
proofs in the cryptographic model,
we have a mechanized end-to-end
proof.

\paragraph{\bf DY code analysis}
\revised{\infruleref{Rc-w2}{RC,W2}}{
Similar to our case studies, $\mathit{DY}^*$~\cite{dystareurosp2021} 
permits code analysis w.r.t.\ a DY attacker, but for a high-level
language ($F^*$) that allows conducting proofs using dependent types.
Their DY attacker is formulated within $F^*$, whereas our framework
results apply to a DY attacker that may compose with other languages. 
Proofs in their framework are internal to $F^*$, while we depend on the
correctness of the protocol verifiers.
\cite[Sec.\ 1]{dystareurosp2021} discusses the trade-off in
automation versus modularity.
$\mathit{DY}^*$ is complemented by Comparse~\cite{ wallezComparseProvablySecure2023}
which provides type combiners and lemmas to deal with packing and
unpacking.
These lemmas are proven at the bit-level, solving (in many cases) the
problems of \emph{limited bit-level reasoning} and \emph{strong parsing assumptions}
mentioned in~\autoref{sec:introduction} ---although we emphasize that
$\mathit{DY}^*$ and Comparse are not a multi-language composition, instead
integrating the DY attacker as a library.
Instead of constructing the format types (and their proofs of
validity), we extract the formats using our symbolic execution as
\birsymb{}-level terms. We translate those to DY terms, possibly losing
bit-level message confusing attacks should the deduction combiner be
incomplete. 
The criteria in \cite[Sec.\ 2]{wallezComparseProvablySecure2023} could
be useful to judge the soundness of this deduction combiner w.r.t.\
the message formats that are abstracted in this way, i.e., w.r.t.\
a given (set of) implementations. 
}

\paragraph{\bf Wrappers}
Research on multi-language semantics has explored translation between languages using a wrapper~\cite{matthews2007operational,ahmed2011equivalence,new2018graduality,sammlerDimSumDecentralizedApproach2023}.
DimSum~\cite{sammlerDimSumDecentralizedApproach2023}
is the most relevant to our work.
DimSum's wrapper-based composition
($ \lceil \cdot \rceil_{_{ \fstlangcol{1} \rightleftharpoons \sndlangcol{2} }}$)
serves as a translation tool between two components written in different
languages as well as between a component and the environment.
Like Igloo~\cite{sprengerIglooSoundlyLinking2020},
they reason about an arbitrary number of languages communicating via events
and build on CSP-style parallel composition and translate
between languages. Instead, 
we use a shared set of symbols to denote equations and deduce
relations between bitstrings in different languages.
DimSum requires $ m^2 $ wrappers to facilitate communication of $m$ languages, suffering from a complexity blow-up associated with compositional
soundness.
Our generic deduction combiners (\autoref{subsec:DedComb}) can remove
this burden.
For computational attackers, 
DimSum's composition does not support probabilistic semantics
and it lacks a notion of runtime bounds for attackers.
As far as the DY model goes, the issues in \autoref{subsec:mess-DY} apply (e.g., there is no single suitable DY term that
 $\lceil \lceil   \senc(m,k)\rceil_{_{\Term \rightharpoonup  \fstlangcol{\imlvals}}} \ \fstlangcol{+ \ 0x1} \rceil_{_{\Term \leftharpoondown  \fstlangcol{\imlvals}}}$ should give). 

\paragraph{\bf CompCert}
CompCert was also used to verify the multi-language protocols at the assembly-code level~\cite{ramananandro2015compositional,stewart2015compositional,gu2015deep,gu2018certified,wang2019abstract,song2019compcertm,koenig2021compcerto,oliveira2022layered}. Among others,~\cite{ramananandro2015compositional,stewart2015compositional,song2019compcertm,koenig2021compcerto} achieved multi-language composition by enforcing a common interaction protocol across all languages,
while~\cite{gu2015deep,gu2018certified,wang2019abstract,oliveira2022layered} enforce specific memory-sharing patterns, along with other restrictions, on the interaction between different components.
In contrast, we neither depend on a common language nor impose any restrictions on the interaction of components.
Our model uses symbols for communication and predicates over these symbols for reasoning, allowing the verification toolchain
to understand this interaction.

\section{Concluding Remarks}
\label{sec:conclusion}

We proposed a framework for symbolic parallel composition that enables composing components operating on different atomic types. Our approach extends the state-of-the-art composition techniques, allowing efficient handling of cross-language communication. Notably, our approach avoids the need to translate incompatible base values and offers a more versatile and applicable solution. By using symbolic values for communication, our method addresses the mismatches encountered in previous translation-based approaches. This provides a more accurate representation of DY terms as symbolic abstractions. 
Our composition framework is multi-language in this first sense: 
our WireGuard case study, for instance, combines programs in the $\SapicOld$, $\sbirsymb$, and DY language in the same system (e.g., \refappendixorfull{eq:multi} in~\theappendixorextended{sec:multi-party}).
Our case studies are also multi-language in a much more pragmatic sense:
any language that compiles to a supported assembly language is
supported, independent of the compiler, and whether it is correct.

In the future, we aim to extend our framework with probabilistic reasoning. We will extend our semantic configuration to include the probability of reaching a given state. 
This will allow us to reason probabilistically about the composition of non-probabilistic languages.

\subsection*{Acknowledgment}
We thank anonymous reviewers for their insightful comments. This work was partially supported by the Wallenberg AI, Autonomous Systems and Software Program (WASP) funded by the Knut and Alice Wallenberg Foundation. We also gratefully acknowledge a gift from Intel and Amazon.

\bibliographystyle{IEEEtran}
\bibliography{refs, zotero}

\begin{full}
\appendix
\section{Appendix}

\section{Supplemental Material}

\subsection{Partially Synchronized Interleaving on Traces}
\label{sec:interleave}
Partially synchronized interleaving on traces generalizes interleaving composition by requiring certain actions from two or more components to occur in a specific relative order that signifies synchronization points. Conversely, other actions remain unconstrained and may be interleaved in any arbitrary manner.

\begin{definition}[Partially Synchronized Interleaving on Traces]\label{def:interleaving}
    	For any LTS $\fstlangcol{M}$ and $ \sndlangcol{M}$, and
    	two sets of traces produced by these LTS, respectively, $ \fstlangcol{\Tracesfun{M}} $ and $ \sndlangcol{\Tracesfun{M}} $,
    	the Partially Synchronized Interleaving on Traces $  \interleaving{\fstlangcol{\Tracesfun{M}}}{ \sndlangcol{\Tracesfun{M}}} $ is the set of all possible traces $ \InterTraces $ such that:
\begin{itemize}
	\item $ \InterTraces $ is a permutation of $  \fstlangcol{\Tracesfun{M}} \cup \sndlangcol{\Tracesfun{M}} $.
	\item The relative order of elements in $  \fstlangcol{\Tracesfun{M}}$ and $\sndlangcol{\Tracesfun{M}} $ are preserved in $ \InterTraces $:
	\begin{itemize}
		\item For all traces $ \fstlangcol{\Trace} \in \fstlangcol{\Tracesfun{M}} $ and $ \InterTrace \in \InterTraces$, 
		$ i$, $j$, $m$, and $n $ such that $ 0 \leq i < j  < m $, 
		there exist $ k$ and $ l $ such that $ 0 \leq k < l  < m + n$,
		$ \InterTrace[k] = \fstlangcol{\Trace[i]} $  and
		$ \InterTrace[l] = \fstlangcol{\Trace[j]} $.
		\item For all traces $ \sndlangcol{\Trace} \in \sndlangcol{\Tracesfun{M}} $ and $ \InterTrace \in \InterTraces$, 
		$ x$, $y$, $m$, and $n $ such that $ 0 \leq x < y  < n $, 
		there exist $ z$ and $ d $ such that $ 0 \leq z < d  < m + n$,
		$ \InterTrace[z] = \sndlangcol{\Trace[x]} $  and
		$ \InterTrace[d] = \sndlangcol{\Trace[y]} $.
	\end{itemize}
	\item For all traces $ \fstlangcol{\Trace} \in \fstlangcol{\Tracesfun{M}} $, $ \sndlangcol{\Trace} \in \sndlangcol{\Tracesfun{M}} $ and $ \InterTrace \in \InterTraces$, 
	$ i$, $j$, $m$, and $n $ such that $ 0 \leq i < m $, $ 0 \leq j < n $, and
	$  \fstlangcol{\Trace[i]} = \sndlangcol{\Trace[j]} $, 
	there exists a $k$ such that $ 0 \leq k  < m + n$ and
	$ \InterTrace[k] = \fstlangcol{\Trace[i]} = \sndlangcol{\Trace[j]}  $.
\end{itemize}
\end{definition}

\subsection{Transitions (De-)Activation}
\label{sec:transitions}
\autoref{def:transition-enabling} defines when adding a predicate can activate a transition in our system.
\begin{definition}[Transition Enabling]\label{def:transition-enabling}
    Given two symbolic LTS
$S_i = (\symbSpace,\configSpace_i,\eventSpace_i,\transrel{}{i}{}{},\predSpace_i,\dedrel_i)$, 
$i\in\set{\colone,\coltwo}$,
their symbolic parallel composition
$\Parallel[\dedrelcomb]{\SysFst[ _1 ]}{\SysSnd[ _2 ]} =
(\symbSpace, 
\subone{\configSpace}\times \subtwo{\configSpace},
\subone{\eventSpace} \cup \subtwo{\eventSpace}, 
\transrel{}{\colonetwo}{}{},
\subone{\predSpace} \uplus \subtwo{\predSpace}, \dedrelcomb)$,
a predicate set $ \comppredset \in 2^{(\subone{\predSpace} \uplus \subtwo{\predSpace})} $
and a predicate $ \predcomb \in (\subone{\predSpace} \uplus \subtwo{\predSpace}) $, 
such that $ \comppredset \dedrelcomb \predcomb$,
we say the predicate $ \predcomb$ enables the transition $\transrel{}{\colonetwo}{}{}$
if:
\begin{itemize}
	\item Either {\small$\transrel{\subone\eventsym}{\colonetwo}{\tuple{\symbset,\comppredset\cup\set{\predcomb},\subone{\config},\subtwo{\config}}}{\tuple{\symbset',\comppredset'\cup\set{\predcomb},\subone{\config'},\subtwo{\config}}}$} or {\small$\transrel{\subtwo\eventsym}{\colonetwo}{\tuple{\symbset,\comppredset\cup\set{\predcomb},\subone{\config},\subtwo{\config}}}{\tuple{\symbset',\comppredset'\cup\set{\predcomb},\subone{\config},\subtwo{\config'}}}$}, 
	and, without adding the predicate $ \predcomb $, 
	it is not possible to move with 
	$ \eventsym_i \in \eventSpace_i \setminus (\subone{\eventSpace} \cap \subtwo{\eventSpace})$ for $i \in\set{\colone,\coltwo}$, i.e.,
	$ \ntransrel{\eventsym_i \ }{i}{\tuple{\symbset,(\proj{i}{\comppredset}),\config_{i}}}{\tuple{\symbset',(\proj{i}{\comppredset'}),\config'_{i}}}$,
	keeping the 
	complement's
	predicate set untouched $ \proj{\overline i}{\comppredset} = \proj{\overline i}{\comppredset'} $,
	\item Or {\small $\transrel{\eventsym}{\colonetwo}{\tuple{\symbset,\comppredset\cup\set{\predcomb}, \subone{\config},\!\subtwo{\config}}}{\tuple{\symbset'\!,\comppredset'\cup\set{\predcomb},\subone{\config'},\subtwo{\config'}}}$}, and, $ \transrel{\eventsym}{i}{\tuple{\symbset,(\proj{i}{\comppredset}),\config_{i}}}{\tuple{\symbset'_{i},(\proj{i}{\comppredset'}),\config'_{i}}} $ is not possible without adding the predicate $ \predcomb $, for $i\in\set{\colone,\coltwo}$, $ \eventsym \in \subone{\eventSpace} \cap \subtwo{\eventSpace} $, and $ \symbset' = \subone{\symbset'} \cup \subtwo{\symbset'} $.
\end{itemize}
\end{definition}

Adding predicates may also disable transitions within the system. The definition for when adding a predicate disables transitions is similar to~\autoref{def:transition-enabling} and obtained by negating logical entailment.
\subsection{Composing and Decomposing DY
	libraries}\label{sec:dy-lib-comp}

Protocol parties are often implemented in different languages that potentially incorporate different implementations of the same cryptographic library. Additionally, each party may employ additional libraries tailored to their specific needs, which could differ from those used by others. Therefore, our framework needs to account for both scenarios in the composition of protocol participants.
%
We use function symbols, which represent cryptographic operations, to distinguish between the two scenarios where DY libraries have identical or distinct function symbols. 
We introduce the following corollary---and mechanize its proof in HOL4 to enable the composition or decomposition of DY libraries. See~\href{https://github.com/FMSecure/CryptoBAP/tree/main/HolBA/src/tools/parallelcomposition/DYLib/derived_rules_DYlibScript.sml#L25}{\itshape\underline{Same-Signature}} and~\href{https://github.com/FMSecure/CryptoBAP/tree/main/HolBA/src/tools/parallelcomposition/DYLib/derived_rules_DYlibScript.sml#L63}{\itshape\underline{Distinct-Signatures}} for the proof of~\autoref{lem:samesig}. 

\begin{corollary}
	\label{lem:samesig}
	For
	all DY libraries \fstlangcol{\DYLib{\subone\Funcs}} and \sndlangcol{\DYLib{\subtwo\Funcs}}, where $\subone\Funcs$ and $\subtwo\Funcs$
	can be the same or distinct function signatures, we have that
	$ \sbirmixtraces(\SParallel[]{\fstlangcol{\DYLib{\subone\Funcs}}}{\sndlangcol{\DYLib{\subtwo\Funcs}}}) = \interleaving{\fstlangcol{\Traces}(\fstlangcol{\DYLib{\subone\Funcs}})}{\sndlangcol{\Traces}(\sndlangcol{\DYLib{\subtwo\Funcs}})} $.
\end{corollary}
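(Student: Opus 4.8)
The plan is to derive the corollary as a direct instance of \autoref{thm:symb-compthm}, specialized to the two DY libraries and the \emph{empty} combined deduction relation $\emptyset$ (recall $\SParallel[]{}{} \defeq \SParallel[\emptyset]{}{}$). The crucial observation is that $\emptyset$ produces no predicates at all, so the defining conditions of both an \emph{enabling} and a \emph{disabling} deduction relation (cf.\ \autoref{def:transition-enabling}) are satisfied vacuously. Hence $\emptyset$ is simultaneously enabling and disabling, and I can invoke \emph{both} cases of \autoref{thm:symb-compthm} with $\fstlangcol{M}=\fstlangcol{\DYLib{\subone\Funcs}}$ and $\sndlangcol{M}=\sndlangcol{\DYLib{\subtwo\Funcs}}$.

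Concretely, I would first apply case~1 (enabling) to obtain $\sbirmixtraces(\SParallel[]{\fstlangcol{\DYLib{\subone\Funcs}}}{\sndlangcol{\DYLib{\subtwo\Funcs}}}) \supseteq \interleaving{\fstlangcol{\Traces}(\fstlangcol{\DYLib{\subone\Funcs}})}{\sndlangcol{\Traces}(\sndlangcol{\DYLib{\subtwo\Funcs}})}$, and then case~2 (disabling) to obtain the reverse inclusion $\subseteq$; the two together give the required equality. Since \autoref{thm:symb-compthm} is stated for arbitrary symbolic LTS, the remaining proof obligation is to confirm that each $\DYLib{\cdot}$ is a symbolic LTS in the sense of \autoref{def:symbolic-lts}: its state carries a symbol set and a predicate set recording freshness facts $\libfreshnesssym$ and call-mapping facts, and its deduction relation (built from $=_E$) satisfies the second condition of \autoref{def:symbolic-lts}. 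This is routine given the rules of \autoref{fig:DYtransitionrulesLib}.

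The distinction between identical and distinct signatures $\subone\Funcs,\subtwo\Funcs$ affects only which events fall into the synchronizing set $\subone{\eventSpace}\cap\subtwo{\eventSpace}$. When the signatures are distinct, the only shared events are the freshness events $\Syncfreshsym$, so every $\fcallsym$ transition is asynchronous; when the signatures coincide, shared $\fcallsym$ events become synchronizing as well. In both regimes the partially synchronized interleaving $\interleaving{}{}$ (\autoref{def:interleaving}) is defined to respect exactly these synchronization points, so the same instantiation of \autoref{thm:symb-compthm} covers both. This explains why the mechanization records two lemmas (\textsc{Same-Signature} and \textsc{Distinct-Signatures}) that nonetheless share a common skeleton.

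The step I expect to be the main obstacle is verifying that the \emph{shared} symbol set $\symbset$ introduces no discrepancy between the composed runs and the interleaving. The \rFCall{} rule allocates a fresh symbol with side condition $y\notin\symbset$ over the globally shared $\symbset$, so a step of one library constrains the symbols the other may allocate. I would need to check that this coupling is already captured by the interleaving, i.e.\ that any interleaved trace respecting the relative orders and synchronization points can be realized against the monotonically growing global $\symbset$, and conversely that every composed run projects (via $\proj{i}{\cdot}$) to two valid library runs with consistent freshness bookkeeping. This should follow from the monotonicity of $\symbset$ and from the synchronization on $\Syncfreshsym$ illustrated in \Cref{ex:DY-communicates-Lib}, but it is exactly where the freshness side conditions must be tracked most carefully.
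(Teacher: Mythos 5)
Your proposal is correct and matches the paper's own route: the paper likewise obtains \autoref{lem:samesig} by instantiating \autoref{thm:symb-compthm} with the two DY libraries and the empty combined deduction relation, which—being vacuously both enabling and disabling—yields the two inclusions and hence equality, with the same-signature and distinct-signature cases handled as separate HOL4 lemmas exactly because they differ only in the synchronizing event set. Your additional check that the shared, monotonically growing symbol set and the $\Syncfreshsym$ synchronization are respected by the interleaving is precisely the bookkeeping the mechanization discharges, so nothing is missing.
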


\autoref{lem:samesig} serves not only in the composition but also in the decomposition of a single DY library. This allows us to break down a DY library, containing function symbols, into the composition of two DY libraries, each with either the exact same signature or distinct signatures. Consequently, each protocol participant's library can be decomposed into two parts, such as $\SParallel[]{{\DYLib{\Funcs}}}{\fstlangcol{\DYLib{\fstlangcol\Funcs}}}$ or $ \SParallel[]{{\DYLib{\Funcs}}}{\sndlangcol{\DYLib{\sndlangcol\Funcs}}}$.

Following this line of reasoning, when composing multiple parties, it becomes possible to independently compose each part of each participant's library (i.e., $ \SParallel[]{{\DYLib{\Funcs}}}{{\DYLib{\Funcs}}}$ for the common and $ \SParallel[]{\fstlangcol{\DYLib{\fstlangcol\Funcs}}}{\sndlangcol{\DYLib{\sndlangcol\Funcs}}}$ for the remainder). 
Now the common part can be merged into one ($ \DYLib{\Funcs} $).
For more details about the application of~\autoref{lem:samesig} in one of our case studies, see Appendix~\ref{sec:multi-party}.

\subsection{Refinement of Composed System}\label{sec:comp-refine}

A refinement relation $ \sqsubseteq $ relies on an interpretation function that maps symbolic variables to concrete values and can be applied to a symbolic trace. 
\autoref{subsec:refinement} explains this process using notations like $\fstlangcol{\interpreterapply}$ and $\sndlangcol{\interpreterapply}$. 

We define how this refinement applies to the composed system.
Let $ \ \traceproj{i}{} \ : 2^{\subone{\eventSpace} \cup \subtwo{\eventSpace}} \to 2^{\eventSpace_i} $ denotes the trace projection to $i\in\set{\colone,\coltwo}$, then:

\begin{definition}[Composed System Refinement]\label{def:ref-comp}
	Let $ \conmixtrace $ be a concrete composed trace and $ \sbirmixtrace $ be a symbolic composed trace, then a refinement relation between these two traces is $ \conmixtrace \sqsubseteq \sbirmixtrace$ such that
	there exist interpretation functions $ \fstlangcol{\interpret_{_1}} $ and $ \sndlangcol{\interpret_{_2}} $ where
	\begin{itemize}
		\item $ \traceproj{\fstlangcol{1}}{\conmixtrace} \fstlangcol{ =  \interpreterapply(}\traceproj{\fstlangcol{1}}{\sbirmixtrace}\fstlangcol{,\interpret_{_1})} $
		\item $ \traceproj{\sndlangcol{2}}{\conmixtrace} \sndlangcol{ =  \interpreterapply(}\traceproj{\sndlangcol{2}}{\sbirmixtrace}\sndlangcol{,\interpret_{_2})} $
		\end{itemize}
\end{definition}

\subsection{Concrete World}

\label{sec:conc-thms}
In the CSP-style parallel composition of concrete labeled transition systems, synchronization and communication enable interaction among sub-components in a composed system.
A correspondence can be established between traces of a composed system using CSP-style asynchronous parallel composition and the interleaving of traces of each sub-component.
\begin{theorem}[Concrete Composition Correctness]
	\label{thm:conccompthm}
	For any concrete LTS $\fstlangcol{M}$ and $ \sndlangcol{M} $, we have $ \conmixtraces(\CParallel{\fstlangcol{M}}{\sndlangcol{M}}) = \interleaving{\fstlangcol{\contraces}(\fstlangcol{M})}{\sndlangcol{\contraces}(\sndlangcol{M})} $.	
\end{theorem}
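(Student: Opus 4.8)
The plan is to prove the set equality $\conmixtraces(\CParallel{\fstlangcol{M}}{\sndlangcol{M}}) = \interleaving{\fstlangcol{\contraces}(\fstlangcol{M})}{\sndlangcol{\contraces}(\sndlangcol{M})}$ by establishing both inclusions, each by induction on the length of the trace, using the component trace projections $\traceproj{i}{}$ (for $i \in \set{\colone,\coltwo}$, as introduced in the appendix) to mediate between a composed trace and the two component traces. The invariant I would carry through both directions is the following characterization: a trace $\InterTrace$ lies in $\conmixtraces(\CParallel{\fstlangcol{M}}{\sndlangcol{M}})$ if and only if its projections $\traceproj{\colone}{\InterTrace}$ and $\traceproj{\coltwo}{\InterTrace}$ are traces of $\fstlangcol{M}$ and $\sndlangcol{M}$ respectively, and $\InterTrace$ is a partially synchronized interleaving of these two projections in the sense of \autoref{def:interleaving}.

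For the inclusion $\subseteq$, I would take a run of the composed system witnessing a trace $\InterTrace$ and induct on its length. Each transition of $\CParallel{\fstlangcol{M}}{\sndlangcol{M}}$ is, by the CSP-style definition recalled in \autoref{lts}, either asynchronous---one component moves on an event $\eventsym_i \in \eventSpace_i \setminus (\eventSpace_1 \cap \eventSpace_2)$ while the other's state is unchanged---or synchronous---both components move on a shared event $\eventsym \in \eventSpace_1 \cap \eventSpace_2$. In the asynchronous case the new event extends exactly one projection, and the relative-order and synchronization clauses of \autoref{def:interleaving} are preserved trivially; in the synchronous case the event is appended to both projections simultaneously and corresponds to a single position of $\InterTrace$, which is precisely the third clause of \autoref{def:interleaving}. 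The induction hypothesis then yields membership of the extended trace in the interleaving.

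For the inclusion $\supseteq$, I would take $\InterTrace \in \interleaving{\fstlangcol{\contraces}(\fstlangcol{M})}{\sndlangcol{\contraces}(\sndlangcol{M})}$, so there exist witnessing component traces $\fstlangcol{\Trace} \in \fstlangcol{\contraces}(\fstlangcol{M})$ and $\sndlangcol{\Trace} \in \sndlangcol{\contraces}(\sndlangcol{M})$ of which $\InterTrace$ is a partially synchronized interleaving, and reconstruct a composed run by induction on the length of $\InterTrace$. Reading $\InterTrace$ left to right, its head event is either a non-shared event of one component---fired by the corresponding asynchronous rule while the other component idles---or a shared event that, by the synchronization clause of \autoref{def:interleaving}, must be the current head of both $\fstlangcol{\Trace}$ and $\sndlangcol{\Trace}$ and hence can be fired by the synchronous rule. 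Peeling off this head reduces the interleaving of $\fstlangcol{\Trace}, \sndlangcol{\Trace}$ to an interleaving of their suffixes, to which the induction hypothesis applies, producing the required composed run.

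The main obstacle I anticipate is the bookkeeping around shared events: bridging the \emph{operational} characterization of composition (a sequence of synchronous and asynchronous steps) with the \emph{combinatorial} characterization of interleaving (a permutation satisfying order- and synchronization-preservation clauses stated via existence of indices). Concretely, one must show that every shared event occurring in a composed trace is produced by a genuinely synchronous step---so that it occupies a single position matched by indices in both component traces---and conversely that the synchronization clause of \autoref{def:interleaving} forces a shared head event to be simultaneously enabled in both components, which is what licenses the synchronous rule in the $\supseteq$ direction. Additional care is needed when a shared event label recurs, since the index-based formulation of \autoref{def:interleaving} must be aligned with the order in which synchronous steps are taken; I would handle this by strengthening the induction to track, for each prefix, how many occurrences of each event have been consumed from each of $\traceproj{\colone}{\InterTrace}$ and $\traceproj{\coltwo}{\InterTrace}$.
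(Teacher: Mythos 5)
Your proposal is correct and follows essentially the same route as the paper's proof: both inclusions established by induction on the length of the composed trace, with the inductive case handled by a case distinction between synchronous and asynchronous events. The extra detail you supply (projections, the strengthened invariant tracking consumed occurrences of repeated event labels) is bookkeeping the paper's brief proof sketch leaves implicit, not a different argument.
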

\begin{proof}
The goal is to show that for all traces of the composition of concrete LTS, there is an equivalent trace resulting from interleaving the traces of each concrete LTS and vice versa. We prove the theorem using induction over the length of the composed traces. Considering no steps were undertaken, the base case is straightforward. For the inductive case, we utilize case distinction over synchronous and asynchronous events.
\end{proof}
\autoref{thm:conccompthm} enables compositional analysis, as evidenced by the following corollary, wherein individual components can be refined while preserving trace inclusion for the composed system.
\begin{corollary}[Concrete Compositional Trace Inclusion]
	\label{thm:con-comprefin}
	For any concrete LTS $\fstlangcol{M_1}$, $ \fstlangcol{M_2} $, $ \sndlangcol{M_1} $, and $ \sndlangcol{M_2} $, we have
	\begin{center}
		\begin{prooftree}	
			\hypo{\begin{matrix}
					\fstlangcol{\contraces}(\fstlangcol{M_1}) \subseteq \fstlangcol{\contraces}(\fstlangcol{M_2})
			\end{matrix}}
			\hypo{\begin{matrix}
					\sndlangcol{\contraces}(\sndlangcol{M_1}) \subseteq \sndlangcol{\contraces}(\sndlangcol{M_2})
			\end{matrix}}
			\infer2[]{  \conmixtraces(\CParallel{\fstlangcol{M_1}}{\sndlangcol{M_1}})  \subseteq \conmixtraces(\CParallel{\fstlangcol{M_2}}{\sndlangcol{M_2}})}
		\end{prooftree}
	\end{center}
\end{corollary}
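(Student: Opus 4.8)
The plan is to derive this corollary from \autoref{thm:conccompthm} together with the monotonicity of partially synchronized interleaving under trace-set inclusion, mirroring the route taken for the symbolic version in \autoref{thm:symb-comprefin}. The point of stating it as a corollary is precisely that, once the composition-correctness theorem identifies the composed trace set with an interleaving, the refinement claim collapses to a routine monotonicity argument.

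First I would apply \autoref{thm:conccompthm} to both sides of the target inclusion, instantiating its blue component with $\fstlangcol{M_1}$ (resp.\ $\fstlangcol{M_2}$) and its orange component with $\sndlangcol{M_1}$ (resp.\ $\sndlangcol{M_2}$). This rewrites $\conmixtraces(\CParallel{\fstlangcol{M_1}}{\sndlangcol{M_1}}) = \interleaving{\fstlangcol{\contraces}(\fstlangcol{M_1})}{\sndlangcol{\contraces}(\sndlangcol{M_1})}$ and $\conmixtraces(\CParallel{\fstlangcol{M_2}}{\sndlangcol{M_2}}) = \interleaving{\fstlangcol{\contraces}(\fstlangcol{M_2})}{\sndlangcol{\contraces}(\sndlangcol{M_2})}$. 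The goal then reduces to the purely set-theoretic statement $\interleaving{\fstlangcol{\contraces}(\fstlangcol{M_1})}{\sndlangcol{\contraces}(\sndlangcol{M_1})} \subseteq \interleaving{\fstlangcol{\contraces}(\fstlangcol{M_2})}{\sndlangcol{\contraces}(\sndlangcol{M_2})}$, which follows once I establish that $\interleave$ is monotone in each argument with respect to $\subseteq$.

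Next, I would prove this monotonicity directly from \autoref{def:interleaving}. Using the hypotheses $\fstlangcol{\contraces}(\fstlangcol{M_1}) \subseteq \fstlangcol{\contraces}(\fstlangcol{M_2})$ and $\sndlangcol{\contraces}(\sndlangcol{M_1}) \subseteq \sndlangcol{\contraces}(\sndlangcol{M_2})$, take any trace in the left interleaving; by definition it is witnessed by a pair $(\fstlangcol{\Trace}, \sndlangcol{\Trace})$ with $\fstlangcol{\Trace} \in \fstlangcol{\contraces}(\fstlangcol{M_1})$ and $\sndlangcol{\Trace} \in \sndlangcol{\contraces}(\sndlangcol{M_1})$ satisfying the permutation, order-preservation, and synchronization conditions. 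The inclusions give $\fstlangcol{\Trace} \in \fstlangcol{\contraces}(\fstlangcol{M_2})$ and $\sndlangcol{\Trace} \in \sndlangcol{\contraces}(\sndlangcol{M_2})$, and since the three conditions of \autoref{def:interleaving} constrain only the chosen witnesses and not the ambient sets, the very same interleaving qualifies as a member of the right-hand interleaving. This yields the inclusion and hence the corollary.

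The argument is almost entirely bookkeeping, so the one point that genuinely needs care is the \emph{witness-locality} of membership in $\interleaving{A}{B}$: that belonging to the interleaving is determined by a single trace drawn from $A$ and a single trace drawn from $B$, rather than by any global property of $A$ and $B$. The synchronization clause of \autoref{def:interleaving} is the place to check this, since it couples a trace from each component; but it only fixes positions \emph{within} the two witness traces where a shared event must coincide, so it transfers verbatim when those same witnesses are reinterpreted as members of the larger sets $\fstlangcol{\contraces}(\fstlangcol{M_2})$ and $\sndlangcol{\contraces}(\sndlangcol{M_2})$. Once this is made explicit, monotonicity is immediate. (An alternative would be to reprove the statement by induction on the length of the composed concrete trace, replaying the synchronous/asynchronous case split of \autoref{thm:conccompthm}, but the reduction to monotonicity is shorter and reuses the theorem directly.)
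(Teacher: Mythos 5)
Your proposal is correct and takes essentially the same route as the paper: the paper presents this statement as a corollary of \autoref{thm:conccompthm}, i.e., both composed trace sets are rewritten as partially synchronized interleavings and the inclusion then reduces to monotonicity of $\interleaving{}{}$ in each argument, a step the paper discharges only in its HOL4 mechanization rather than on paper. Your explicit check of witness-locality of membership in \autoref{def:interleaving} is precisely the point on which that reduction rests, so nothing is missing.
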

Similar results were previously established for CSP-style asynchronous parallel composition of concrete systems (see, e.g.,~\cite{silva2012shared}), but we have formalized and proven them on top of HOL4. 
Complete mechanized proofs are available at~\href{https://github.com/FMSecure/CryptoBAP/tree/main/HolBA/src/tools/parallelcomposition/concrete/interleavingconcreteScript.sml#L234}{\itshape\underline{Concrete-Composition}}  and \href{https://github.com/FMSecure/CryptoBAP/tree/main/HolBA/src/tools/parallelcomposition/concrete/interleavingconcreteScript.sml#L246}{\itshape\underline{Concrete-Trace-Inclusion}}.
\subsection{Relationship to Computational Soundness}
\label{rel-comp-sound}
Computational soundness says that any computational trace 
(i.e., a trace produced by the protocol implementation and some probabilistic polynomial-time (PPT) Turing machine)
is either improbable or an instance of a symbolic trace with a DY
attacker.
Cortier and Warinschi found that computational soundness can be
obtained from two conditions: deduction soundness and the commutation property~\cite{cortierComposableComputationalSoundness2011}.
The appeal of this approach is that deduction soundness is somewhat
composable~\cite{cortierComposableComputationalSoundness2011,bohlDeductionSoundnessProve2013a}, while computational soundness is not (as far as we know),
although  deduction soundness without the commutation property provides only guarantees against 
passive attackers.

Assumption~2 in~\autoref{sec:end-to-end} seems to be conceptually close to deduction soundness.\footnote{Traditionally,
	computational soundness and related notions hardcode the
	complexity-theoretic execution model, so we have to argue the
	equivalence in spirit. Deduction soundness says that the
	computational attacker
	is unlikely to produce a bitstring that can be parsed to a DY term
	that is undeducible based on the terms received so far.
	Indeed, any such bitstring would result from a computational trace
	that could not be described as a refinement of some (symbolic) trace from 
	$\SParallel[{\dedrelAvsL}]{L^{DY}}{A^{DY}}$. Vice versa, any concrete trace from 
	$\CParallel{L^{\birsymb}}{A}$ that is not an instance of a symbolic trace
	must either be due to an incorrect library implementation or due to
	$A$, in which case it constitutes an `undeducible' bitstring. 
	A formal argument would require a probabilistic notion of refinement,
	but constitutes an interesting pursuit.
}
This indicates that the commutation property may be an artifact of the
translation approach, and not in fact necessary to achieve the aims of
computational soundness (thus opening up the possibility of
composition results like for deduction soundness).
Roughly speaking, the commutation property states that no
(concrete) PPT Turing machine can distinguish between the
concrete (computational) protocol and a translation function around the
DY interpretation of the protocol. The step using
\cite[Thm.~4.1]{lindner2023proofproducingsymbolicexecutionbinary} fulfills
the same purpose, but there is no translation inside the system;
instead, the instantiation is a meta-mathematical relation between the
traces of the concrete system and the symbolic system.
The difference becomes tangible when considering the proof effort.
In a proof like \cite[Thm.~4.1]{lindner2023proofproducingsymbolicexecutionbinary}, the
researcher is given a concrete trace and can provide a mapping on the
spot, as long as they can justify the symbolic trace the mapping
applies to. For the commutation property, the researcher has to provide
a PPT algorithm that not
only translate
every single concrete trace, but is also reversible.
We, therefore, think that this assumption merits deeper exploration.

\subsection{Multi-Party Proof Structure}
\label{sec:multi-party}
In this section, we elucidate our proof structure for the composition of multiple protocol participants, cryptographic libraries, and an unspecified attacker $ A $.
Consider the ARMv8 programs corresponding to the WireGuard initiator ($ I^{ARM} $) and responder ($ R^{ARM} $), along with their employed cryptographic libraries ($ {L_{i}}^{ARM} $ and $ {L_{r}}^{ARM} $ respectively).
\begin{align}
    & \contraces (\CParallel{\CParallel{(\CParallel{I^{ARM}}{{L_{i}}^{ARM}})}{(\CParallel{R^{ARM}}{{L_{r}}^{ARM}})}}{A})\label{eq:first}
\\	& = \contraces (\CParallel{\CParallel{(\CParallel{I^{\birsymb}}{{L_{i}}^{\birsymb}})}{(\CParallel{R^{\birsymb}}{{L_{r}}^{\birsymb}})}}{A}) 
	\intertext{By employing \cite{DBLP:journals/scp/LindnerGM19}'s lifter, we obtain corresponding $\birsymb$ programs and demonstrate their composition with $ A $ using~\autoref{thm:con-comprefin}.
Building upon the soundness of the symbolic execution engine~\cite[Thm.~4.1]{lindner2023proofproducingsymbolicexecutionbinary} and relying on an assumption about the attacker, as discussed in~\autoref{sec:end-to-end}, we move from the concrete to the symbolic using the refinement theorem  \autoref{thm:traceinc}.}
	& \sqsubseteq \symtraces (\SParallel[\dedrelAvsL]{\SParallel[\dedrelcombSbirDY]{\SParallel{I^{\sbirsymb}}{R^{\sbirsymb}}}{\underbrace{\SParallel[]{{L_{i}}^{DY}}{{L_{r}}^{DY}}}_{\stackrel{\autoref{lem:samesig}}{=}}}}{A^{DY}})
    \\  & = \symtraces (\SParallel[\dedrelAvsL]{\SParallel[\dedrelcombSbirDY]{\SParallel{I^{\sbirsymb}}{R^{\sbirsymb}}}{ \quad \quad L^{DY} \quad \quad}}{A^{DY}})  \label{eq:to-dy}
	\intertext{As $\SParallel[]{}{}$ is associative w.r.t.\ trace
equivalence, we can employ~\autoref{lem:samesig} to demonstrate the composition of $ {L_{i}}^{DY} $ and $ {L_{r}}^{DY} $ libraries---whether with identical or distinct function signatures---is equivalent to a single DY library ($ L^{DY} $) encompassing all these function signatures.
Subsequently, we apply our translation result from $ \sbirsymb $ to \SapicOld (\autoref{thm:traceinc:sbir-sapic}), by leveraging~\autoref{thm:symb-comprefin} presented in~\autoref{sec:correctness}.}
 	& \sqsubseteq \symtraces (\SParallel[\dedrelcombSapicDY]{\SParallel[\dedrelcombSbirDY]{I^{\sbirsymb}}{R^{\SapicOld}}}{\SParallel[\dedrelAvsL]{L^{DY}}{A^{DY}}})
        \label{eq:multi}
        \intertext{We perform symbolic execution and extract the \SapicOld model for each component individually.}
	& \sqsubseteq \symtraces (\SParallel[\dedrelcombSapicDY]{\SParallel{I^{\SapicOld}}{R^{\SapicOld}}}{\SParallel[\dedrelAvsL]{L^{DY}}{A^{DY}}})
        \label{eq:before-merge}
     \\ & =  \symtraces (\SParallel[\dedrelcombSapicDY]{IR^{\SapicOld}}{\SParallel[\dedrelAvsL]{L^{DY}}{A^{DY}}})
	\quad\text{with $IR = \sndlangcol{I \mid R}$}
        \label{eq:merge}
	\intertext{
        As the DY attacker and library are included within the semantics of \Sapic,
        we conclude that:}
	& =  \symtraces (IR^{\Sapic})
\end{align}
We have proved this end-to-end correctness result in HOL4, which you can see \href{https://github.com/FMSecure/CryptoBAP/tree/main/HolBA/src/tools/parallelcomposition/instantiations/end_to_end_correctnessScript.sml#L299}{here}.
\paragraph{\bf Extending to arbitrarily many parties}
This argument can be repeatedly applied to cover
an arbitrary but bounded number of protocol implementations.
Depending on the language, the individual components may support
open-ended loops, hence this bound is on the number of components,
e.g., parties, not sessions.
Let $\mathit{RIR} = \sndlangcol{! I \mid !R}$.

\newcommand{\ntimes}[1]{\underbrace{#1}_{\text{$n$ times}}}
\newcommand{\nmtimes}[1]{\underbrace{#1}_{\text{$n-1$ times}}}

\begin{align*}
	& \ \contraces (\CParallel{
            \CParallel{\underbrace{(\CParallel{I^{ARM}}{{L_{i}}^{ARM}})}_{\text{$n$ times}}
                  }{
                      \underbrace{(\CParallel{R^{ARM}}{{L_{r}}^{ARM}})}_{\text{$n$ times}}
          }}{A})
          \intertext{We inductively apply transformations as in the earlier steps (\ref{eq:first}) 
          and (\ref{eq:before-merge}).  (Note each step is applied
          $n$ times, then the next.) }
 	& \sqsubseteq \symtraces (
        \SParallel[\dedrelcombSapicDY]{
               \SParallel{\ntimes{I^{\SapicOld}}}{\ntimes{R^{\SapicOld}}}
       }{\SParallel[\dedrelAvsL]{L^{DY}}{A^{DY}}}
        )
        \intertext{Following step (\ref{eq:merge}), we can draw
            the first initiator component and the first responder
            component together ($\sndlangcol{I \mid R}$)
        over approximate. }
 	& \sqsubseteq \symtraces (
        \SParallel[\dedrelcombSapicDY]{
           \SParallel[\dedrelcombSapicDY]{
               \SParallel{\nmtimes{I^{\SapicOld}}}{\nmtimes{R^{\SapicOld}}}
       }{\SParallel[\dedrelAvsL]{L^{DY}}{A^{DY}}}}
               {RIR^{\SapicOld}}
        )
        \intertext{
            We can repeat this another $n-1$ times, as
            $\sndlangcol{I \mid R \mid !I \mid !R}$
            is equivalent to  
            $\sndlangcol{RIR}$ in \SapicOld and \Sapic.
        }
& =  \symtraces (\SParallel[\dedrelcombSapicDY]{RIR^{\SapicOld}}{\SParallel[\dedrelAvsL]{L^{DY}}{A^{DY}}})
\\
& =  \symtraces (RIR^{\Sapic})
\end{align*}

\end{full}
\end{document}